\newcommand{\kibitz}[2]{\ifnum\Comments=1\textcolor{#1}{#2}\fi}
\renewcommand{\em}{\it}
\newtheoremstyle{normal}
{2ex}               
{3ex}               
{}                  
{}                  
{\bfseries} 
{}                  
{2pt}   
{\thmname{#1}\thmnumber{ #2.} \thmnote{(#3)}}
\newtheoremstyle{italic}
{2ex}
{3ex}
{\itshape}
{}
{\bfseries} 
{}
{2pt}
{\thmname{#1}\thmnumber{ #2.} \thmnote{(#3)}}
\theoremstyle{normal}
\newtheorem{definition}{Definition}[section]
\newtheorem{remark}[definition]{Remark}
\theoremstyle{italic}
\newtheorem{theorem}[definition]{Theorem}
\newtheorem{lemma}[definition]{Lemma}
\newtheorem{prop}[definition]{Proposition}
\newcommand\N{\mathbb{N}}
\newcommand\R{\mathbb{R}}
\newcommand\eps{\varepsilon}
\newcommand{\vect}{\bm }  
\newcommand\Prob{\mathbb{P}}    
\newcommand\Exp{\mathbb{E}}     
\newcommand\ind{\mathds{1}}     
\newcommand\Hc{\mathcal{H}}
\newcommand\Sc{\mathcal{S}}
\newcommand\Ab{\mathbb{A}}
\newcommand\Bb{\mathbb{B}}
\newcommand\Cb{\mathbb{C}}
\newcommand\Db{\mathbb{D}}
\newcommand\Lb{\mathbb{L}}
\newcommand\weak{\rightsquigarrow}
\newcommand{\ip}[1]{\lfloor #1 \rfloor}
\newcommand{\p}{\overset{\Prob}{\to}}
\begin{document}

\title{Detecting breaks in the dependence of multivariate extreme-value distributions}

\author{Axel B\"ucher\,\footnote{Ruhr-Universit\"at Bochum,
Fakult\"at f\"ur Mathematik, Universit\"atsstr. 150, 44780 Bochum, Germany.  {Tel:} +49/234\ 32\ 23286, 
{E-mail:} \texttt{axel.buecher@rub.de}. Corresponding Author.}
,~
Paul Kinsvater\,\footnote{Technische Universit\"at Dortmund, Fakult\"at Statistik, Vogelpothsweg 87,
44221 Dortmund, Germany. {E-mail:} \texttt{kinsvater@statistik.tu-dortmund.de}
} 
~and 
Ivan Kojadinovic\,\footnote{Universit\'e de Pau et des Pays de l'Adour, 
Laboratoire de math\'ematiques et applications, 
UMR CNRS 5142, B.P. 1155, 64013 Pau Cedex, France.
{E-mail:} \texttt{ivan.kojadinovic@univ-pau.fr}
}
}

\maketitle

\begin{abstract}
In environmental sciences, it is often of interest to assess whether the dependence between extreme measurements has changed during the observation period. The aim of this work is to propose a statistical test that is particularly sensitive to such changes. The resulting procedure is also extended to allow the detection of changes in the extreme-value dependence under the presence of known breaks in the marginal distributions. Simulations are carried out to study the finite-sample behavior of both versions of the proposed test. Illustrations on hydrological data sets conclude the work.
\end{abstract}

\noindent \textit{Keywords and Phrases:}  copula; hydrological applications; multivariate block maxima; Pickands dependence function; resampling; sequential empirical processes.





\section{Introduction} \label{sec:intro}
\def\theequation{1.\arabic{equation}}
\setcounter{equation}{0}

The study of extremes is of importance in many environmental applications. Prominent examples are the analysis of floods (\citealp{HosWal05}), heavy rainfalls (\citealp{CooNycNav07}) and extreme temperatures (\citealp{KatBro92}). Many of these problems are intrinsically multivariate; for instance, the severity of a flood depends not only on its peak flow, which is considered in many univariate flood studies, but also on its volume and its duration (\citealp{YueQuaBobLegBru99}). Catastrophic flood events typically occur when more than one of these variables is taking a high value and therefore, the analysis of the joint behavior is of key importance. In a river system, where flood data are collected from a number of stations, inference at a specific location can be greatly improved by incorporating observations from neighboring stations (\citealp{HosWal05}). Similarly, extreme temperatures are commonly studied at several stations simultaneously. 

In most of these environmental applications, it is common practice to assess the extreme observations by (modifications of) the \textit{annual block maxima method}, popularized in the classical monograph by \cite{Gum58}. Univariate observations collected on, say, a daily basis are aggregated by taking maxima over a longer time period, usually a year for stationarity considerations, resulting in a sample of maxima capturing most of the extreme outcomes. Thanks to the extremal types theorem, univariate block maxima are approximately distributed according to the (parametric) generalized extreme distribution, while the dependence within a vector of block maxima can be described approximately by a (nonparametric) extreme-value copula (\citealp{DehFer06}). The resulting joint distributions are called multivariate extreme-value distributions and serve as a widely accepted model for multivariate block maxima.

In statistical applications, it is common practice to assume that the time series of block maxima is stationary, temporarily independent and that its stationary distribution is exactly of the extreme-value type. Except for stationarity, the above assumptions are theoretically justified by the results of \cite{BucSeg14} if the focus is solely on the dependence structure. In the current paper, we address the issue of stationarity by developing tests for change-point detection within the multivariate contemporary distribution. 

More precisely, assuming that we observe a sample of independent multivariate observations $\vect X_1,\dots,\vect X_n$, where each $\vect X_i$ follows a multivariate extreme-value distribution whose c.d.f.\ is denoted $H^{(i)}$, we develop a test for the hypothesis
\begin{align}\label{eq:H0}
\Hc_0:\ H^{(1)}=\ldots=H^{(n)}
\end{align}
against alternatives involving the non-constancy of the c.d.f.s.  Since the univariate version of this problem has been treated, for instance, in \cite{JarRen08} using results from Chapter~1 of \citet{CsoHor97}, we will be particularly interested in the multivariate setting throughout this paper.  

Outside of the extreme-value framework, there is a huge amount of literature on detecting deviations from $\Hc_0$, see the classical monograph of \cite{CsoHor97} or \cite{AueHor13} for a recent review. 
It is useful to note that, by Sklar's theorem \citep{Skl59}, we can rewrite $\Hc_0$ as $\Hc_{0,m}\cap \Hc_{0,c}$, where
\begin{align}
 \Hc_{0,m}:&\ H^{(1)},\ldots,H^{(n)}\ \text{have time-homogeneous marginal c.d.f.s}, \label{eq:H0m} \\
 \Hc_{0,c}:&\ H^{(1)},\ldots,H^{(n)}\ \text{have the same copula (i.e., dependence).} \label{eq:H0c}
\end{align}
Roughly speaking, tests for $\Hc_0$ can be divided into two groups: tests that are powerful against deviations from $\Hc_{0,m}$ but which are rather insensitive to deviations from $\Hc_{0,c}$, and vice versa, see \cite{BucKojRohSeg14} for a discussion. In the present setting of observing data from a multivariate extreme-value distribution, the tests considered for instance in \cite{JarRen08} can be used to detect deviations from $\Hc_{0,m}$, whence it will be our main concern to design a test that is particularly sensitive to deviations from $\Hc_{0,c}$ when $C$ is known to be an extreme-value copula. Note that none of the existing tests for changes in the copula (see, e.g., \citealp{BucKojRohSeg14,KojQueRoh15}, among others) incorporates the latter information, whence an improvement in the power properties seems possible. In fact, our simulation study partially reported in Section~\ref{sec:sim} suggests that our proposed testing procedure is indeed superior to existing methods.

The test tailored to deal with extreme-value dependence that we propose is however affected by the same limitations as the aforementioned more general testing procedures: it can be used to reject $\Hc_{0,c}$ only if $\Hc_{0,m}$ holds. In some situations, although there are reasons to consider that $\Hc_{0,m}$ is not true, it is still of interest to assess whether $\Hc_{0,c}$ holds or not. For instance, in the hydrological applications to be presented in Section~\ref{sec:illustration}, the construction of dams during the observation period suggests that there might be potential breaks in the marginal distributions of extreme peak flows or volumes, while it is still of interest to assess whether the dependence between the variables of interest has changed or not. A second contribution of this work is thus to propose an extension of the studied testing procedure that can be used to detect deviations from $\Hc_{0,c}$ under certain types of simple departures from $\Hc_{0,m}$. 

The remaining part of the paper is organized as follows: The second section is devoted to mathematical preliminaries about extreme-value copulas. In Section~\ref{sec:testdim2}, a first version of the testing procedure is presented along with theoretical results establishing its asymptotic validity under $\Hc_0$. An extension of the studied test that can be used to detect change in the extreme-value dependence under known marginal breaks is proposed in Section~\ref{sec:testdim2break} along with generalizations of the theoretical results of Section~\ref{sec:testdim2}. Section~\ref{sec:sim} partially reports the results of extensive Monte Carlo experiments. Two illustrations on hydrological data sets are finally presented in Section~\ref{sec:illustration}. All proofs are deferred to a sequence of appendices.

\section{Notation and mathematical preliminaries } \label{sec:not}
\def\theequation{2.\arabic{equation}}
\setcounter{equation}{0}

In this section, we set the notation and gather some mathematical preliminaries needed throughout the paper.
Let $\vect{X}=\left(X_1,\ldots,X_d\right)$ be a $d$-dimensional random vector with continuous marginal c.d.f.s $F_1,\ldots,F_d$. By Sklar's theorem \citep{Skl59}, the joint c.d.f.\ $H$ of $\vect{X}$ can be uniquely decomposed as
\begin{align}\label{eq:sklar}
H(\vect{x})=C\left\{ F_1(x_1),\ldots,F_d(x_d)\right \} , \quad \vect{x}=(x_1,\ldots,x_d)\in\mathbb{R}^d,
\end{align}
where the so-called copula $C:[0,1]^d\rightarrow[0,1]$ is the joint c.d.f.\ of $\vect{U}=(U_1,\ldots,U_d)$ with $U_j=F_j(X_j)$.  

Throughout the paper, we will assume that the copula $C$ in~\eqref{eq:sklar} is an extreme-value copula. As a consequence of the results in \cite{Pic81}, these copulas can be parametrized by a function $A:\Sc_{d-1}\rightarrow[1/d,1]$ on the $(d-1)$-dimensional unit simplex $\Sc_{d-1}=\{\vect{t}=(t_2,\ldots,t_d)\in[0,1]^{d-1}:\ t_2+\ldots+t_d\leq1\}$, usually referred to as the \textit{Pickands dependence function}. More precisely, we have
\begin{align}\label{eq:evc}
C(\vect{u})=\exp\left\{\left(\sum_{j=1}^d\log u_j\right) A\left(\frac{\log u_2}{\sum_{j=1}^d\log u_j},\ldots,\frac{\log u_d}{\sum_{j=1}^d\log u_j}\right)\right\}
\end{align}
for any $\vect u \in(0,1]^d\setminus\{(1,\dots,1)\}$. If relation~\eqref{eq:evc} is met, then $A$ is necessarily convex and satisfies the boundary condition $\max\{1- \sum_{j=2}^d t_j, t_2, \dots, t_d\} \le A(\vect t) \le 1$ for all $\mathbf t = (t_2,\dots,t_d) \in \Sc_{d-1}$. The latter two conditions are, however, not sufficient to characterize the class of Pickands dependence functions unless $d=2$, see, e.g., \cite{BeiGoeSegTeu04} for a counterexample and \cite{Res13} for recent results concerning the case $d > 2$.

Assuming to observe a sample $\vect X_i$, $i=1,\dots,n$, of serially independent random vectors such that $\vect X_i$ has copula $C^{(i)}$ and corresponding Pickands dependence function $A^{(i)}$, we can use the representation in~\eqref{eq:evc} to rewrite $\Hc_{0,c}$ in~\eqref{eq:H0c} equivalently as  
\begin{equation}
\label{eq:H0A}
\Hc_{0,c} : \ A^{(1)} = \dots = A^{(n)} = A.
\end{equation}
The test statistics in the subsequent sections will be particularly designed for detecting deviations from this hypothesis.

In the rest of the paper, for notational convenience, we will work mostly in dimension $d=2$. Furthermore, given a set $T$, $(\ell^\infty(T), \| \cdot \|_\infty)$ will denote the space of real-valued, bounded functions on $T$ equipped with the supremum norm $\| \cdot\|_\infty$. The arrow $\weak$ denotes weak convergence in the sense of Hoffmann-J\o rgensen, see \cite{VanWel96}.


\section{Test statistics for $d=2$ under stationarity of the marginals} \label{sec:testdim2}
\def\theequation{3.\arabic{equation}}
\setcounter{equation}{0}

We begin by restricting ourselves to the case of dimension $d=2$ and assume to observe an independent sample $(X_i,Y_i)$, $i=1,\dots,n$, such that $(X_i,Y_i)$ has unknown c.d.f.\ $H^{(i)}$, copula~$C^{(i)}$ and continuous margins $F^{(i)}$ and $G^{(i)}$, respectively, where each $C^{(i)}$ is assumed to be an extreme-value copula with Pickands dependence function $A^{(i)}:[0,1] \to [1/2,1]$.

We aim at developing tests for change-point detection that are consistent against deviations form $\Hc_{0,c}$ in~\eqref{eq:H0A} provided $\Hc_{0,m}$ in~\eqref{eq:H0m} holds. 

Our test statistic will be of the CUSUM-type and will be based on the rank-based estimator of $A$ proposed by \cite{Fer13}. The underlying idea of that estimator is as follows: if $(U,V)$ is distributed according to some extreme-value copula $C$ with Pickands dependence function $A$, then
\[
A(t) = \frac{\Exp\left\{ \max\left (U^{1/(1-t)}, V^{1/t} \right) \right\} } {1- \Exp\left\{ \max\left (U^{1/(1-t)}, V^{1/t} \right) \right\} } , \quad t \in [0,1],
\]
with the convention that $u^{1/0} = 0$ for any $u\in (0,1)$. 
The function 
\begin{equation}
\label{eq:St}
S(t)  = \Exp\left \{ \max\left (U^{1/(1-t)}, V^{1/t} \right) \right\}, \quad t \in [0,1],
\end{equation}
is simply an expected value, whence defining an estimator for $S$ under the null hypothesis $\Hc_0 = \Hc_{0,m} \cap \Hc_{0,c}$ in~\eqref{eq:H0} is straightforward once we have estimated the unobservable sample $(U_i,V_i) = (F(X_i),G(Y_i))$, $i=1,\dots,n$. To do so, it is common to compute the scaled ranks
\begin{equation}
\label{eq:pseudo1n}
\hat{U}_{1:n,i}=\frac{1}{n+1}\sum_{j=1}^n \ind \left(X_j\leq X_i\right),\quad 
\hat{V}_{1:n,i}=\frac{1}{n+1}\sum_{j=1}^n \ind \left(Y_j\leq Y_i\right), \quad i=1,\dots,n,
\end{equation}
frequently referred to as {\em pseudo-observations} from the unknown copula~$C$. Then, a natural estimator of $S$ is simply
\[
\hat S_{1:n}(t) = \frac{1}{n} \sum_{i=1}^n \max \left( \hat U_{1:n,i}^{1/(1-t)}, \hat V_{1:n,i}^{1/t} \right), \quad t \in [0,1].
\]
The corresponding estimator of $A$, namely 
\[
\hat A_{1:n}(t) = \hat S_{1:n}(t) / \{1 - \hat S_{1:n}(t)\}, \quad t \in [0,1],
\]
was shown to be consistent and was investigated empirically in \cite{Fer13}. As a by-product of our work, we establish the asymptotic distribution of the process $\sqrt n ( \hat A_{1:n} - A)$ in the following proposition proved in Appendix~\ref{sec:proofs}. To the best of our knowledge, this result is new and might be of independent interest, for instance for the construction of uniform confidence bands. 

\begin{prop}\label{prop:ferreira}
Suppose that $(X_i,Y_i)$, $i=1,\dots,n$, are i.i.d.\ from a bivariate distribution with continuous marginal c.d.f.s and extreme-value copula $C$ whose Pickands dependence function $A$ is continuously differentiable on $(0,1)$. Then, in the normed space $(\ell^\infty([0,1]), \| \cdot \|_\infty)$, 
$
\sqrt n ( \hat A_{1:n} - A) \weak \Lb_C,
$ 
where
\[
\Lb_C(t) = - \{1+A(t) \}^2  \int_0^1 \Cb_C(0,1,y^{1-t},y^t) \, dy
\]
and $\Cb_C(0,1,\cdot,\cdot)$ is the weak limit of the empirical copula process \citep[see, e.g.,][]{Seg12} as defined in Proposition~\ref{prop:weakdim2} below.
\end{prop}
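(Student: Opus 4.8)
The plan is to express $\hat S_{1:n}$ as a functional of the empirical copula and then transfer the known weak convergence of the empirical copula process through that functional by the functional delta method, finally composing with the smooth map $s \mapsto s/(1-s)$ to get $\hat A_{1:n}$.

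First I would rewrite the estimator in a way that isolates the dependence on the pseudo-observations. Observe that $\hat S_{1:n}(t) = n^{-1}\sum_{i=1}^n \max(\hat U_{1:n,i}^{1/(1-t)}, \hat V_{1:n,i}^{1/t})$ and that, using $\max(a,b) = \int_0^1 \ind(a > y \text{ or } b > y)\,dy$ together with a change of variables, one can write $\max(\hat U_{1:n,i}^{1/(1-t)}, \hat V_{1:n,i}^{1/t})$ as an integral over $y \in (0,1)$ of $\ind(\hat U_{1:n,i} > y^{1-t} \text{ or } \hat V_{1:n,i} > y^{t})$. Averaging over $i$ turns this into $\int_0^1 \{1 - \mathbb{C}_n(y^{1-t}, y^{t})\}\,dy$, where $\mathbb{C}_n$ denotes the empirical copula (or its pseudo-observation-based version) evaluated at the pair $(y^{1-t},y^t)$. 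Hence $\hat S_{1:n}(t) = \Phi(\mathbb{C}_n)(t)$ for the map $\Phi: D \mapsto \big( t \mapsto \int_0^1 \{1 - D(y^{1-t}, y^{t})\}\,dy \big)$, and likewise $S(t) = \Phi(C)(t)$. This is the key algebraic step; it converts a statistic about ranks into a nice linear functional of the empirical copula.

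Next I would invoke the known weak convergence $\sqrt n (\mathbb{C}_n - C) \weak \mathbb{C}_C$ in $\ell^\infty([0,1]^2)$ (Proposition~\ref{prop:weakdim2}), check that $\Phi$ is (Hadamard, hence trivially) linear and continuous from an appropriate subspace of $\ell^\infty([0,1]^2)$ into $\ell^\infty([0,1])$ — continuity of $\Phi$ is immediate from $\|\Phi(D_1) - \Phi(D_2)\|_\infty \le \|D_1 - D_2\|_\infty$ since the inner integral is over a probability measure — and conclude by the continuous mapping theorem that $\sqrt n (\hat S_{1:n} - S) \weak -\int_0^1 \mathbb{C}_C(y^{1-t}, y^t)\,dy =: \mathbb{S}_C(t)$ in $\ell^\infty([0,1])$. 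A minor point to verify here is that the relevant sample paths of $\mathbb{C}_n$ and $C$ lie in a set on which $\Phi$ is well-defined and the integral makes sense, including the boundary behavior as $t \downarrow 0$ or $t \uparrow 1$ where one of the exponents degenerates; since $0 \le \mathbb{C}_n \le 1$ this is harmless, but the joint measurability and the behavior of the process at the endpoints of $[0,1]$ should be addressed, e.g. by noting $\mathbb{C}_C$ has (a.s.) continuous paths that vanish on the appropriate boundary.

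Finally, I would apply the functional delta method to the map $\psi: s \mapsto s/(1-s)$, which is continuously Fréchet-differentiable on any compact subinterval of $(0,1)$ with derivative $h \mapsto h/(1-s)^2$. Since $S(t) = \Exp\{\max(U^{1/(1-t)},V^{1/t})\} \in (1/2, 2/3)$ stays bounded away from $0$ and $1$ uniformly in $t \in [0,1]$ (using $1/3 \le A \le 1$, which gives $S = A/(1+A) \in [1/4, 1/2]$, in particular bounded away from $1$), the composition $\hat A_{1:n} = \psi \circ \hat S_{1:n}$ is handled by a chain rule for Hadamard differentiability, yielding $\sqrt n(\hat A_{1:n} - A) \weak (1-S)^{-2}\,\mathbb{S}_C$. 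Substituting $1 - S(t) = 1/(1+A(t))$ gives the claimed limit $\Lb_C(t) = \{1+A(t)\}^2 \mathbb{S}_C(t) = -\{1+A(t)\}^2 \int_0^1 \mathbb{C}_C(y^{1-t},y^t)\,dy$. The differentiability hypothesis on $A$ does not seem to be needed for this argument and is presumably imposed so that the limit process has continuous sample paths; I would double-check where, if anywhere, it is genuinely used — most likely in justifying the path regularity of $\mathbb{C}_C \circ (y^{1-t}, y^t)$ as a process indexed by $t$, or in the companion results of the later sections rather than in this proposition itself. The main obstacle is thus not any single hard estimate but rather the careful bookkeeping at the degenerate endpoints $t \in \{0,1\}$ and the verification that all maps involved are Hadamard-differentiable tangentially to the set where the empirical-copula limit concentrates.
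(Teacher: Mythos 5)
Your proposal follows essentially the same route as the paper: the paper also rewrites $\hat S_{1:n}(t)=1-\int_0^1 C_{1:n}(y^{1-t},y^t)\,dy$ and $S(t)=1-\int_0^1 C(y^{1-t},y^t)\,dy$, pushes the empirical copula process through this linear integral functional, and then passes from $\hat S$ to $\hat A$. The only cosmetic difference is that the paper avoids the delta method for $\psi(s)=s/(1-s)$ by using the exact identity $\tfrac{a}{1-a}-\tfrac{b}{1-b}=\tfrac{a-b}{(1-a)(1-b)}$ together with a deterministic lower bound $\int_0^1 C_{k:\ell}(y^{1-t},y^t)\,dy\ge 1/8$ on the denominator (this uniform bound is really needed for the sequential generalization in Theorem~\ref{theo:twosided}, where subsamples can be short; for the full-sample statement your delta-method argument is fine).

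One point you flag and then guess wrong: the continuous differentiability of $A$ \emph{is} genuinely used, and not merely for path regularity of the limit. It is exactly what guarantees, via Example~5.3 of \cite{Seg12}, that the partial derivatives $\dot C_1,\dot C_2$ exist and are continuous on $(0,1)\times[0,1]$ and $[0,1]\times(0,1)$ respectively; this is the standard sufficient condition without which the weak convergence $\sqrt n(C_{1:n}-C)\weak \Cb_C(0,1,\cdot,\cdot)$ in $\ell^\infty([0,1]^2)$ --- the very first convergence you invoke --- is not available. So the hypothesis enters at the start of your argument, not at the end. Minor slips that do not affect the conclusion: for $d=2$ one has $A\in[1/2,1]$, hence $S=A/(1+A)\in[1/3,1/2]$ (your two stated ranges for $S$ are mutually inconsistent); what matters is only that $S$, and eventually $\hat S_{1:n}$, stay bounded away from $1$.
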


In order to define a CUSUM-type procedure for testing $\Hc_0$ in~\eqref{eq:H0}, we consider the following subsample analogues of $\hat S_{1:n}$ based on subsequences $(X_k, Y_k) ,\dots, (X_\ell, Y_\ell)$, $1\le k \le \ell \le n$, namely
\begin{equation}
\label{eq:hatSkl}
\hat{S}_{k:\ell}(t)
=
\frac{1}{\ell-k+1}\sum_{i=k}^\ell \max\left(\hat{U}_{k:l,i}^{1/(1-t)},\hat{V}_{k:l,i}^{1/t}\right), \quad t \in [0,1],
\end{equation}
where
\begin{equation}
\label{eq:pseudoobs}
\hat{U}_{k:\ell,i}=\frac{1}{\ell-k+2}\sum_{j=k}^\ell \ind \left(X_j\leq X_i\right),\quad 
\hat{V}_{k:\ell,i}=\frac{1}{\ell-k+2}\sum_{j=k}^\ell\ind \left(Y_j\leq Y_i\right),
\end{equation}
with the convention that $\hat{S}_{k:\ell} = 0$ for $k > \ell$. The corresponding subsample estimators of $A$ are then simply 
\begin{equation}
\label{eq:hatAkl}
\hat A_{k:\ell} (t) = \hat S_{k:\ell}(t) / \{1 - \hat S_{k:\ell}(t)\}, \quad t \in [0,1].
\end{equation}

Under $\Hc_0$, the difference between $\hat A_{1:k}$ and $\hat A_{k+1:n}$ should be small for any $k=1, \dots, n-1$, which suggests to base test statistics for $\Hc_0$ on the process
\begin{align} \label{eq:Dn}
\Db_n(s,t)
=
\frac{\ip{ns}(n-\ip{ns})}{n^{3/2}}\left\{\hat{A}_{1:\ip{ns}}(t)-\hat{A}_{\ip{ns}+1:n}(t)\right\},
\end{align}
for $(s,t) \in [0,1]^2$. Typical test statistics would be given by Kolmogorov--Smirnov or $L^2$-type functionals of $\Db_n$. Throughout this paper, we focus on the hybrid version
\begin{align}\label{eq:SnA}
S_{n,A} = \max_{1\leq k<n}\int_{[0,1]}\left\{ \Db_n(k/n,t)\right \}^2\, d\mu(t),
\end{align}
where $\mu$ denotes some finite measure on $[0,1]$. In the finite-sample experiments of Section~\ref{sec:sim}, we use $\mu = T^{-1}\sum_{t\in\Gamma}\delta_{t}$ for some finite grid $\Gamma=\{t_1,\ldots,t_T\}\subset[0,1]$, where $\delta_t$ is the Dirac mass at $t$. A related two-sample statistic for detecting breaks at some pre-specified time point $k^\star$ is simply given by
\begin{align} \label{eq:SnAk}
S_{n,A}(k^\star) 
= 
\int_{[0,1]}\left\{ \Db_n(k^\star/n,t)\right \}^2\, d\mu(t).
\end{align}
Note that the aforementioned test statistics do not incorporate the information that the marginal distributions are (or should be close to) generalized extreme-value (GEV) distributions. There are several reasons for ignoring that information: First of all, since we are only interested in the dependence, it seems natural to ignore any marginal information. This is theoretically justified by results in \cite{GenSeg10}, where it is shown that rank-based estimation of copulas may be substantially more efficient than estimation based on even exact knowledge of the marginal distributions.
Second, even if we opt for a semiparametric estimation of the dependence structure based on a parametric estimation of the marginal distributions, we could expect the estimation of the copula to be quite negatively affected by the fact that estimators of GEV parameters have frequently a large variance. Last but not least, extreme-value copulas are also of interest outside of the genuine extreme-value framework, where margins are not necessarily of the GEV-type.

The following proposition, proved in Appendix~\ref{sec:proofs}, establishes weak convergence of the key process $\Db_n$ in~\eqref{eq:Dn} under $\Hc_0$ in~\eqref{eq:H0}. It is essential for deriving the weak limit of the preceding test statistics.

\begin{prop} \label{prop:weakdim2}
Suppose that $\Hc_0$ holds and that $A$ is continuously differentiable on $(0,1)$. Then, in the normed space $(\ell^\infty([0,1]^2), \| \cdot \|_\infty)$, $\Db_n \weak \Db_C$, where
\begin{equation}
\label{eq:DbC}
\Db_C(s,t)=\{1+A(t)\}^2 \int_0^1s \Cb_C(s,1,y^{1-t},y^t) - (1-s) \Cb_C(0,s,y^{1-t},y^t) \, dy.
\end{equation}
Here, $\Cb_C$ denotes a centered Gaussian process on $\Delta \times [0,1]^2$, with $\Delta = \{(s,s') \in [0,1]^2: s\le s'\}$,  defined through
\begin{multline*}
\Cb_C(s,s',u,v)= \{ \Bb_C(s',u,v) - \Bb_C(s,u,v)  \}  - \dot C_1(u,v) \{ \Bb_C(s',u,1) - \Bb_C(s,u,1)  \}   \\
- \dot C_2(u,v) \{ \Bb_C(s',1,v) - \Bb_C(s,1,v)  \} 
\end{multline*}
if $(u,v) \in (0,1)^2$ and $\Cb_C(s,s',u,v)=0$ else, 
where $\Bb_C$ denotes a tight, centered Gaussian process on $(\ell^\infty([0,1]^3), \|\cdot\|_\infty)$ with covariance kernel
$$
\Exp\{ \Bb_C(s,u,v)\Bb_C(s',u',v') \} = \min(s,s') [ C\{ \min(u,u'), \min(v,v') \} - C(u,v)C(u',v') ],
$$
and where $\dot C_j$, $j=1,2$, denotes the $j$th first-order partial derivative of $C$. 
\end{prop}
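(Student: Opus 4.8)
The plan is to express $\Db_n$ as a smooth transformation of the empirical copulas of the first $\ip{ns}$ and the last $n-\ip{ns}$ observations, to feed in the weak convergence of the associated sequential empirical copula process, and to conclude by the functional delta method; the statement is the sequential two-sample refinement of Proposition~\ref{prop:ferreira}. The first step is to record a linear representation of $\hat S_{k:\ell}$: starting from the layer-cake identity $\max(a,b)=\int_0^1\{1-\ind(a\le y)\ind(b\le y)\}\,dy$ for $a,b\in[0,1]$ and the equivalences $u^{1/(1-t)}\le y\Leftrightarrow u\le y^{1-t}$, $v^{1/t}\le y\Leftrightarrow v\le y^{t}$ (with $u^{1/0}=0$), one obtains the exact identities
\[
\hat S_{k:\ell}(t)=\int_0^1\bigl\{1-\hat C_{k:\ell}(y^{1-t},y^{t})\bigr\}\,dy,\qquad S(t)=\int_0^1\bigl\{1-C(y^{1-t},y^{t})\bigr\}\,dy,
\]
where $\hat C_{k:\ell}$ is the empirical copula of $(X_k,Y_k),\dots,(X_\ell,Y_\ell)$, so that $\hat S_{k:\ell}-S=-\Lambda(\hat C_{k:\ell}-C)$ for the bounded linear operator $(\Lambda g)(t)=\int_0^1 g(y^{1-t},y^{t})\,dy$ on $\ell^\infty$. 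I will also use that $S(t)=A(t)/\{1+A(t)\}\in[1/3,1/2]$ (since $A(t)\in[1/2,1]$) and that $\phi(x)=x/(1-x)$, for which $\hat A_{k:\ell}=\phi(\hat S_{k:\ell})$ and $A=\phi(S)$, is smooth on a neighbourhood of $[1/3,1/2]$ in $[0,1)$ with $\phi'(S(t))=\{1-S(t)\}^{-2}=\{1+A(t)\}^2$.

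Next I would bring in the sequential empirical copula process. Under $\Hc_0$ the pairs $(U_i,V_i)=(F(X_i),G(Y_i))$ are i.i.d.\ with c.d.f.\ $C$; writing $\Bb_n(s,u,v)=n^{-1/2}\sum_{i=1}^{\ip{ns}}\{\ind(U_i\le u,V_i\le v)-C(u,v)\}$, one has $\Bb_n\weak\Bb_C$ in $\ell^\infty([0,1]^3)$ with $\Bb_C$ as in the statement (\citealp{VanWel96,BucKojRohSeg14}). Since $A$ is continuously differentiable on $(0,1)$, the extreme-value copula $C$ has continuous partial derivatives $\dot C_1,\dot C_2$ on $\{0<u<1\}$ and $\{0<v<1\}$, so the linearisation of the empirical copula in terms of $\Bb_n$ (\citealp{Seg12}), in its sequential form uniform in the block endpoints (\citealp{BucKojRohSeg14}), gives, jointly in $\ell^\infty([0,1]\times[0,1]^2)$,
\[
\tfrac{\ip{ns}}{\sqrt n}\bigl(\hat C_{1:\ip{ns}}-C\bigr)\weak\Cb_C(0,s,\cdot,\cdot),\qquad \tfrac{n-\ip{ns}}{\sqrt n}\bigl(\hat C_{\ip{ns}+1:n}-C\bigr)\weak\Cb_C(s,1,\cdot,\cdot),
\]
the first block producing the terms $\Bb_C(s,\cdot,\cdot)$ and the last block the increments $\Bb_C(1,\cdot,\cdot)-\Bb_C(s,\cdot,\cdot)$ in the definition of $\Cb_C$. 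Applying $\Lambda$ and the identity of the first step then delivers the joint weak limits $-\int_0^1\Cb_C(0,s,y^{1-t},y^{t})\,dy$ and $-\int_0^1\Cb_C(s,1,y^{1-t},y^{t})\,dy$ of $\tfrac{\ip{ns}}{\sqrt n}(\hat S_{1:\ip{ns}}-S)$ and $\tfrac{n-\ip{ns}}{\sqrt n}(\hat S_{\ip{ns}+1:n}-S)$ in $\ell^\infty([0,1]^2)$.

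Then, fixing $\delta\in(0,1/2)$ and restricting to $s\in[\delta,1-\delta]$, both sub-samples have size $\ge\delta n\to\infty$, so $\hat S_{1:\ip{ns}}$ and $\hat S_{\ip{ns}+1:n}$ lie, with probability tending to one, in the region where $\phi$ is $C^1$; a first-order expansion of $\phi$ turns the previous display into
\[
\tfrac{\ip{ns}}{\sqrt n}\bigl(\hat A_{1:\ip{ns}}(t)-A(t)\bigr)\weak-\{1+A(t)\}^2\!\int_0^1\!\Cb_C(0,s,y^{1-t},y^{t})\,dy
\]
and the analogue for the last block with $\Cb_C(s,1,\cdot,\cdot)$, jointly in $\ell^\infty([\delta,1-\delta]\times[0,1])$ (here $\Hc_0$ is used, so that both blocks are centred at the same $A$). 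Finally, writing
\[
\Db_n(s,t)=\tfrac{n-\ip{ns}}{n}\cdot\tfrac{\ip{ns}}{\sqrt n}\bigl(\hat A_{1:\ip{ns}}(t)-A(t)\bigr)-\tfrac{\ip{ns}}{n}\cdot\tfrac{n-\ip{ns}}{\sqrt n}\bigl(\hat A_{\ip{ns}+1:n}(t)-A(t)\bigr),
\]
and using $\ip{ns}/n\to s$ uniformly together with the continuous mapping theorem (multiplication by these deterministic weights and subtraction), I obtain $\Db_n\weak\Db_C$ in $\ell^\infty([\delta,1-\delta]\times[0,1])$ with $\Db_C$ as in \eqref{eq:DbC}.

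The hard part will be removing the restriction to $[\delta,1-\delta]$: near $s=0$ and $s=1$ the sub-samples are short and the pseudo-observations rest on few data points, so $\hat S$ is not close to $S$ and no Taylor expansion is available there. I expect to handle this with two ingredients: first, an elementary (if somewhat careful) estimate showing that $\hat S_{k:\ell}(t)$ is bounded away from $1$ uniformly over all sub-blocks and all $t\in[0,1]$, hence $\hat A_{k:\ell}$ is uniformly bounded; second, a maximal inequality for $\Bb_n$ yielding $\sup_{s\le\delta}\|\tfrac{\ip{ns}}{\sqrt n}(\hat C_{1:\ip{ns}}-C)\|_\infty=O_P(\sqrt\delta)$, and symmetrically near $s=1$. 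Combining the two through $\hat A-A=(\hat S-S)/\{(1-\hat S)(1-S)\}$ should give $\lim_{\delta\downarrow0}\limsup_n\Prob\bigl(\sup_{s\notin[\delta,1-\delta]}\sup_t|\Db_n(s,t)|>\eta\bigr)=0$ for every $\eta>0$; since $\Db_C$ has continuous paths and vanishing variance as $s\to0,1$, the same holds for it, and a standard gluing argument upgrades the convergence of the previous paragraph to $\ell^\infty([0,1]^2)$. A secondary technical point, needed for the blockwise linearisation, is to verify that bivariate extreme-value copulas with continuously differentiable Pickands function genuinely meet the required smoothness of $\dot C_1,\dot C_2$, including near the boundary of the unit square.
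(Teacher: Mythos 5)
Your proposal is correct and follows essentially the same route as the paper: the layer-cake identity $\hat S_{k:\ell}(t)=1-\int_0^1 C_{k:\ell}(y^{1-t},y^t)\,dy$, the weak convergence of the two-sided sequential empirical copula process of B\"ucher and Kojadinovic, the linearisation of $x\mapsto x/(1-x)$ with derivative $\{1+A(t)\}^2$, and—for the short blocks near $s\in\{0,1\}$—exactly the two ingredients the paper uses, namely a deterministic counting bound keeping $\int_0^1 C_{k:\ell}(y^{1-t},y^t)\,dy$ (hence $1-\hat S_{k:\ell}$) uniformly bounded below over all subsamples, together with asymptotic equicontinuity of the sequential process at the diagonal. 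The only difference is organisational: the paper proves a single two-sided statement for $\Ab_n(s,s',t)$ on $\Delta\times[0,1]$ (from which both Proposition~\ref{prop:ferreira} and Proposition~\ref{prop:weakdim2} follow), whereas you linearise on $[\delta,1-\delta]$ and glue, which is equivalent.
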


The limit process $\Db_C$ in~\eqref{eq:DbC} depends in an intractable way on the unknown copula~$C$, and, as a consequence, so do the limit distributions of the test statistics $S_{n,A}$ in~\eqref{eq:SnA} and $S_{n,A}(k^\star)$ in~\eqref{eq:SnAk}. The following \textit{multiplier bootstrap approximations}, initially proposed by \cite{Sca05} and \cite{RemSca09} in a copula setting, allow for the derivation of suitable approximations of the critical values. 

Let $B$ be some large integer and let $\xi_i^{(b)}$, $i=1,\dots,n$, $b=1,\dots,B$, denote i.i.d.\ standard normal random variables. Motivated by recent results in \cite{BucKoj14} and \cite{BucKojRohSeg14} on the multiplier bootstrap in a sequential setting, we can approximate the Gaussian process $\Bb_C$ appearing in the limit $\Db_C$ defined in~\eqref{eq:DbC} by 
\begin{multline}
\label{eq:checkBnb}
\check \Bb_n^{(b)}(s, s',u,v)
=
\frac{1}{\sqrt{n}}\sum_{i=\ip{ns}+1}^{\ip{ns'}}\xi_i^{(b)} \Big\{\ind \big( \hat{U}_{\ip{ns}+1:\ip{ns'},i} \le u, \hat{V}_{\ip{ns}+1:\ip{ns'},i} \le v \big) - C_{\ip{ns}+1:\ip{ns'}} (u,v)\Big\},
\end{multline}
for $(s,s',u,v) \in \Delta \times [0,1]^2$, where, for $1 \le k \le \ell \le n$, $C_{k:\ell}$ denotes the {\em empirical copula} based on the sample $(X_k, Y_k), \dots, (X_\ell, Y_\ell)$, that is,
\begin{align} \label{eq:empcop}
C_{k:\ell}(u,v) = \frac{1}{\ell-k+1} \sum_{i=k}^\ell \ind \big( \hat U_{k:\ell,i} \le u, \hat V_{k:\ell,i} \le v \big), \quad (u,v) \in [0,1]^2,
\end{align}
with the convention that $C_{k:\ell} = 0$ if $k > \ell$. If we replace $\Bb_C$ by 
$\check \Bb_n^{(b)}$ and all other unknown quantities in the definition of $\Db_C$ by natural estimators, then some standard calculations, carried out explicitly in the proof of 
the next proposition, suggest to define bootstrap replicates of $\Db_n$ as 
\begin{multline}
\label{eq:checkDnb}
\check  \Db_n^{(b)}(s,t)
=
\{ 1+ \hat A_{1:n}(t) \}^2  
\times \bigg\{
\frac{\ip{ns}}{n^{3/2}}\sum_{i=\ip{ns}+1}^n\xi_i^{(b)}\hat w_{\ip{ns}+1:n,i}(t)  -\frac{n-\ip{ns}}{n^{3/2}}\sum_{i=1}^{\ip{ns}}\xi_i^{(b)} \hat w_{1:\ip{ns},i}(t)
\bigg\},
\end{multline}
for $(s,t) \in [0,1]^2$, where,  for any $1 \le k \le \ell \le n$, 
\begin{multline}
\label{eq:wkl}
\hat w_{k:\ell,i}(t)
=
\widebar{m}_{k:\ell}(t)- \hat m_{k:\ell,i}(t) 
+
\{ \hat u_{k:\ell,i}(t)-\widebar{u}_{k:\ell} (t) \} \frac{ \hat a_{k:\ell}(t)}{ \hat b_{k:\ell}(t) }
+ \{ \hat v_{k:\ell,i}(t)-\widebar{v}_{k:\ell}(t) \} \frac{ \hat c_{k:\ell}(t) }{ \hat d_{k:\ell}(t)},
\end{multline}
with $\widebar m_{k:\ell}, \widebar u_{k:\ell}$ and $\widebar v_{k:\ell}$ denoting the arithmetic mean (over $i=k, \dots, \ell$) of
\begin{align*}
&\hat m_{k:\ell,i}(t)=\max(\hat{U}_{k:\ell,i}^{1/(1-t)},\hat{V}_{k:\ell,i}^{1/t}),  \qquad
&\hat u_{k:\ell,i}(t)=\hat{U}_{k:\ell,i}^{ \hat b_{k:\ell}(t) /(1-t)}, \qquad 
\hat v_{k:\ell,i}(t)=\hat{V}_{k:\ell,i}^{  \hat d_{k:\ell}(t) /t} ,
\end{align*}
respectively, and with
\begin{align*}
\hat a_{k:\ell}(t)&= \hat A_{k:\ell}(t)-t\hat {A}'_{k:\ell,n}(t), & \hat b_{k:\ell}(t)&= \hat A_{k:\ell}(t)+t, \\
\hat c_{k:\ell}(t)&= \hat A_{k:\ell}(t)+(1-t)\hat {A}'_{k:\ell,n}(t), & \hat d_{k:\ell}(t)&= \hat A_{k:\ell}(t)+1-t,
\end{align*}
where, for some positive sequence $h_n \downarrow 0$ such that $\inf_{n \geq 1} h_n \sqrt n >0$,
\begin{equation}
\label{eq:hatAkl'}
\hat {A}'_{k:\ell,n}(t) = \min[ \max \{ A'_{k:\ell,n}(t), -1 \}, 1], \quad t \in [0,1],
\end{equation}
with
\begin{equation}
\label{eq:Akl'}
A'_{k:\ell,n}(t)= \frac{1}{2h_n}\left \{ \hat A_{k:\ell}(t+h_n)- \hat A_{k:\ell}(t-h_n)\right \} ,
\end{equation}
for $t\in(h_n, 1-h_n)$, while $A'_{k:\ell,n}(t)=A'_{k:\ell,n} (h_n)$ for $t \le h_n$ and $A'_{k:\ell,n}(t)= A'_{k:\ell,n}(1-h_n)$ for $t \ge 1-h_n$.

The following proposition, proved in Appendix~\ref{sec:proofsboot}, establishes the asymptotic validity of the above resampling scheme under $\Hc_0$ in~\eqref{eq:H0}.

\begin{prop}\label{prop:bootdim2}
Under the conditions of Proposition~\ref{prop:weakdim2},
\[
\left( \Db_n, \check \Db_n^{(1)}, \dots, \check \Db_n^{(B)} \right)
\weak
\left(\Db_C,\Db_C^{(1)}, \dots, \Db_C^{(B)} \right)
\]
in $(\ell^\infty([0,1]^2), \| \cdot \|_\infty)^{B+1}$, where $\Db_C^{(1)}, \dots, \Db_C^{(B)}$ denote independent copies of~$\Db_C$.
\end{prop}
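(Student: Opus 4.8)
The plan is to prove the joint weak convergence of the statistic and its bootstrap replicates by (i) establishing the joint convergence of the underlying sequential empirical copula process together with its multiplier counterparts, and (ii) transferring this via a functional delta-method type argument to the level of $\Db_n$ and $\check\Db_n^{(b)}$. \textbf{Step 1: the sequential multiplier CLT.} First I would recall, as in \cite{BucKoj14} and \cite{BucKojRohSeg14}, the joint weak convergence of the two-sided sequential empirical process and its $B$ multiplier versions. Let
\[
\Bb_n(s,u,v) = \frac{1}{\sqrt n}\sum_{i=1}^{\ip{ns}}\Big\{\ind(U_i\le u, V_i\le v) - C(u,v)\Big\}
\]
built on the true (unobservable) observations $(U_i,V_i)$, and define the multiplier analogues $\hat\Bb_n^{(b)}$ in the same spirit as \eqref{eq:checkBnb} but centered at $C$ and using the true observations. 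The key input is that under $\Hc_0$,
\[
\big(\Bb_n, \hat\Bb_n^{(1)}, \dots, \hat\Bb_n^{(B)}\big) \weak \big(\Bb_C, \Bb_C^{(1)}, \dots, \Bb_C^{(B)}\big)
\]
in $(\ell^\infty([0,1]^3))^{B+1}$, with $\Bb_C^{(b)}$ i.i.d.\ copies of $\Bb_C$ and independent of $\Bb_C$; this is the unconditional form of the sequential multiplier bootstrap and is already available in the cited references. The transition from $\hat\Bb_n^{(b)}$ (true observations, known $C$) to $\check\Bb_n^{(b)}$ (pseudo-observations, empirical copula) is controlled exactly as in the proof of Proposition~\ref{prop:weakdim2}: the difference between $\check\Bb_n^{(b)}$ and a Hadamard-type linearization of $\hat\Bb_n^{(b)}$ through $\dot C_1,\dot C_2$ is asymptotically negligible, uniformly, which yields the joint convergence with $\Cb_C$ replacing $\Bb_C$ after the $\dot C_j$-correction. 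This last part is the technically delicate step and the main obstacle, since it requires uniform (in $s$) control of the pseudo-observation effect; it follows the line of argument already used for the non-multiplier part but must be verified for the multiplier processes, relying on the boundedness of the multipliers and a conditional-multiplier-CLT argument together with the uniform consistency of the empirical copula derivatives' surrogates.

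\textbf{Step 2: from the empirical copula level to $\Db_n$ and $\check\Db_n^{(b)}$.} Having the joint convergence of the relevant (corrected) empirical copula processes, I would express both $\Db_n$ and $\check\Db_n^{(b)}$ as (asymptotically) the same continuous functional applied to these processes, up to $o_\Prob(1)$ terms. For $\Db_n$ this is precisely the content of Proposition~\ref{prop:weakdim2}, whose proof exhibits the map
\[
\Db_n(s,t) = \{1+A(t)\}^2\int_0^1 \Big\{ s\,\Cb_{n}(s,1,y^{1-t},y^t) - (1-s)\,\Cb_{n}(0,s,y^{1-t},y^t)\Big\}\,dy + o_\Prob(1),
\]
uniformly in $(s,t)$, where $\Cb_n$ is the two-sided sequential empirical copula process built from pseudo-observations. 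For $\check\Db_n^{(b)}$, I would carry out the ``standard calculations'' alluded to after \eqref{eq:checkDnb}: rewrite the weights $\hat w_{k:\ell,i}(t)$ in \eqref{eq:wkl}, after summation against the multipliers $\xi_i^{(b)}$, as a linear functional of $\check\Bb_n^{(b)}$ evaluated at the points $(y^{1-t},y^t)$ and their margins, integrated over $y$; the estimators $\hat A_{k:\ell}$, $\hat A'_{k:\ell,n}$, $\hat b_{k:\ell}$, $\hat d_{k:\ell}$ appearing there consistently estimate $A$, $A'$, $A+t$, $A+1-t$ respectively (uniformly in $t$, using $h_n\to 0$ with $h_n\sqrt n\to\infty$ for the derivative, and the consistency from Proposition~\ref{prop:ferreira} applied to the relevant subsamples $1:\ip{ns}$ and $\ip{ns}+1:n$). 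Substituting these limits and simplifying shows that $\check\Db_n^{(b)}(s,t)$ equals the same functional applied to $\check\Bb_n^{(b)}$ (with its $\dot C_j$-corrections absorbed, yielding $\Cb_C$-type limits) as the one producing $\Db_C$ from $\Cb_C$, again up to $o_\Prob(1)$ uniformly.

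\textbf{Step 3: conclusion by continuous mapping.} The functional
\[
(\mathbf f_0,\mathbf f_1,\dots,\mathbf f_B) \longmapsto \Big( (s,t)\mapsto \{1+A(t)\}^2\!\int_0^1\! s\,\mathbf f_j(s,1,y^{1-t},y^t) - (1-s)\,\mathbf f_j(0,s,y^{1-t},y^t)\,dy \Big)_{j=0}^{B}
\]
is linear and continuous from $(\ell^\infty(\Delta\times[0,1]^2))^{B+1}$ to $(\ell^\infty([0,1]^2))^{B+1}$ (dominated convergence plus the boundedness and uniform continuity of $A$ on $[0,1]$). Applying the continuous mapping theorem to the joint convergence of $(\Cb_n,\check\Cb_n^{(1)},\dots,\check\Cb_n^{(B)})$ established in Step~1, together with the asymptotic representations of Step~2 and Slutsky's lemma to discard the $o_\Prob(1)$ remainders, yields
\[
\big(\Db_n,\check\Db_n^{(1)},\dots,\check\Db_n^{(B)}\big)\weak\big(\Db_C,\Db_C^{(1)},\dots,\Db_C^{(B)}\big)
\]
in $(\ell^\infty([0,1]^2))^{B+1}$, with $\Db_C^{(b)}$ i.i.d.\ copies of $\Db_C$ and independent of $\Db_C$ because $\Bb_C^{(b)}$ have that property and the functional acts coordinate-wise. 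I expect Step~1 (the uniform negligibility of the pseudo-observation correction for the multiplier processes) to be the crux; the remaining steps are bookkeeping already prepared by Propositions~\ref{prop:ferreira} and~\ref{prop:weakdim2}.
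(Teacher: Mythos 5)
Your plan matches the paper's proof in all essentials: the paper likewise passes through the multiplier CLT for the two-sided sequential empirical (copula) process of B\"ucher and Kojadinovic (their Lemma~\ref{lem:multbarAn}), controls the pseudo-observation effect on the multiplier processes uniformly (their Lemma~\ref{lem:aetildeAnb}, via the bound from B\"ucher et al.), verifies by direct computation that the weighted sums defining $\check\Db_n^{(b)}$ coincide exactly with the plug-in linearization through estimated $\dot C_1,\dot C_2$ (their Lemma~\ref{lem:exprcheckAbnb}), establishes uniform consistency of $\hat A_{k:\ell}$ and $\hat A'_{k:\ell,n}$ away from the diagonal $s=s'$ while using equicontinuity near it, and concludes by continuous mapping. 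The only cosmetic discrepancy is your bandwidth condition $h_n\sqrt n\to\infty$, which is stronger than the paper's $\inf_n h_n\sqrt n>0$; the argument goes through under the weaker assumption.
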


Bootstrap analogues of the test statistics in~\eqref{eq:SnA} and~\eqref{eq:SnAk} can be defined by replacing $\Db_n$ by $\check \Db_n^{\scriptscriptstyle (b)}$ in the corresponding definitions. Through the continuous mapping theorem, we immediately obtain that, under $\Hc_0$, the random vector
$ 
( S_{n,A}, \check S_{n,A}^{\scriptscriptstyle (1)}, \dots, \check S_{n,A}^{\scriptscriptstyle (B)}  ) \in \mathbb{R}^{B+1}
$
weakly converges to a vector with i.i.d.\ components, each component having the same distribution as $\sup_{s \in [0,1]} \int_{[0,1]}  \Db_C^2(s,t)\, d\mu(t)$. Hence, a test rejecting $\Hc_0$ at the significance level $\alpha$ if $S_{n,A}$ exceeds the $\ip{(1-\alpha)B}$-th order statistic of $\check S_{n,A}^{\scriptscriptstyle (1)} , \dots, \check S_{n,A}^{\scriptscriptstyle (B)}$ asymptotically keeps its level for $n \to \infty$ followed by $B \to \infty$ (see Appendix~F in \citealp{BucKoj14}). 

At the cost of a more complex notation, the results presented in this section can be extended to the case $d > 2$. The main steps are given in Appendix~\ref{sec:extensiond>2}.

\section{Test statistics for $d=2$ under known marginal breaks} \label{sec:testdim2break}
\def\theequation{4.\arabic{equation}}
\setcounter{equation}{0}

It is only if  $\Hc_{0,m}$ in~\eqref{eq:H0m} holds that the tests developed in the previous section can be used to reject $\Hc_{0,c}$ in~\eqref{eq:H0A}. Empirical evidence of the latter fact will be given in Section~\ref{sec:sim}. In some applications, such as those presented in Section~\ref{sec:illustration}, there might be reasons to believe that~\eqref{eq:H0m} does not hold. The aim of this section is to propose an adaptation of the tests derived in the previous section that will be consistent against $\Hc_{0,c}$ in~\eqref{eq:H0A} when there are known abrupt changes in the margins. 

Recall that we assume to observe an independent sequence $(X_i, Y_i)$, $i=1,\dots,n$, where $(X_i, Y_i)$ has copula $C^{(i)}$ and continuous marginal c.d.f.s $F^{(i)}$ and $G^{(i)}$, respectively. Before considering a more general framework, we shall first focus, for pedagogical reasons, on the following simple hypothesis for the margins:
\begin{equation}
\label{eq:H1m}
\Hc_{1,m}: \left\{
\begin{array}{l}
\mbox{there exists a known } \theta \in (0,1) \mbox{ such that} \\
F^{(1)} = \dots = F^{(\ip{n\theta})} \neq F^{(\ip{n\theta}+1)} = \dots = F^{(n)},  \\ 
G^{(1)} = \dots = G^{(\ip{n\theta})} \neq G^{(\ip{n\theta}+1)} = \dots = G^{(n)},
\end{array}
\right.
\end{equation}
that is, both margins change abruptly at the same time point $\ip{n\theta}$. As we continue, we shall also use the notation $m = \ip{n\theta}$. In the illustrative applications presented in Section~\ref{sec:illustration}, this time point corresponds to the construction of a dam on a river. 

In order to adapt the testing procedures derived in the previous section to the above setting, we need to propose an appropriate version of the test statistics $S_{n,A}$ in~\eqref{eq:SnA} and $S_{n,A}(k^\star)$ in~\eqref{eq:SnAk} under $\Hc_{1,m} \cap \Hc_{0,c}$, where $\Hc_{0,c}$ is defined in~\eqref{eq:H0A}. Mimicking the developments carried out in the previous section, it is natural to start from adequate versions of the estimators in~\eqref{eq:hatSkl}. Hence, assume that $\Hc_{1,m} \cap \Hc_{0,c}$ holds, and let $(X_i, Y_i)$, $i=k, \dots, \ell$, be some subsample. If $k < m < \ell$, a natural estimator for $S$ in~\eqref{eq:St} is given by a convex combination of the estimators defined in~\eqref{eq:hatSkl}, one based on the subsample from $k$ to $m$, the other on the subsample from $m+1$ to~$\ell$. In other words, meaningful analogues of the estimators in~\eqref{eq:hatSkl} are, for any $t \in [0,1]$,
\begin{align} \label{eq:Sklbreak} 
\hat S_{k:\ell}^\theta(t) 
= 
\begin{cases}
\frac{m-k+1}{\ell-k+1} \hat S_{k:m}(t) + \frac{\ell-m}{\ell-k+1} \hat S_{m+1:\ell}(t), &\mbox{if } k < m < \ell, \\
\hat S_{k:\ell}(t), & \mbox{otherwise}.
\end{cases}
\end{align}
Proceeding as in the previous section, the corresponding subsample estimators of $A$ are then simply
\begin{equation}\label{eq:Aklbreak} 
\hat A_{k:\ell}^\theta(t) 
=
\frac{
\hat S_{k:\ell}^\theta(t) 
}{
1 - \hat S_{k:\ell}^\theta(t) 
}
, \quad t \in [0,1].
\end{equation}
It is easy to verify that the formulas in~\eqref{eq:Sklbreak} and~\eqref{eq:Aklbreak} coincide with those in~\eqref{eq:hatSkl} and~\eqref{eq:hatAkl}, respectively, provided that the pseudo-observations in~\eqref{eq:pseudoobs} are replaced by appropriates ones taking into account the break in the margins at time point $m$, namely, for $i = k, \dots, \ell$, \begin{equation}
\label{eq:pseudotheta}
\hat U_{\theta,k:\ell,i} = 
\begin{cases}
\hat U_{k:\ell, i} & \text{ if } m \notin \{k, \dots \ell\} ,\\
\hat U_{k:m,i} & \text{ if } m \in \{ k, \dots, \ell\} \text{ and } i \le m, \\
\hat U_{m+1:\ell,i} & \text{ if } m \in \{ k, \dots, \ell\} \text{ and } i > m, 
\end{cases}
\end{equation}
and similarly for $\hat V_{\theta,k:\ell,i}$. Thus, from a practical perspective, once the above adapted pseudo-observations are computed, the computer code for the simpler setting considered in the previous section can be fully reused.

It follows that natural generalizations of the process $\Db_n$ and the statistics $S_{n,A}$ and $S_{n,A}(k^\star)$, defined in~\eqref{eq:Dn},~\eqref{eq:SnA} and~\eqref{eq:SnAk}, respectively, are simply
\[
\nonumber
\Db_n^\theta(s,t) = \frac{\ip{ns} (n - \ip{ns})}{n^{3/2}} \left\{\hat A_{1:\ip{ns}}^\theta(t) - \hat A_{\ip{ns}+1:n}^\theta(t) \right\}, \quad (s,t) \in [0,1]^2
\]
and
\begin{equation}
\label{eq:SnAtheta}
S_{n,A}^\theta = \max_{1\leq k<n}\int_{[0,1]}\left\{ \Db^\theta_n(k/n,t)\right \}^2\, d\mu(t), \qquad
S_{n,A}^\theta(k^\star) = \int_{[0,1]}\left\{ \Db^\theta_n(k^\star/n,t)\right \}^2\, d\mu(t),
\end{equation}
respectively. 

The following proposition, proved in Appendix~\ref{sec:proofs}, establishes the limit distribution of $\Db_n^\theta$ under $\Hc_{0,c} \cap \Hc_{1,m}$, where $\Hc_{0,c}$ and $\Hc_{1,m}$ are defined in~\eqref{eq:H0A} and~\eqref{eq:H1m}, respectively.

\begin{prop}\label{prop:weakdim2break}
Assume that $\Hc_{0,c} \cap \Hc_{1,m}$ holds and that the Pickands dependence function $A$ associated with $C$ is continuously differentiable on $(0,1)$. Then, in the normed space $(\ell^\infty([0,1]^2), \| \cdot \|_\infty)$, $\Db_n^\theta \weak \Db_C$, where $\Db_C$ is defined in~\eqref{eq:DbC}.
\end{prop}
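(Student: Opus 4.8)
The plan is to reduce Proposition~\ref{prop:weakdim2break} to the already established Proposition~\ref{prop:weakdim2} by showing that, under $\Hc_{0,c} \cap \Hc_{1,m}$, the modified process $\Db_n^\theta$ is asymptotically equivalent to a process of exactly the same structure as $\Db_n$, with the same Gaussian limit. The key observation is that, after the probability integral transform, the ``$\theta$-adjusted'' pseudo-observations in~\eqref{eq:pseudotheta} are nothing but the ordinary scaled ranks computed \emph{within} the appropriate homogeneous marginal regime. Since $\Hc_{0,c}$ holds, all observations share the same extreme-value copula $C$; the only effect of the marginal break at $m = \ip{n\theta}$ is that the two blocks $1,\dots,m$ and $m+1,\dots,n$ have different (continuous) margins. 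Working conditionally on each block, the unobservable copula sample $(U_i,V_i)$, $i=1,\dots,n$, is i.i.d.\ from $C$, and the $\theta$-adjusted ranks consistently recover it regardless of the marginal heterogeneity.

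The concrete steps I would carry out are as follows. First, I would introduce the block-wise sequential empirical copula process: for the left block, ranks are taken among $\{1,\dots,m\}$, and for the right block among $\{m+1,\dots,n\}$. I would establish a joint weak-convergence statement for the two-parameter process indexed by subsample endpoints falling in either regime, analogous to the process $\Bb_C$ (and hence $\Cb_C$) in Proposition~\ref{prop:weakdim2}; since $\ip{n\theta}/n \to \theta \in (0,1)$, each block has order-$n$ observations, so within-block rank-based estimation enjoys the same $\sqrt n$-rate and the same limiting covariance structure, and the two blocks are independent. Second, I would verify the algebraic identity already announced in the text: $\hat S_{k:\ell}^\theta$ as defined by the convex combination in~\eqref{eq:Sklbreak} equals the plain estimator~\eqref{eq:hatSkl} with pseudo-observations replaced by~\eqref{eq:pseudotheta}. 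This makes $\hat A_{1:\ip{ns}}^\theta$ and $\hat A_{\ip{ns}+1:n}^\theta$ drop-in replacements for their un-starred analogues, now driven by the block-wise empirical copula processes. Third, I would re-run the linearization/delta-method argument from the proof of Proposition~\ref{prop:weakdim2}: expand $\hat S_{k:\ell}^\theta(t) - S(t)$ around the true $S$, use the Hadamard differentiability of the map from copula to $S$ (and then to $A = S/(1-S)$), and plug in the block-wise process. The crucial point is that when one assembles $\Db_n^\theta(s,t)$, the left term involves only ranks from $1,\dots,\ip{ns}$ (entirely within or across the break, but always consistently) and the right term only ranks from $\ip{ns}+1,\dots,n$; summing the independent block contributions reproduces exactly the covariance kernel of $\Bb_C$ over $[0,1]$, so the pieced-together limit is again $\Cb_C$ and hence $\Db_n^\theta \weak \Db_C$.

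The main obstacle I anticipate is handling the subsamples whose index range \emph{straddles} the break point $m$, i.e.\ $k \le m < \ell$, which is precisely where the convex-combination definition~\eqref{eq:Sklbreak} kicks in and where the two independent block-processes must be combined with the correct (data-dependent but asymptotically deterministic) weights $\tfrac{m-k+1}{\ell-k+1}$ and $\tfrac{\ell-m}{\ell-k+1}$. One must show that these weights converge uniformly (in the rescaled endpoints) to their limits $\tfrac{\theta \wedge s - \dots}{\dots}$ and that the remainder terms in the linearization are uniformly negligible across \emph{all} split points $k = \ip{ns}$, including $s$ near $\theta$ where the left block transitions from ``pure regime~1'' to ``regime~1 plus regime~2.'' This requires an equicontinuity/uniform-negligibility argument for the block-wise empirical copula processes near the boundary $s = \theta$, together with a careful check that the differentiability constants (depending on $A$ and its derivative) are the same in both regimes — which they are, since $\Hc_{0,c}$ forces a common $A$. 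Once the straddling case is controlled, the non-straddling subsamples are covered verbatim by the arguments of Proposition~\ref{prop:weakdim2}, and the conclusion $\Db_n^\theta \weak \Db_C$ follows by the continuous mapping theorem applied to the assembled limit.
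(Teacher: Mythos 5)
Your proposal follows essentially the same route as the paper: the paper also exploits that, within each homogeneous marginal block, the rank-based quantities are equal in distribution to those under $\Hc_0$ (its Lemma~\ref{lem:aeCbnH1}), verifies the convex-combination/pseudo-observation identity, linearizes $\hat A^\theta_{k:\ell}$ in terms of the two-sided sequential empirical copula process split at $\theta$ (its displays~\eqref{eq:ae1}--\eqref{eq:ae2}, which is exactly your ``straddling'' case), and assembles the limit via additivity of the linearized process over $[0,s\wedge\theta]$, $[s\wedge\theta,s]$, etc., recovering $\Db_C$. The argument and the anticipated difficulties you identify match the paper's proof in both structure and substance.
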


Hence, we see that the limit distribution of the process $\Db_n^\theta$ under $\Hc_{0,c} \cap \Hc_{1,m}$ does not depend on the marginal break point $\theta$ and coincides with that of the process $\Db_n$ in~\eqref{eq:Dn} whose asymptotics were studied in Proposition~\ref{prop:weakdim2} under $\Hc_{0,c} \cap \Hc_{0,m}$, where $\Hc_{0,m}$ is defined in~\eqref{eq:H0m}.

To carry out the tests based on $S_{n,A}^\theta$ and $S_{n,A}^\theta(k^\star)$, we need bootstrap replicates of the pro\-cess~$\Db_n^{\scriptscriptstyle \theta}$. The latter are denoted by $\check \Db_n^{\scriptscriptstyle \theta,(b)}$, $b=1,\dots,B$, and are defined exactly as in~\eqref{eq:checkDnb}, except for the underlying pseudo-observations which are computed as in~\eqref{eq:pseudotheta}. The next result, proved in Appendix~\ref{sec:proofsboot}, establishes the asymptotic validity of the proposed resampling scheme under $\Hc_{0,c} \cap \Hc_{1,m}$.

\begin{prop}\label{prop:bootdim2break}
Under the conditions of Proposition~\ref{prop:weakdim2break}, in $(\ell^\infty([0,1]^2), \| \cdot \|_\infty)^{B+1}$,
\[
\left( \Db_n^\theta, \check \Db_n^{\theta,(1)}, \dots, \check \Db_n^{\theta,(B)} \right)
\weak
\left(\Db_C,\Db_C^{(1)}, \dots, \Db_C^{(B)} \right),
\]
where $\Db_C$ is defined in~\eqref{eq:DbC} and $\Db_C^{(1)}, \dots, \Db_C^{(B)}$ denote independent copies of $\Db_C$.
\end{prop}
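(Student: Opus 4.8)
The plan is to derive Proposition~\ref{prop:bootdim2break} by reduction to the already-established Proposition~\ref{prop:bootdim2}, exploiting the fact that under $\Hc_{1,m}$ the sample splits into two homogeneous blocks on which the previous theory applies. First I would observe that, thanks to the identity mentioned after~\eqref{eq:pseudotheta} (namely that $\hat A_{k:\ell}^\theta$ equals the original $\hat A_{k:\ell}$ once the pseudo-observations are recomputed block-wise), all the quantities entering $\Db_n^\theta$ and $\check \Db_n^{\theta,(b)}$ are obtained from the formulas of Section~\ref{sec:testdim2} simply by substituting $\hat U_{\theta,k:\ell,i}, \hat V_{\theta,k:\ell,i}$ for $\hat U_{k:\ell,i}, \hat V_{k:\ell,i}$ throughout. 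Hence it suffices to trace how this substitution affects the sequential empirical process that drives everything: the empirical-copula-type building block $\check \Bb_n^{(b)}$ in~\eqref{eq:checkBnb} must be replaced by a version in which, for any subsample index window straddling $m$, the ranks are computed separately on $[k, m]$ and on $[m+1, \ell]$.

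The key step is therefore a weak-convergence statement for the \emph{modified} two-parameter sequential empirical process built from the $\hat U_{\theta,\cdot}, \hat V_{\theta,\cdot}$, together with its multiplier counterpart. I would argue as follows. Write $\theta$ (equivalently $m = \ip{n\theta}$) as a fixed split. On $[1, m]$ the observations are i.i.d.\ with continuous margins $F^{(1)}, G^{(1)}$ and copula $C$; on $[m+1, n]$ they are i.i.d.\ with margins $F^{(n)}, G^{(n)}$ and the \emph{same} copula $C$ (this is exactly $\Hc_{0,c} \cap \Hc_{1,m}$). Because the relevant processes are rank-based, the two different marginal laws are irrelevant: applying the probability-integral transform blockwise, the modified pseudo-observations on each block have the same joint law as genuine pseudo-observations from $C$. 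One then applies the sequential empirical process results of \cite{BucKoj14} (as already invoked for Proposition~\ref{prop:bootdim2}) separately to each of the two blocks, obtains joint weak convergence of the pair of block processes to independent Gaussian limits $\Bb_C^{[0,\theta]}$ and $\Bb_C^{[\theta,1]}$, and checks that gluing them back together — i.e.\ reading off increments over an arbitrary window $(\ip{ns}, \ip{ns'}]$ — reproduces exactly the process $\Bb_C$ with the covariance kernel stated in Proposition~\ref{prop:weakdim2}. The point is that $\Bb_C$ is a Kiefer--Müller-type process with independent increments in its first argument, so a limit that is "independent across the two blocks and correct within each block'' is automatically the correct global limit; the split at $\theta$ leaves no trace. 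The same blockwise reasoning applied to the multiplier weights $\xi_i^{(b)}$ (which are i.i.d.\ and hence also split cleanly) yields the analogous statement for $\check \Bb_n^{\theta,(b)}$, and joint convergence of the observed and all $B$ multiplier processes follows from the joint convergence already available in each block, since the multiplier randomness is independent of the data and across $b$.

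Having the limit for the modified sequential empirical process, the remainder is a functional-delta-method argument identical to the one underlying Propositions~\ref{prop:weakdim2} and~\ref{prop:bootdim2}: $\hat A_{k:\ell}^\theta$, and its bootstrap replicate, are obtained from the modified empirical copula process by the same smooth maps (taking the relevant $\max$-integral functional, forming the ratio $S \mapsto S/(1-S)$, and taking CUSUM differences), and the derivative-estimator ingredients $\hat A'_{k:\ell,n}$ converge to $A'$ by the same uniform-consistency argument as before (using $\inf_n h_n\sqrt n > 0$). Since within each of the two blocks the limit of the modified process coincides with the limit of the unmodified one, and since the CUSUM aggregation combines the blocks in exactly the way the Hadamard derivative prescribes, the limit of $(\Db_n^\theta, \check \Db_n^{\theta,(1)}, \dots, \check \Db_n^{\theta,(B)})$ is $(\Db_C, \Db_C^{(1)}, \dots, \Db_C^{(B)})$ with the $\Db_C^{(b)}$ independent copies of $\Db_C$, as claimed. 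I expect the main obstacle to be the careful bookkeeping in the "gluing'' step — verifying that for every $(s, s')$ the increment of the block-split process has precisely the covariance of $\Cb_C$ (equivalently $\Bb_C$), including the boundary cases where one of $\ip{ns}, \ip{ns'}$ coincides with $m$ — rather than any genuinely new probabilistic input; conceptually the proof is "the same as Proposition~\ref{prop:bootdim2}, run separately on $[0,\theta]$ and $[\theta,1]$ and reassembled.''
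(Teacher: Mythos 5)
Your proposal is correct and its essential strategy is the paper's: under $\Hc_{1,m}\cap\Hc_{0,c}$ the rank-based processes restricted to each of the two homogeneous blocks are equal in distribution to their counterparts under $\Hc_0$, so the asymptotic linearizations of Section~\ref{sec:testdim2} apply blockwise (this is exactly Lemma~\ref{lem:aeCbnH1} and its bootstrap analogues), and windows straddling $m$ are handled by splitting additively at $\theta$. The one point where the paper is organized differently is precisely the step you flag as the main obstacle, the ``gluing'' of the two block limits and the verification that the reassembled covariance is that of $\Bb_C$. The paper sidesteps this: rather than proving weak convergence separately on each block and then checking independence and covariance matching at the split, it reduces $\Db_n^\theta$ and each $\check\Db_n^{\theta,(b)}$ to functionals of the single oracle processes $\bar\Bb_n$ and $\bar\Bb_n^{(b)}$ in~\eqref{eq:barBnb}, built from the true probability-integral transforms $(F^{(i)}(X_i),G^{(i)}(Y_i))$. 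These are defined on the whole sample, are blind to the marginal break, and are exactly additive over blocks, i.e.\ $\bar\Ab_n(s,s'',\cdot)=\bar\Ab_n(s,s',\cdot)+\bar\Ab_n(s',s'',\cdot)$, so their joint weak convergence to $(\Cb_C,\Cb_C^{(1)},\dots,\Cb_C^{(B)})$ is a single global application of Lemma~\ref{lem:multbarAn} and no covariance bookkeeping at $\theta$ is required. Your route would also work, since the independent-increment (Kiefer--M\"uller) structure of the limit in its first argument guarantees that two independent block limits reassemble to $\Bb_C$; but the oracle-process reduction makes the reassembly rigorous for free, and it is also what drives the remaining, genuinely nontrivial estimates that your sketch compresses into ``the same functional-delta-method argument'' --- namely the uniform control of the centering terms of the multiplier process over straddling windows and the uniform consistency of $\hat A^\theta_{\ip{ns}+1:n}$ and of the derivative estimators, i.e.\ \eqref{eq:convBntheta}--\eqref{eq:convAprimetheta} in the paper.
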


Multiplier bootstrap replicates of the test statistics in~\eqref{eq:SnAtheta} 
are defined {\em mutatis mutandis}, the only difference with the formulas of Section~\ref{sec:testdim2} being again the pseudo-observations which are computed using~\eqref{eq:pseudotheta}. As previously, the null hypothesis is rejected at the significance level $\alpha$ if $S_{n,A}^\theta$ (resp.\ $S_{n,A}^\theta(k^\star)$) exceeds the $\ip{(1-\alpha)B}$-th order statistic computed from the sample of its $B$ multiplier bootstrap replicates.

We end this section by three remarks on further possible extensions.

\begin{remark} [More than one marginal break]
In the preceding parts of this section, we restricted ourselves to the case of exactly one abrupt break that affects both marginal distributions. At the cost of a more complex notation, the previous results can all be simply extended to a more general setting. For $R\in\N$, $R \geq 1$, let  
\[
\Theta_R 
= 
\{(\theta_0, \theta_1, \dots, \theta_{R+1}) \in \R^{R+2}:  0 = \theta_0 < \theta_1 < \dots < \theta_R < \theta_{R+1}=1\}.
\]
A vector $\vect \theta \in \Theta_R$ should be interpreted as encoding (rescaled) time points where at least one of the marginal distributions changes abruptly.  Specifically, let
\begin{equation*}
\Hc_{1,m}^{(R)}: \left\{
\begin{array}{l}
\mbox{there exists a known } \vect \theta \in \Theta_R \mbox{ such that, for any } r=0, \dots, R, \\
F^{(\ip{n\theta_{r}}+1)}=\ldots=F^{(\ip{n\theta_{r+1}})} \\ 
G^{(\ip{n\theta_{r}}+1)}=\ldots=G^{(\ip{n\theta_{r+1}})} \\
\text{and such that, for at least one of the margins, } \\ 
\text{the c.d.f.s at $\ip{n\theta_r}$ and $\ip{n\theta_r}+1$ are different.}
\end{array}
\right.
\end{equation*}
Identifying the parameter $\theta\in(0,1)$ used previously in this section with the vector $\vect \theta=(0,\theta,1)\in \Theta_1$, we see that $\Hc_{1,m}$ in \eqref{eq:H1m} implies $\Hc_{1,m}^{(1)}$ above. To derive an extension of the test based on $S_{n,A}^\theta$ in~\eqref{eq:SnAtheta} that can be used to reject $\Hc_{0,c}$ in~\eqref{eq:H0A} under $\Hc_{1,m}^{\scriptscriptstyle (R)}$ above, it merely suffices to adapt the definition of the pseudo-observations in~\eqref{eq:pseudotheta} to this more complex situation. For that purpose, let $m_r=\ip{n\theta_r}$ for all $r=0,\dots,R+1$, and, for any $i=1, \dots, n$, let 
\begin{align*}
m_\wedge=m_\wedge(i)
&=
\max\{ m_r \in \{m_{0}, \dots, m_R\}: m_{r} < i \}, \\
m_\vee=m_\vee(i) 
&=
\min\{ m_r \in \{m_{1}, \dots, m_{R+1}\}: m_{r} \ge i \},
\end{align*} 
such that $i \in \{ m_\wedge+1, \dots, m_\vee \}$. Then, analogously to~\eqref{eq:pseudotheta}, we can define pseudo-observations adapted to $\Hc_{1,m}^{(R)}$, for $1\le k \le \ell \le n$ and $i=k,\dots, \ell$, as
\begin{equation*}
\hat U_{\vect \theta,k:\ell,i} = 
\begin{cases}
\hat U_{k:\ell, i} & \text{ if } m_\wedge < k \text{ and } \ell < m_\vee, \\
\hat U_{m_\wedge+1:\ell, i} & \text{ if } m_\wedge \ge k \text{ and } \ell < m_\vee, \\
\hat U_{k:m_\vee, i} & \text{ if } m_\wedge < k \text{ and } \ell \ge m_\vee, \\
\hat U_{m_\wedge+1:m_\vee, i} & \text{ if } m_\wedge \ge k \text{ and } \ell \ge m_\vee, \\
\end{cases}
\end{equation*}
and similarly for $\hat V_{\vect \theta,k:\ell,i}$. All the formulas from Section~\ref{sec:testdim2} remain identical up to the use of the above definition for the pseudo-observations. In addition, the proofs of the theoretical results stated in this section extend easily but are more cumbersome to write.  For the sake of brevity, we omit further details.
\end{remark}

\begin{remark}[Extension to PQD copulas]
The previously studied tests could be extended to copulas that are not necessary of the extreme-value type, for instance to {\em positive quadrant dependent} (PQD) copulas \citep[see, e.g.,][Chapter~5]{Nel06}, that is, copulas satisfying $C(u,v) \geq uv$ for all $(u,v) \in [0,1]^2$. Starting from~\eqref{eq:evc}, it can be easily verified that extreme-value copulas are PQD. The extension of the tests to PQD copulas relies on the fact (used in the proofs given in Appendices~\ref{sec:proofs} and~\ref{sec:proofsboot}) that the Pickands dependence function associated with an extreme-value copula $C$ can be expressed as
$$
A(t) = \frac{1 - \int_0^1 C(y^{1-t},y^t) \, dy}{\int_0^1 C(y^{1-t},y^t) \, dy}, \quad t \in [0,1].
$$
The previous functional of $C$ remains well-defined outside of the extreme-value framework as long as $\int_0^1 C(y^{1-t},y^t) dy \neq 0$. The PQD condition, for instance, ensures that $C(y^{1-t},y^t) \ge y$ for all $(y,t) \in (0,1) \times [0,1]$, and therefore the existence of~$A$.

The only change necessary to extend the previously studied tests to PQD copulas concerns the estimation of the partial derivatives $\dot C_1$ and $\dot C_2$ of the copula $C$ that is required for carrying out the multiplier resampling scheme. The approach adopted in Section~\ref{sec:testdim2} (see~\eqref{eq:Akl'}) could be replaced by the more classical one consisting of using finite-differences based on the empirical copula. This would however lead to significantly less convenient formulas as far as implementation is concerned. Given that the meaning of the functional $A$ is unclear outside of the extreme-value framework and that tests for change-point detection for non extreme-value copulas already exist \citep[see, e.g.,][]{BucKojRohSeg14,KojQueRoh15}, we do not pursue this further.
\end{remark}

\begin{remark}[Estimation of the marginal change-points] An even more general framework regarding marginal change-points is to assume that their number is known but not their position. For instance, under the assumption of one unknown marginal (scaled) change-point $\theta \in (0,1)$ in the first margin, its value could be estimated under some additional assumption on the nature of the change-point (change in mean, in variance, etc). A sensible estimator $\hat \theta_n$ of $\theta$ is then given by $\hat k/n$, where $\hat k$ is the value of $k \in \{1,\dots,n-1\}$ that maximizes a corresponding natural max-type change-point statistic. Under the additional assumption that  $\hat \theta_n - \theta = O_\Prob(1/n)$ \citep[see, e.g.,][]{Dum91}, we conjecture that the limiting behavior of the statistic $S_{n,A}^{\scriptscriptstyle \hat \theta_n}$ is the same as that of $S_{n,A}^\theta$ in~\eqref{eq:SnAtheta}. The theoretical justification seems to be quite involved, and, since we do not need this extension for the hydrological applications presented in Section~\ref{sec:illustration}, we do not pursue this further.
\end{remark}


\section{Simulation study}\label{sec:sim}

Simulations were carried out in order to evaluate the finite-sample performance of the tests studied in Sections~\ref{sec:testdim2} and~\ref{sec:testdim2break}. For the sake of simplicity, we only focused on the tests based on $S_{n,A}$ in~\eqref{eq:SnA} and $S_{n,A}^{\scriptscriptstyle \theta}$ in~\eqref{eq:SnAtheta}, as the finite-sample behavior of the corresponding two-sample tests should be strongly related. Recall that the tests based on $S_{n,A}$ and $S_{n,A}^{\scriptscriptstyle \theta}$ are procedures for testing $\Hc_0$ in~\eqref{eq:H0} designed to be particularly sensitive to departures from $\Hc_{0,c}$ in~\eqref{eq:H0A}. The former (resp.\ latter) should not be used to reject $\Hc_{0,c}$ unless $\Hc_{0,m}$ in~\eqref{eq:H0m} (resp.\ $\Hc_{1,m}$ in~\eqref{eq:H1m}) holds.

The rejection rates of the two tests were estimated from samples drawn from bivariate distributions whose copulas are of the form
\begin{align}
\label{eq:khoudraji}
C_{\mathbf{a},\vartheta}(u,v)=u^{a_1}v^{a_2}C_\vartheta(u^{1-a_1},v^{1-a_2}), \quad (u,v) \in [0,1]^2,
\end{align}
where $C_\vartheta$ is a symmetric extreme-value copula with parameter $\vartheta \in \R$ and $\mathbf{a}=(a_1,a_2)\in[0,1]^2$ is a parameter controlling the amount of asymmetry. The above copula construction principle is frequently referred to as {\em Khoudraji's device} \citep{Kho95,GenGhoRiv98,Lie08}. As long as $C_\vartheta$ is an extreme-value copula, so is its potentially asymmetric version $C_{\mathbf{a},\vartheta}$. Given that there is hardly any practical difference among the existing bivariate symmetric parametric families of extreme-value copulas \citep[see][for more details]{GenKojNesYan11}, $C_\vartheta$ in~\eqref{eq:khoudraji} was taken to be the Gumbel--Hougaard copula with parameter $\vartheta$.

The finite-sample performance of the tests based on $S_{n,A}$ and $S_{n,A}^\theta$ was compared with that of two other tests for $\Hc_0$ designed to be particularly sensitive to $\Hc_{0,c}$ in~\eqref{eq:H0c}: the test based on the empirical copula studied in \citet{BucKojRohSeg14} (statistic $\check{S}_n$) and the test based on Spearman's rho considered in \citet{KojQueRoh15} (statistic $\tilde{S}_{n,1}$). Both tests rely on multiplier bootstraps for the computation of approximate p-values. They will be referred to as {\em the test based on $S_{n,C}$} and {\em the test based on $S_{n,\rho}$}, respectively. These procedures however do not assume the underlying dependence structures to be of the extreme-value type. The former is sensitive to all kind of changes in the underlying copula, while the latter is only sensitive to changes in Spearman's rho. As for the test based on $S_{n,A}$, they should not be used to reject $\Hc_{0,c}$ unless $\Hc_{0,m}$ in~\eqref{eq:H0m} holds. 

All the tests considered in our numerical experiments were carried out at the 5\% significance level using $B=1000$ multiplier bootstrap replicates. The values 50, 100 and 200 were considered for the sample size $n$. The measure $\mu$ involved in the definition of $S_{n,A}$ in~\eqref{eq:SnA} and $S_{n,A}^\theta$ in~\eqref{eq:SnAtheta} was taken equal to $9^{-1}\sum_{i=1}^9\delta_{i/10}$. The bandwidth $h_n$ in~\eqref{eq:Akl'} was set to $10^{-2}/\sqrt{n}$. With the illustrations of Section~\ref{sec:illustration} in mind, the values 0.25 and 0.5 were considered for $\theta$.  The computations were carried out using the \textsf{R} statistical system \citep{Rsystem}, and the \textsf{R} packages {\tt copula} \citep{copula} and {\tt npcp}~\citep{npcp}.

\begin{table}
\begin{tabular}{cccc|ccccc}
\toprule[1.5pt]
$n$ & $\mathbf{a}$ & $\vartheta$ & $\tau$ & $S_{n,A}$  & $S_{n,C}$ & $S_{n,\rho}$ & $S_{n,A}^{0.25}$ & $S_{n,A}^{0.5}$ \\ 
\midrule[1pt]
50 & $(0,0)$ & 1 & 0 & 4.9  & 6.3 & 5.6 & 7.6 & 4.2 \\ 
 &  & 1.25 & .2 & 6.7 & 6.2 & 5.8 & 7.9 & 7.0  \\ 
 &  & 1.67 & .4 & 5.8  & 4.4 & 3.9 & 6.6 & 6.1 \\ 
 &  & 2.5 & .6 & 4.0  & 3.7 & 2.4 & 5.6 & 4.7 \\ 
 &  & 5 & .8 & 3.6  & 2.4 & 0.9 & 8.2 & 2.7 \\ 
 & $(0,.3)$ & 4 & .56 & 4.5  & 4.7 & 3.1 & 5.2 & 5.5\\ 
 \midrule[1pt]
100 & $(0,0)$ & 1 & 0 & 5.5  & 5.1 & 5.8 & 7.7 & 5.4\\ 
 &  & 1.25 & .2 & 6.3 & 5.4 & 6.2 & 7.4 & 6.9 \\ 
 &  & 1.67 & .4 & 6.2 & 4.3 & 4.4 & 6.2 & 6.6  \\ 
 &  & 2.5 & .6 & 5.4 & 3.0 & 2.9 & 6.0 & 5.5 \\ 
 &  & 5 & .8 & 2.0 & 2.2 & 1.0 & 4.0 & 2.6 \\ 
 & $(0,.3)$ & 4 & .56 & 4.5 & 4.2 & 3.8 & 4.5 & 5.0 \\ 
 \midrule[1pt]
200 & $(0,0)$ & 1 & 0 & 5.0 & 4.3 & 4.8 & 6.2 & 5.6 \\ 
 &  & 1.25 & .2 & 6.0 & 4.8 & 5.8 & 6.4 & 6.4  \\ 
 &  & 1.67 & .4 & 5.9 & 4.0 & 4.9 & 6.4 & 6.2  \\ 
 &  & 2.5 & .6 & 3.6 & 2.8 & 3.1 & 4.4 & 4.4  \\ 
 &  & 5 & .8 & 2.6  & 1.3 & 2.0 & 3.4 & 3.4 \\ 
 & $(0,.3)$ & 4 & .56 & 4.8  & 4.0 & 4.3 & 5.2 & 5.2 \\
  \bottomrule[1.5pt]
\end{tabular}
\centering
\caption{Rejection rates of $\Hc_0$ in \% estimated from 4000 random samples generated under $\Hc_0$ from c.d.f.\ $C_{\mathbf{a},\vartheta}$ in~\eqref{eq:khoudraji}. The column $\tau$ gives the value of Kendall's tau of the copula $C_{\mathbf{a},\vartheta}$.}
\label{table:nullsize} 
\end{table}

\vspace{-.3cm}
\paragraph{Empirical levels of the tests based on $S_{n,A}$, $S_{n,C}$ and $S_{n,\rho}$} Columns 5-7 of Table~\ref{table:nullsize} report the rejection rates of the three tests estimated from 4000 random samples generated under $\Hc_0$ from c.d.f.\ $C_{\mathbf{a},\vartheta}$ in~\eqref{eq:khoudraji} for various values of $\mathbf{a}$ and $\vartheta$. As one can see, the empirical levels are overall reasonably close to the 5\% nominal level in all settings for which Kendall's tau $\tau$ of $C_{\mathbf{a},\vartheta}$ is strictly smaller than 0.6. For $\tau \geq 0.6$, the three tests are overall too conservative, although, as expected, the empirical levels improve as $n$ increases. 

\begin{figure}
\centering
\includegraphics[width=.8\linewidth]{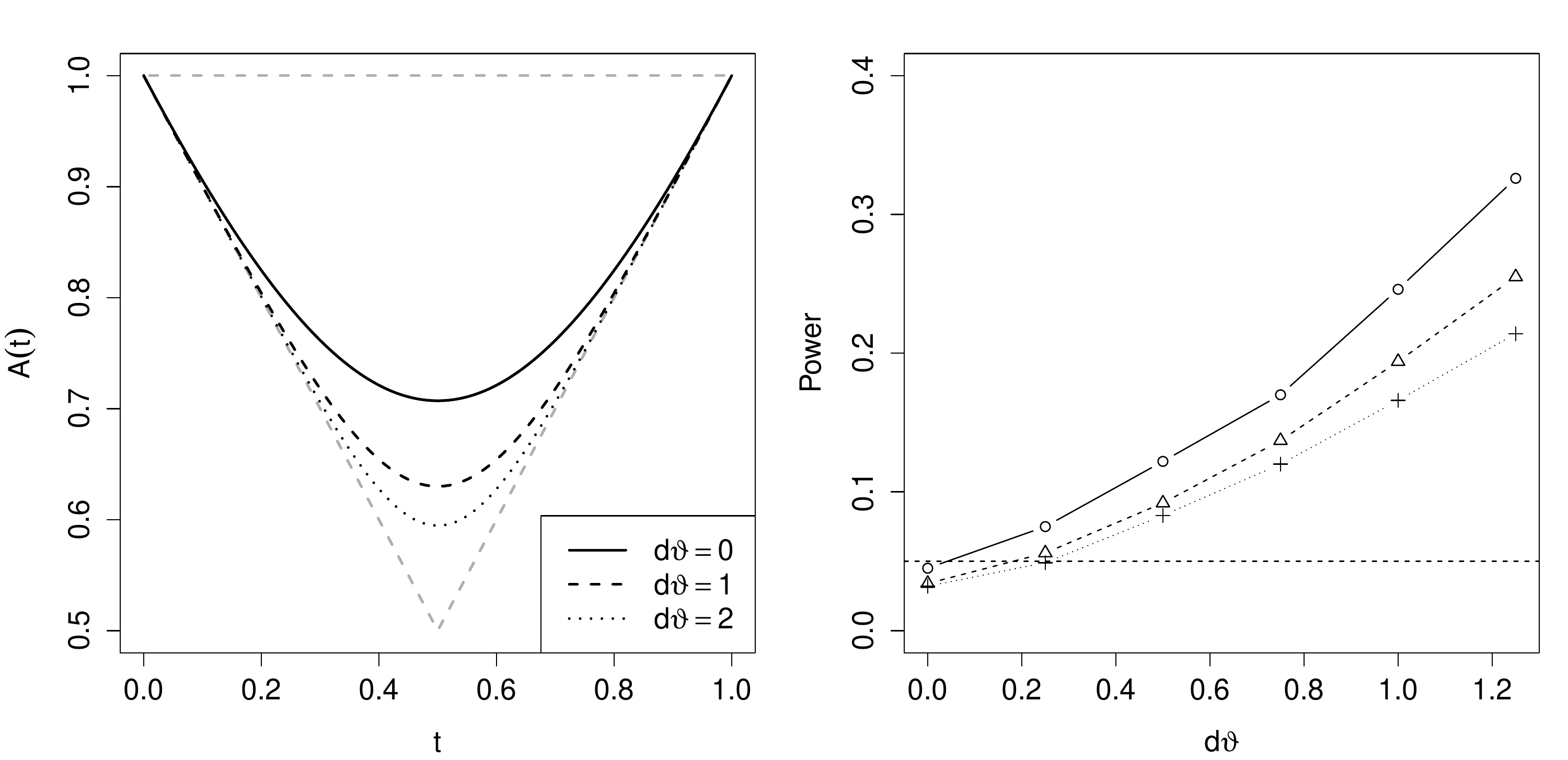}
\caption{Left: Pickands dependence function of the Gumbel--Hougaard copula with parameter $2+d\vartheta$. Right: Rejection rates of the tests based on $S_{n,A}$ $(\circ)$, $S_{n,C}$ $(\triangle)$ and $S_{n,\rho}$ $(+)$ versus $d\vartheta$ estimated from 2000 bivariate samples of size $n=100$ such that, for each sample, the first (resp.\ last) 50 observations were drawn from a Gumbel--Hougaard copula with parameter 2 (resp.\ $2+d\vartheta$).}
\label{fig:powerdgp1}
\end{figure}

\begin{figure}
\centering
\includegraphics[width=.8\linewidth]{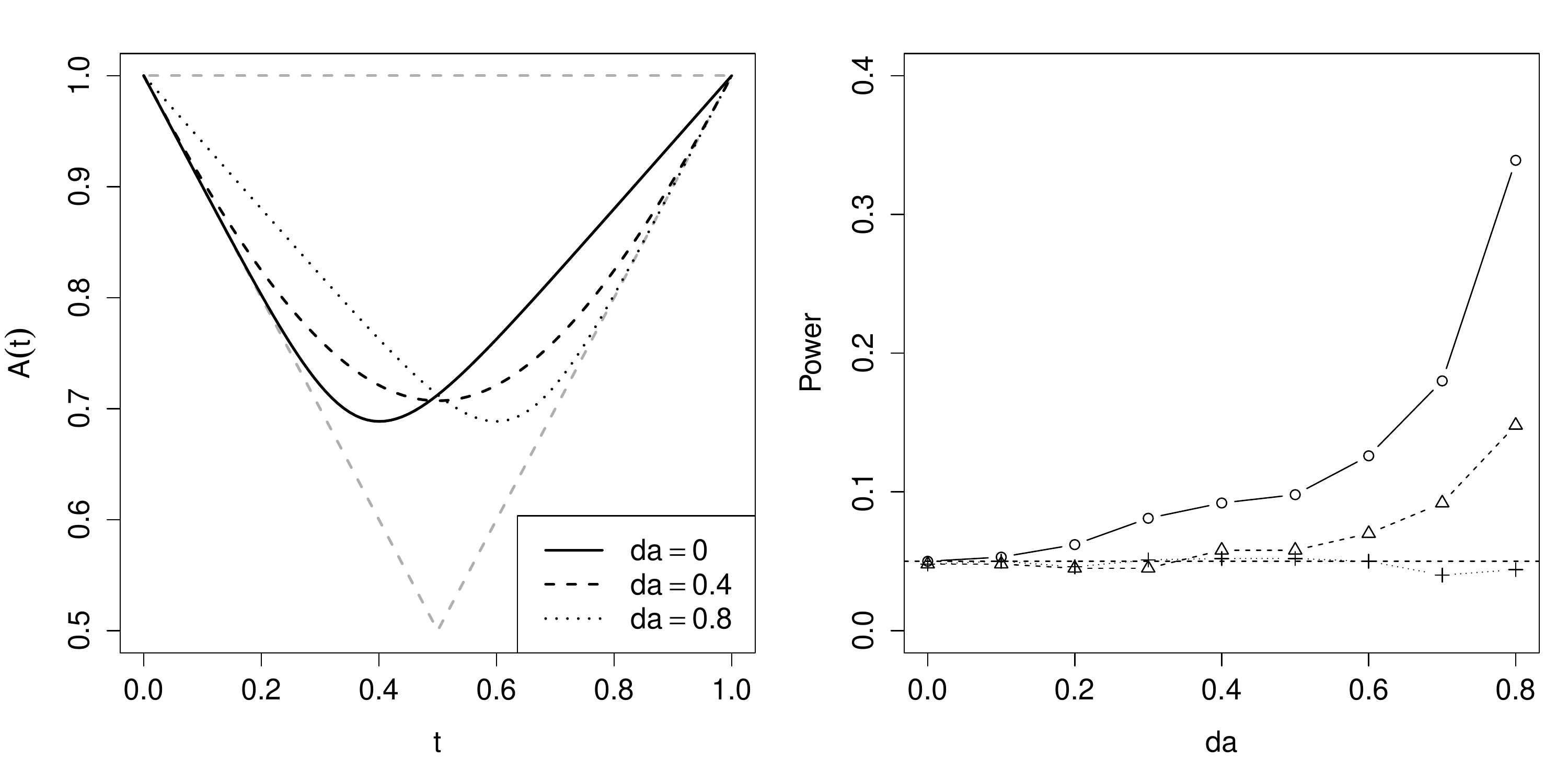}
\caption{Left: Pickands dependence function associated with the copula $C_{\mathbf{a},\vartheta}$ in~\eqref{eq:khoudraji} with $\mathbf{a}=(\max(0.4-da,0),\max(da-0.4,0))$, for $da \in \{0,0.4,0.8\}$, and $\vartheta$ set to keep Kendall's tau of $C_{\mathbf{a},\vartheta}$ equal to $0.5$. Right: Rejection rates of the tests based on $S_{n,A}$ $(\circ)$, $S_{n,C}$ $(\triangle)$ and $S_{n,\rho}$ $(+)$ versus $da$ estimated from 2000 bivariate samples of size $n=200$ such that, for each sample, the first (resp.\ last) 100 observations were drawn the above mentioned copula with $da=0$ (resp.\ $da \in \{0,0.1,\dots,0.8\}$).}
\label{fig:powerasymmetricgumbel}
\end{figure}

\vspace{-.3cm}
\paragraph{Empirical power of the tests based on $S_{n,A}$, $S_{n,C}$ and $S_{n,\rho}$ under changes in the copula only} The right plot of Figure~\ref{fig:powerdgp1} displays the rejection rates of the three tests estimated from 2000 samples of size $n=100$ generated under $\Hc_{0,m}\cap \neg \Hc_{0,c}$ such that, for each sample, the first (resp.\ last) 50 observations were drawn from c.d.f.~\eqref{eq:khoudraji} with $(\mathbf{a},\vartheta)=(0,0,2)$ (resp.\ $(0,0,2+d\vartheta)$). As expected, the test based on $S_{n,A}$ is more powerful than its two competitors in this simple setting. 

To investigate the influence of asymmetry on the power of the three tests, as a second experiment, we considered again the copula $C_{\mathbf{a},\vartheta}$ in~\eqref{eq:khoudraji}, but this time with parameter $\mathbf{a}$ defined as $(\max(0.4-da,0),\max(da-0.4,0))$, for $da \in \{0,0.1,\dots,0.8\}$, and with parameter $\vartheta$ set to keep Kendall's tau of $C_{\mathbf{a},\vartheta}$ equal to $0.5$. The corresponding Pickands dependence functions for $da \in \{0,0.4,0.8\}$ are represented in the left plot of Figure~\ref{fig:powerasymmetricgumbel}. The right plot displays the rejection rates of the tests based on $S_{n,A}$, $S_{n,C}$ and $S_{n,\rho}$ versus $da$ estimated from 2000 samples of size $n=200$ such that, for each sample, the first (resp.\ last) 100 observations were drawn from the above mentioned copula with $da=0$ (resp.\ $da \in \{0,0.1,\dots,0.8\}$). Although the rejection rates are overall relatively low, the test based $S_{n,A}$ is by far the best. The fact that the test based on $S_{n,\rho}$ has no power against such alternatives is due to the fact that Spearman's remains almost constant. 

\begin{table}
\centering
\begin{tabular}{ccc|ccc|ccc}
\toprule[1.5pt]
\multicolumn{3}{c}{} & \multicolumn{3}{|c|}{$\theta=0.5$} & \multicolumn{3}{c}{$\theta=0.25$}   \\
$d\mu$ & $n$ & $\tau$ & $S_{n,A}$ & $S_{n,C}$ & $S_{n,\rho}$ & $S_{n,A}$ & $S_{n,C}$ & $S_{n,\rho}$  \\
\midrule[1pt]
5 & 50 & 0 & 5.5 & 6.2 & 6.0 & 5.5 & 6.3 & 6.2 \\
& & 0.25 & 6.6 & 6.1 & 6.2 & 5.2 & 4.1 & 3.7 \\
& & 0.5 & 4.3 & 3.3 & 2.5 & 4.4 & 2.8 & 2.7 \\
& & 0.75 & 3.9 & 2.2 & 1.2 & 2.7 & 2.5 & 1.1  \\
& 100 & 0 & 5.1 & 5.1 & 5.4 & 5.0 & 5.9 & 6.0 \\
& & 0.25 & 5.1 & 3.6 & 5.1 & 6.7 & 4.0 & 4.9 \\
& & 0.5 & 5.5 & 4.1 & 4.3 & 4.7 & 2.9 & 3.3  \\
& & 0.75 & 3.3 & 1.0 & 0.6 & 4.5 & 2.8 & 1.5 \\
& 200 & 0 & 5.6 & 4.8 & 5.5 & 5.6 & 5.8 & 5.4 \\
& & 0.25 & 4.8 & 5.1 & 5.1 & 3.9 & 3.6 & 3.4 \\
& & 0.5 & 4.6 & 2.5 & 2.9 & 5.2 & 3.4 & 4.0  \\
& & 0.75 & 4.0 & 0.8 & 1.0 & 5.0 & 3.0 & 2.1  \\
\midrule[1pt]
15 & 50 & 0 & 4.5 & 5.6 & 5.3 & 4.7 & 5.3 & 5.3  \\
& & 0.25 & 7.6 & 5.7 & 5.0 & 6.8 & 6.5 & 6.8 \\
& & 0.5 & 5.2 & 2.8 & 2.1 & 8.6 & 6.7 & 5.3 \\
& & 0.75 & 18.3 & 1.5 & 0.3 & 20.0 & 15.9 & 4.6  \\
& 100 & 0 & 4.3 & 4.4 & 4.9 & 4.2 & 4.2 & 4.6 \\
& & 0.25 & 4.7 & 3.4 & 3.9 & 5.9 & 4.3 & 4.9 \\
& & 0.5 & 7.2 & 3.8 & 3.0 & 9.0 & 8.4 & 6.5  \\
& & 0.75 & 40.6 & 10.0 & 1.8 & 42.0 & 36.9 & 23.0  \\
& 200 & 0 & 4.3 & 3.8 & 5.4 & 4.2 & 4.3 & 4.3  \\
& & 0.25 & 6.4 & 5.5 & 5.2 & 7.7 & 5.5 & 6.5 \\
& & 0.5 & 9.3 & 7.4 & 6.3 & 14.1 & 15.7 & 14.7  \\
& & 0.75 & 75.2 & 56.5 & 36.8 & 79.2 & 79.3 & 71.5  \\
\bottomrule[1.5pt]
\end{tabular}
\caption{Rejection rates of $\Hc_0$ in \% estimated from 1000 bivariate samples of size $n$ generated under $\Hc_{1,m} \cap \Hc_{0,c}$, where $\Hc_{0,c}$ and $\Hc_{1,m}$ are defined in~\eqref{eq:H0c} and~\eqref{eq:H1m}, respectively, such that, for each sample, the first $\ip{n\theta}$ (resp.\ last $n -\ip{n\theta}$) observations were drawn from a c.d.f.\ whose copula is the Gumbel--Hougaard, whose first margin is GEV with parameters $\mu=20$, $\sigma=10$ and $\gamma=0.25)$ (resp.\ $\mu=20+d\mu$, $\sigma=10$ and $\gamma=0.25$), and whose second margin is standard normal. The value of the parameter of the Gumbel--Hougaard copula is set through its one-to-one relationship with Kendall's tau $\tau$.}
\label{table:H1m} 
\end{table}

\vspace{-.3cm}
\paragraph{Empirical power of the tests based on $S_{n,A}$, $S_{n,C}$ and $S_{n,\rho}$ under an abrupt change in one margin only} Table~\ref{table:H1m} reports rejection rates of $\Hc_0$ estimated from 1000 bivariate samples of size $n$ generated under $\Hc_{1,m} \cap \Hc_{0,c}$ where $\Hc_{0,c}$ and $\Hc_{1,m}$ are defined in~\eqref{eq:H0c} and~\eqref{eq:H1m}, respectively, such that, for each sample, the first $\ip{n\theta}$ (resp.\ last $n -\ip{n\theta}$) observations were drawn from a c.d.f.\ whose copula is the Gumbel--Hougaard, whose first margin is GEV with parameters $\mu=20$, $\sigma=10$ and $\gamma=0.25$ (resp.\ $\mu=20+d\mu$, $\sigma=10$ and $\gamma=0.25$), and whose second margin is standard normal (the results are unaffected by the choice of the second margin since the test is rank-based). 

All three tests have little power against such alternatives when the shift $d\mu$ in the location parameter of the first margin is relatively small ($d\mu=5$). This is a desirable property since the tests were designed to be sensitive to departures from $\Hc_{0,c}$. Higher rejection rates were obtained for $d\mu=15$ and when the dependence is moderate or high, in particular if the (scaled) change-point in the first margin is non-central $(\theta=0.25)$. The latter results illustrate the fact that the procedures based on $S_{n,A}$, $S_{n,C}$ and $S_{n,\rho}$ are tests for $\Hc_0$ and that one should not use them to reject $\Hc_{0,c}$ unless $\Hc_{0,m}$ holds. Additional changes in the dispersion or scale parameter of the first margin might even increase the phenomenon.

\vspace{-.3cm}
\paragraph{Empirical levels of the test based on $S_{n,A}^\theta$} A consequence of Proposition~\ref{prop:bootdim2break} is that the test based on $S_{n,A}^\theta$ will hold its level asymptotically under one abrupt marginal change only, such as those considered in the previous experiment. To evaluate the corresponding finite-sample behavior, we considered again the setting of Table~\ref{table:nullsize}. Indeed, because of the rank-based nature of the test based on $S_{n,A}^\theta$, samples generated under $\Hc_0$ can equivalently be regarded as generated from $\Hc_{0,c} \cap \Hc_{1,m}$. From the last two columns of Table~\ref{table:nullsize},  we see that the test based on $S_{n,A}^{0.5}$ holds its level equally well as the test based on $S_{n,A}$. The test based on $S_{n,A}^{0.25}$ is however slightly too liberal for $n=50$, although the agreement of its empirical levels with the 5\% nominal level improves as $n$ increases.

\begin{figure}
\centering
\includegraphics[width=1\linewidth]{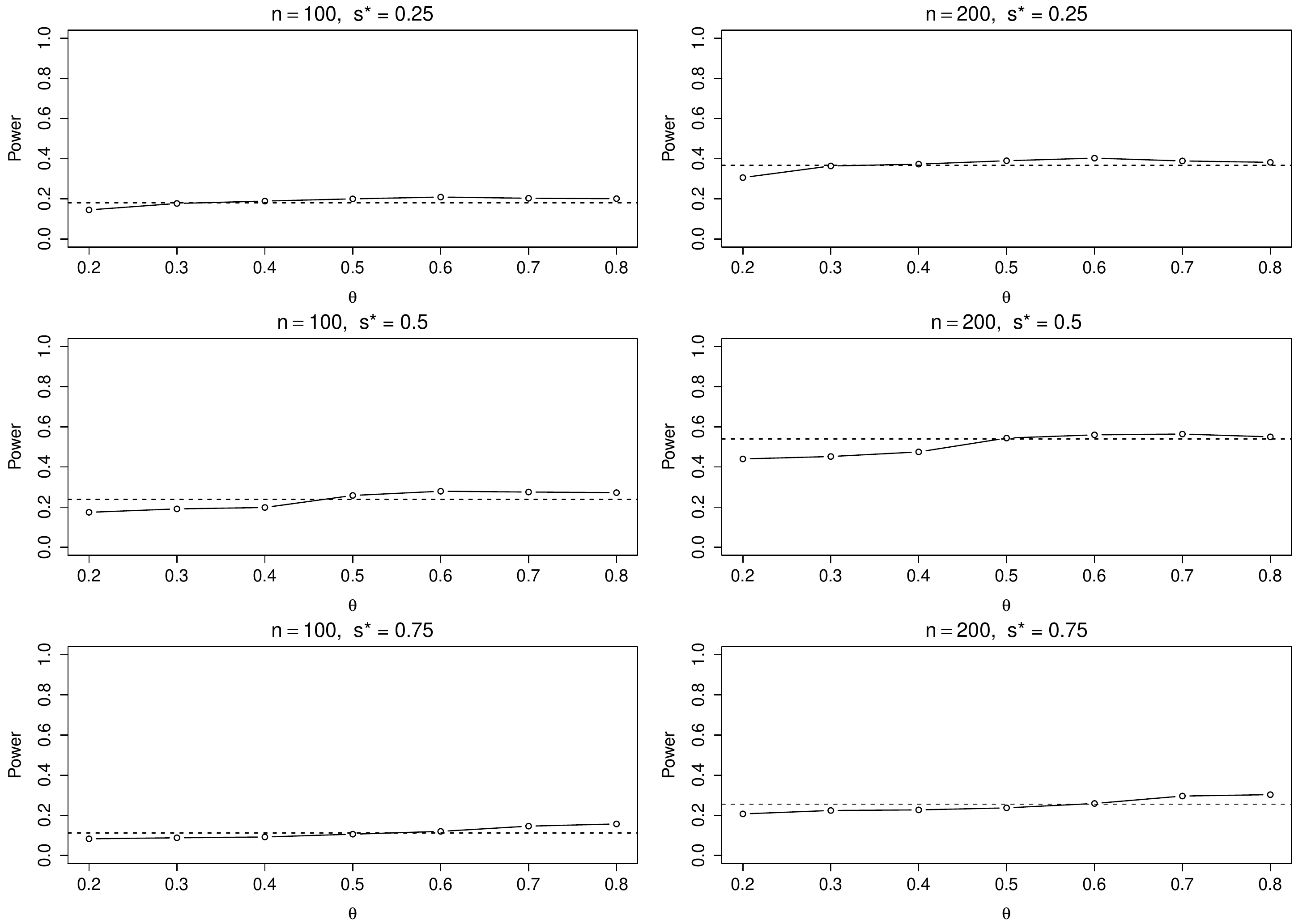}
\caption{Rejection rates of the test based on $S_{n,A}^\theta$ ($\circ$) against $\theta \in \{0.2,0.3,\dots,0.8\}$ estimated from 1000 bivariate samples of size $n \in \{100,200\}$, such that, for each sample, the first $\ip{n s^\star}$ (resp.\ last $n - \ip{n s^\star}$) observations are generated from a Gumbel--Hougaard copula with parameter~2 (resp. 3). The dashed line marks the corresponding estimated rejection rate of the test based on $S_{n,A}$.}
\label{fig:powertheta}
\end{figure}

\vspace{-.3cm}
\paragraph{Empirical power of the test based on $S_{n,A}^\theta$} As a last experiment, we investigated the influence of the value $\theta$ on the power of the test based  on $S_{n,A}^\theta$. Figure~\ref{fig:powertheta} displays the rejection rates of the test based on $S_{n,A}^\theta$ against $\theta \in \{0.2,0.3,\dots,0.8\}$ estimated from 1000 bivariate samples of size $n \in \{100,200\}$, such that, for each sample, the first $\ip{n s^\star}$ (resp.\ last $n - \ip{n s^\star}$) observations are generated from a Gumbel--Hougaard copula with parameter 2 (resp. 3). The values 0.25, 0.5 and 0.75 were considered for $s^\star$. As one can see, the rejections rates are not too much affected by the value of~$\theta$. In addition, the power of the test based on $S_{n,A}^\theta$ remains overall reasonably close to that of the test based on $S_{n,A}$. From a practical perspective, the latter result suggests that, under $\Hc_{0,m}$ in~\eqref{eq:H0m}, the somehow ``non optimal'' use of the test based on $S_{n,A}^\theta$ instead that based on $S_{n,A}$ does not incur a large power loss, if any. As a consequence, if one hesitates about which of $\Hc_{0,m}$ in~\eqref{eq:H0m} or $\Hc_{1,m}$ in~\eqref{eq:H1m} holds, it seems safer to use the test based on $S_{n,A}^\theta$ as, should $\Hc_{1,m}$ be actually true, the latter test is more likely to hold its level by construction, and should $\Hc_{0,m}$ be true, the power loss, if any, should not be too large.

\section{Illustrations}\label{sec:illustration}
 \def\theequation{6.\arabic{equation}}
\setcounter{equation}{0}

\begin{figure}
\centering
\includegraphics[width=0.8\linewidth]{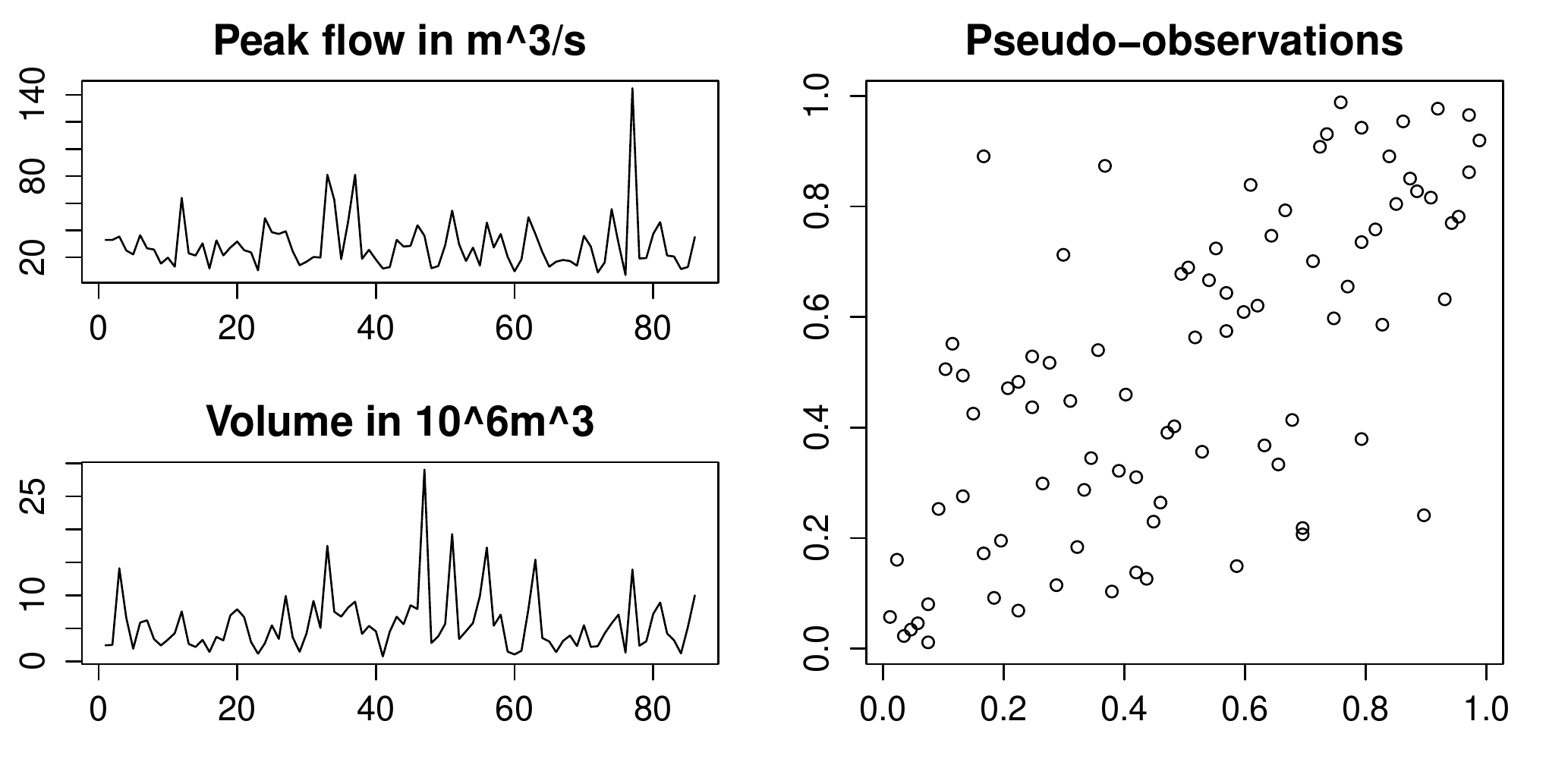}
\caption{Left: Annual maximal peak flows and volumes of discharges measured in Streckewalde, Germany. Right: Corresponding pseudo-observations computed using~\eqref{eq:pseudo1n}.}
\label{fig:peakvolume}
\end{figure}

To illustrate the proposed tests, we consider two hydrological data sets. The first one consists of $n=86$ bivariate annual maxima measured between 1921 and 2011 (five years of data are missing) at a station located on the river Pre\ss nitz in Streckewalde, Germany. The variables of interest are $Q$, the annual maximal peak flow (in $m^3/s$), and $V$, the annual maximal volume of discharge (in $10^6 \, m^3$). Their observations are represented in Figure~\ref{fig:peakvolume}. The joint distribution of $Q$ and $V$ is of strong interest to hydrologists as it can be used to assess the risk of catastrophic flood levels. For a recent case study, we refer to \cite{BacHal14}. 

Because we are dealing with bivariate block maxima, it is natural to assume that the data arise from one or more bivariate extreme-value distributions. 
The aim of our analysis is to test for possible changes in the dependence between $Q$ and $V$ that might have occurred during the long period of observation. An additional element to be taken into account here is that a dam was built on the river Pre\ss nitz in 1973 (which corresponds to the $48$th observation) a few kilometers upstream from the measurement station. We make the hypothesis that, if there are changes in the two components series, then, they are unique and they occurred simultaneously after observation~48 due to the construction of the dam. In other words, we assume that either $\Hc_{0,m}$ in~\eqref{eq:H0m} holds or $\Hc_{1,m}$ in~\eqref{eq:H1m} with $\theta=48/86$ holds. In the former case, it is natural to use the test for change-point detection based on $S_{n,A}$ in~\eqref{eq:SnA}, while in the latter case, the extension based on $S_{n,A}^\theta$ in~\eqref{eq:SnAtheta} with $\theta=48/86$ should be preferred. As mentioned in the previous section (see Figure~\ref{fig:powertheta} and the related discussion), using the test based on $S_{n,A}^{\scriptscriptstyle \theta}$ for some value of~$\theta$ when $\Hc_{0,m}$ in~\eqref{eq:H0m} actually holds does not seem to result in a strong power loss, if any. For that reason, we carried out the test based on $S_{n,A}^{\scriptscriptstyle \theta}$ with $\theta=48/86$. The resulting approximate p-value of 0.068, obtained from $B=10000$ multiplier bootstrap replicates, indicates that there is some weak evidence of change in the dependence between $Q$ and $V$. Interestingly enough, the maximum of the test statistic was not obtained for observation 48 but for observation 32 corresponding to year 1953. 

\begin{figure}
\centering
\includegraphics[width=0.8\linewidth]{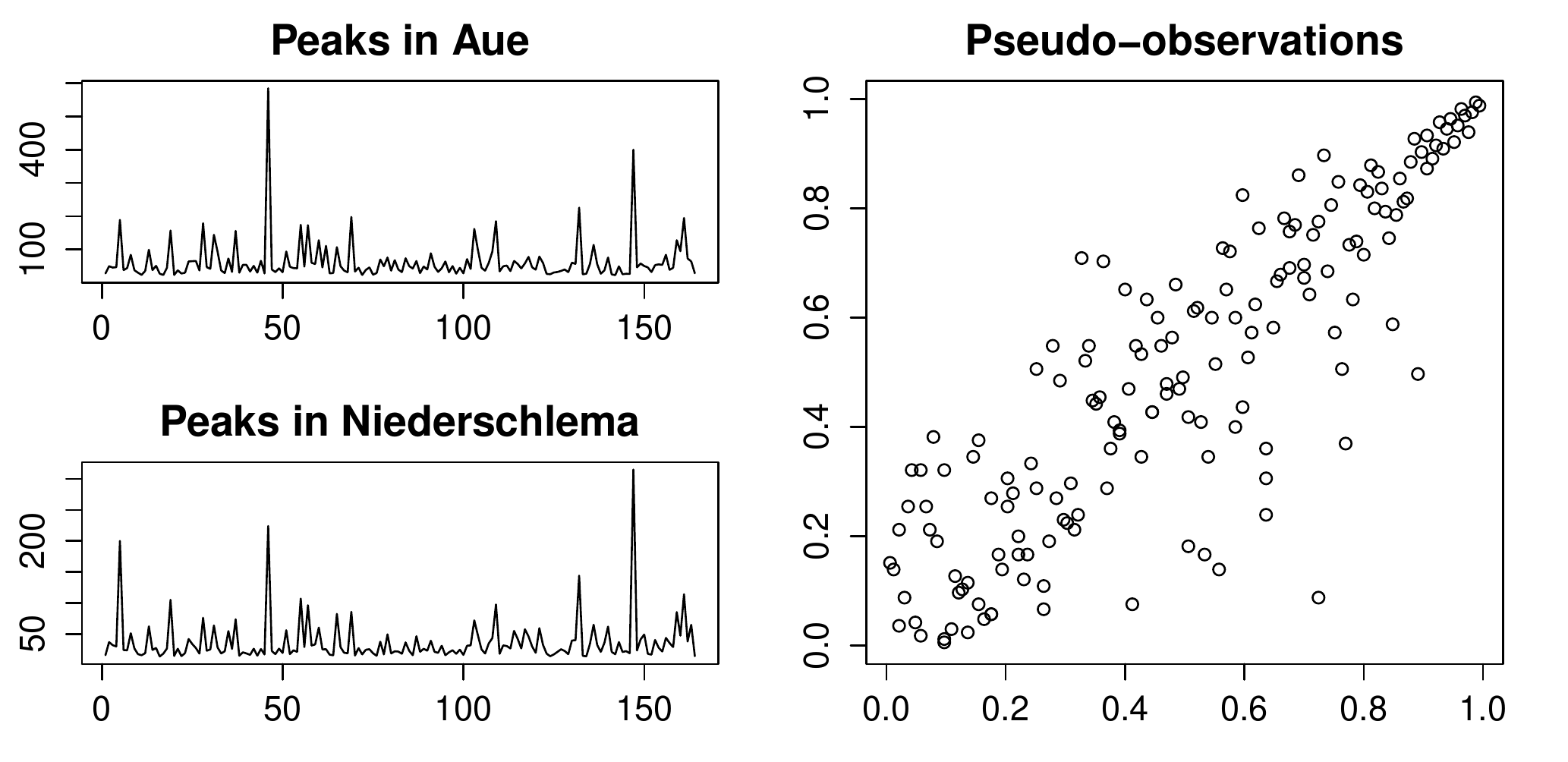}
\caption{Left: Peak flows in $m^3/s$ of 164 summer flood events simultaneously measured at gauges in Aue and Niederschlema, Germany. Right: Corresponding pseudo-observations computed using~\eqref{eq:pseudo1n}.}
\label{fig:peakpeak}
\end{figure}

The second data set consists of peak flows (in $m^3/s$) simultaneously measured at two neighboring gauges for $n=164$ physically independent summer flood events. The two gauges are located in Germany, in Aue and Niederschlema, respectively, and the corresponding measurements will thus be denoted by $Q_A$ and $Q_N$, respectively. The observations, chronologically ordered, span the period 1929-2011 and are represented in Figure~\ref{fig:peakpeak}. 
An event was classified as a flood, if each peak flow exceeded the smallest annual maximal peak flow measured between 1929 and 2011 in Aue and Niederschlema, respectively. The period of each flood event was identified by hand and only the largest value (peak flow) was included in the data set. Hence, by construction, the observations are formed subject to a block maximal procedure, with possibly slightly differing block sizes for each of the flood events. It therefore seems sensible to assume that the data-generating distribution(s) are extreme-value distributions.

 There were two reasons why only summer events were included in the analysis. First, typical winter floods are produced from melting snow, whereas summer floods are due to short but heavy rainfalls. These very different physical mechanisms lead to different peak flow distributions. Second, very high peak flows, which are of particular interest, almost exclusively occur during the summer time. 
The joint distribution of peak flows is of interest, for instance, to evaluate the efficiency of water reservoirs \citep{SchSch15}.

The aim of our analysis is to assess whether the dependence between $Q_A$ and $Q_N$ changed during the long observation period. As for the previous illustration, it might be important to take into account the fact that dams where constructed on the river Mulde and one of its tributary upstream of the two gauges Aue and Niederschlema. A first dam, called Sch\"onheiderhammer, was put in service in 1980 (which corresponds to observation 108) and a second dam, named Eibenstock, was put into service in 1982. As previously, we make the hypothesis that, if there are changes in the two components series, then, they are unique and they occurred simultaneously after observation~108 due to the construction of the dams. Following the same reasoning as for the first illustration, we apply the test based on $S_{n,A}^\theta$ in~\eqref{eq:SnAtheta} with $\theta=108/164$ and obtain an approximate p-value of 0.195 based on $B=10000$ multiplier bootstrap replicates. Hence, there is no evidence for a change in the dependence between $Q_A$ and $Q_N$.

\appendix

\section{Proofs of Propositions~\ref{prop:ferreira},~\ref{prop:weakdim2} and~\ref{prop:weakdim2break}} 
\label{sec:proofs}
\def\theequation{A.\arabic{equation}}
\setcounter{equation}{0}

Propositions~\ref{prop:ferreira} and~\ref{prop:weakdim2} will be corollaries of a more general result. Recall that $\Delta=\{ (s,s')\in[0,1]: s \le s'\}$ and let $\lambda_n(s,s')= (\ip{ns'} - \ip{ns})/n$ for $(s,s') \in \Delta$. Also, consider the process
\begin{equation}
\label{eq:Abn}
\Ab_n(s,s',t) =  \sqrt n \lambda_n(s,s') \{ \hat A_{\ip{ns}+1:\ip{ns'}}(t) - A(t)\}, \quad (s,s',t) \in \Delta \times [0,1],
\end{equation}
where $\hat A_{\ip{ns}+1:\ip{ns'}}$ is defined in~\eqref{eq:hatAkl}, and note that 
\begin{equation}
\label{eq:Abn01}
\Ab_n(0,1,t) = \sqrt n \{ \hat A_{1:n}(t) - A(t) \}, \quad t \in [0,1]
\end{equation}
is the process of interest in Proposition~\ref{prop:ferreira}.

\begin{theorem} \label{theo:twosided}
Under the conditions of Proposition~\ref{prop:weakdim2}, in the normed space $(\ell^\infty(\Delta \times [0,1]), \| \cdot \|_\infty)$, we have
$\Ab_n  \weak \Ab_C$, where
\begin{equation}
\label{eq:AbC}
\Ab_C(s,s',t) = - \{1+A(t) \}^2  \int_0^1 \Cb_C(s,s', y^{1-t},y^t) \, dy.
\end{equation}
\end{theorem}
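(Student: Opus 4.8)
The plan is to express everything in terms of the two-sided sequential empirical copula process and then transfer the conclusion through the smooth scalar map $z\mapsto z/(1-z)$ by a functional delta method. The starting point is an exact identity: for $a,b\in[0,1]$, $\max(a,b)=1-\int_0^1\ind(a\le y)\,\ind(b\le y)\,dy$. Applying this with $a=\hat U_{k:\ell,i}^{1/(1-t)}$ and $b=\hat V_{k:\ell,i}^{1/t}$, so that $\ind(a\le y)=\ind(\hat U_{k:\ell,i}\le y^{1-t})$ and $\ind(b\le y)=\ind(\hat V_{k:\ell,i}\le y^{t})$, and averaging over $i=k,\dots,\ell$ gives
\[
\hat S_{k:\ell}(t)=1-\int_0^1 C_{k:\ell}\bigl(y^{1-t},y^{t}\bigr)\,dy,\qquad\text{and likewise}\qquad S(t)=1-\int_0^1 C\bigl(y^{1-t},y^{t}\bigr)\,dy,
\]
the latter being the identity recorded in the remark on PQD copulas. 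Hence, with $\Cb_n(s,s',u,v)=\sqrt n\,\lambda_n(s,s')\{C_{\ip{ns}+1:\ip{ns'}}(u,v)-C(u,v)\}$ the two-sided sequential empirical copula process,
\[
\sqrt n\,\lambda_n(s,s')\bigl\{\hat S_{\ip{ns}+1:\ip{ns'}}(t)-S(t)\bigr\}=-\int_0^1\Cb_n\bigl(s,s',y^{1-t},y^{t}\bigr)\,dy .
\]

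Under $\Hc_0$ with $A$ continuously differentiable on $(0,1)$ --- which for an extreme-value copula is precisely the smoothness ensuring continuity of $\dot C_1,\dot C_2$ on $(0,1)\times[0,1]$ and $[0,1]\times(0,1)$, the condition required for empirical-copula asymptotics --- one has $\Cb_n\weak\Cb_C$ in $\ell^\infty(\Delta\times[0,1]^2)$, where $\Cb_C$ is the process of Proposition~\ref{prop:weakdim2}; this is the sequential version of the weak convergence of the empirical copula process (\citealp{Seg12}; see also \citealp{BucKojRohSeg14}). Since $\beta\mapsto\bigl((s,s',t)\mapsto\int_0^1\beta(s,s',y^{1-t},y^{t})\,dy\bigr)$ is a bounded linear operator from $\ell^\infty(\Delta\times[0,1]^2)$ into $\ell^\infty(\Delta\times[0,1])$, the continuous mapping theorem gives $\sqrt n\,\lambda_n\{\hat S_{\ip{n\cdot}+1:\ip{n\cdot}}-S\}\weak\Sb_C$, with $\Sb_C(s,s',t)=-\int_0^1\Cb_C(s,s',y^{1-t},y^{t})\,dy$.

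It remains to pass from $\hat S_{k:\ell}$ to $\hat A_{k:\ell}=\hat S_{k:\ell}/(1-\hat S_{k:\ell})$. With $\psi(z)=z/(1-z)$ one has the exact factorization $\Ab_n(s,s',t)=\bigl\{\sqrt n\,\lambda_n(s,s')(\hat S_{\ip{ns}+1:\ip{ns'}}(t)-S(t))\bigr\}\,\{(1-\hat S_{\ip{ns}+1:\ip{ns'}}(t))(1-S(t))\}^{-1}$, the second factor being the difference quotient of $\psi$. Since $A\le 1$ gives $1-S(t)=1/(1+A(t))\ge 1/2$, and since a short computation (using $\max(a,b)\le a+b$, $x^{p}\le x$ for $x\in[0,1]$, $p\ge1$, and comparing the rank averages $N^{-1}\sum_{j=1}^{N}(j/(N+1))^{p}$ with $\int_0^1 x^{p}\,dx$) shows $\hat S_{k:\ell}(t)\le 1-c_0$ for a universal $c_0>0$, uniformly in $k\le\ell$ and $t$, that second factor is everywhere bounded by $2/c_0$. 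On $\Delta_\delta^{n}=\{(s,s')\in\Delta:\lambda_n(s,s')\ge\delta\}$, tightness of $\Cb_n$ forces $\sup_{t}|\hat S_{\ip{ns}+1:\ip{ns'}}(t)-S(t)|=O_{\Prob}(1/(\sqrt n\,\delta))$, so the difference quotient converges uniformly there to $\psi'(S(t))=(1+A(t))^{2}$, and Slutsky's lemma yields $\Ab_n\weak\Ab_C$ on $\Delta_\delta^{n}\times[0,1]$.

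The main obstacle is removing the restriction $\lambda_n(s,s')\ge\delta$, since the subsample estimators need not be consistent on small subsamples. There I would use negligibility instead of consistency: $|\Ab_n(s,s',t)|\le (2/c_0)\sup_{u,v}|\Cb_n(s,s',u,v)|$, so $\sup_{s'-s\le\delta,\,t}|\Ab_n(s,s',t)|\le(2/c_0)\sup_{s'-s\le\delta,\,u,v}|\Cb_n(s,s',u,v)|$, whose limit in distribution is $(2/c_0)\sup_{s'-s\le\delta,\,u,v}|\Cb_C(s,s',u,v)|$, and this tends to $0$ almost surely as $\delta\downarrow0$ because $\Cb_C$ vanishes on the diagonal $\{s=s'\}$ and is uniformly continuous on the compact $\Delta\times[0,1]^2$; the same bound controls $\Ab_C$ itself on $\{s'-s\le\delta\}$. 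A standard glueing argument (fix $\delta$, let $n\to\infty$, then let $\delta\downarrow0$) then upgrades the convergence on $\Delta_\delta^{n}$ to $\Ab_n\weak\Ab_C$ on all of $\Delta\times[0,1]$. Finally, Proposition~\ref{prop:ferreira} is the special case $(s,s')=(0,1)$, for which $\Ab_C(0,1,t)=\Lb_C(t)$, and Proposition~\ref{prop:weakdim2} follows from the exact identity $\Db_n(s,t)=\tfrac{n-\ip{ns}}{n}\,\Ab_n(0,s,t)-\tfrac{\ip{ns}}{n}\,\Ab_n(s,1,t)$ together with the continuous mapping theorem, the limit being $\Db_C$ in~\eqref{eq:DbC} by~\eqref{eq:AbC}.
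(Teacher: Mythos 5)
Your proposal is correct and follows essentially the same route as the paper's proof: the identity $\hat S_{k:\ell}(t)=1-\int_0^1 C_{k:\ell}(y^{1-t},y^t)\,dy$, the reduction to the two-sided sequential empirical copula process $\Cb_n$, a deterministic uniform bound keeping $1-\hat S_{k:\ell}$ away from zero (the paper gets $1-\hat S_{k:\ell}(t)\ge 1/8$ by counting pseudo-observations below $3/4$; your rank-average comparison yields the same kind of universal constant, though the hint ``$x^p\le x$'' alone would only give $\hat S\le 1$ — it is the comparison with $\int_0^1 x^p\,dx=1/(p+1)$ summed over the two margins that does the work), and the split of the supremum at $s'-s=\delta$ with asymptotic equicontinuity of $\Cb_n$ handling the near-diagonal part. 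The only cosmetic difference is that the paper phrases the gluing as a uniform approximation $\sup|\Ab_n-\tilde\Ab_n|=o_\Prob(1)$ by an explicit linear functional of $\Cb_n$ on all of $\Delta\times[0,1]$, rather than as convergence on $\{\lambda_n(s,s')\ge\delta\}$ plus a $\delta\downarrow 0$ argument.
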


\begin{proof}
Since $\int_0^1\ind(m\leq y)dy=(1-m)$ for $0\leq m\leq1$, we can write, for $1\le k \leq \ell \le n$,
\begin{align*}
\nonumber
\hat S_{k:\ell}(t) 
&= 
1 - \int_{0}^1 \frac{1}{\ell-k+1} \sum_{i=k}^\ell \ind \big\{  \max  ( \hat U_{k:\ell,i}^{1/(1-t)},   \hat V_{k:\ell,i}^{1/t} ) \le y \big\} \, dy \\
\nonumber
&= 
1 - \int_{0}^1 \frac{1}{\ell-k+1} \sum_{i=k}^\ell \ind \big( \hat U_{k:\ell,i} \le y^{1-t},   \hat V_{k:\ell,i} \le y^t \big) \, dy \\
&=
1- \int_{0}^1 C_{k:\ell} (y^{1-t}, y^t) \, dy,
\end{align*}
where $C_{k:\ell}$ denotes the empirical copula, see~\eqref{eq:empcop}. Similarly, for $t \in [0,1]$,
\[
S(t) = 1-\{1+A(t)\}^{-1}= 1 - \int_0^1 y^{A(t)} \, dy  = 1- \int_0^1 C(y^{1-t}, y^t) \,dy.
\]
Therefore, introducing the notation 
\[
B_n(s,s',t)= \int_0^1 C_{\ip{ns}+1:\ip{ns'} } (y^{1-t}, y^t) \, dy \times \int_0^1 C (y^{1-t}, y^t)  \, dy,
\] 
we obtain, for any $(s,s') \in \Delta$ such that $\ip{ns} < \ip{ns'}$,
\begin{align} \label{eq:AbnCbn}
\Ab_n(s,s',t) 
&=
 \sqrt n \lambda_n(s,s') \left\{\frac{\hat S_{\ip{ns}+1:\ip{ns'}}(t)}{1- \hat S_{\ip{ns}+1:\ip{ns'}}(t)}- \frac{S(t)}{1-S(t)}  \right\} \nonumber \\
&= 
\frac{\sqrt n \lambda_{n}(s,s') }{B_n(s,s',t) } \int_0^1 C(y^{1-t},y^t) -  C_{\ip{ns}+1:\ip{ns'} } (y^{1-t}, y^t) \, dy \nonumber \\ 
&= 
- \frac{1}{B_n(s,s', t) } \int_0^1 \Cb_n(s,s',y^{1-t}, y^t)\, dy ,
\end{align}
where, for any $(s,s',u,v) \in \Delta \times [0,1]^2$,
\begin{equation*} 
\Cb_n(s,s',u,v)
=
\sqrt n \lambda_n(s,s') \left\{ C_{\ip{ns}+1_\ip{ns'}} (u,v) - C(u,v)\right\}
\end{equation*}
denotes the {\em two-sided sequential empirical copula process} studied in \cite{BucKoj14}. Since $A$ is continuously differentiable on $(0,1)$, we have, from Example~5.3 in \cite{Seg12}, that the first-order partial derivatives $\dot C_1$ and $\dot C_2$ of $C$ exist and are continuous on $(0,1) \times [0,1]$ and $[0,1] \times (0,1)$, respectively. Hence, from Theorem 3.4 in \cite{BucKoj14},
\begin{equation}
\label{eq:aeCbn}
\sup_{(s,s',u,v) \in \Delta \times [0,1]^2} | \Cb_n(s,s',u,v) - \bar \Cb_n(s,s',u,v) | = o_\Prob(1),
\end{equation}
where, for any $(s,s',u,v) \in \Delta \times [0,1]^2$,
\begin{equation}
\label{eq:barCbn}
\bar \Cb_n(s,t,u,v)  = \bar \Bb_n(s,s',u,v)  - \dot C_1(u,v) \bar \Bb_n(s,s',u,1) - \dot C_2(u,v) \bar \Bb_n(s,s',1,v),
\end{equation}
with the convention that $\dot C_1(x,\cdot) = \dot C_2(\cdot,x) = 0$ if $x \in \{0,1\}$, and where
$$
\bar \Bb_n(s,s',u,v) = \frac{1}{\sqrt{n}} \sum_{i=\ip{ns}+1}^{\ip{ns'}} \{ \ind \big( U_i \le u, V_i \le v \big) - C(u,v)\Big \}
$$
with $(U_i,V_i) = (F(X_i),G(Y_i))$, $i=1,\dots,n$.
From~\eqref{eq:aeCbn}, the fact that $\bar \Bb_n(0,\cdot,\cdot,\cdot) \weak \Bb_C$ in $(\ell^\infty([0,1]^3),\|\cdot\|_\infty)$ under $\Hc_{0,c}$ \citep[see, e.g.,][Theorem 2.12.1]{VanWel96}, the fact that $|\dot C_1| \leq 1$ and $|\dot C_2| \leq 1$ and the continuous mapping theorem, it follows that the process
\begin{align} \label{eq:tildeAbn}
\tilde \Ab_n(s,s',t) 
&= - \{1+A(t)\}^2 \times \int_0^1 \Cb_n(s,s',y^{1-t}, y^t)\, dy 
\end{align}
weakly converges in $(\ell^\infty(\Delta \times [0,1]), \|\cdot\|_\infty)$ to $\Ab_C$ in~\eqref{eq:AbC}. The theorem is thus proved if we show that
\begin{align}  \label{eq:sup}
\sup_{(s,s',t) \in\Delta \times [0,1]}|\Ab_n(s,s',t) - \tilde \Ab_n(s,s',t)|
= o_\Prob(1).
\end{align}
For that purpose, let us first show  that, for any $(s,s') \in \Delta$ with $\ip{ns}<\ip{ns'}$ and any $t\in [0,1]$, 
\begin{align} \label{eq:bnsbound}
B_n(s,s',t) \ge \frac{1}{16}.
\end{align}
First of all, we have, for any $u,v \in [3/4,1]$ and any $1 \le k < \ell \le n$,
\begin{align*}
C_{k:\ell}(u,v) 
&\ge 
C_{k:\ell}(3/4,3/4)
= \frac{1}{\ell-k+1} \sum_{i=k}^\ell \ind\{ \hat U_{k:\ell,i} \le 3/4, \hat V_{k:\ell,i} \le 3/4\}.
\end{align*}
The number of $\hat U_{k:\ell,i}$ is  $(\ell-k+1)$, and of those, exactly $\ip{3(\ell-k+2)/4}$ do not exceed $3/4$. The same is true for the $\hat V_{k:\ell,i}$. Hence, for at least
\[
\ip{3(\ell-k+2)/4} - \left\{ (\ell-k+1) -  \ip{3(\ell-k+2)/4} \right\}
\]
of the pairs, both pseudo-observations do not exceed $3/4$. As a consequence,
\[
C_{k:\ell}(3/4, 3/4) 
\ge 
\frac{2 \ip{3(\ell-k+2)/4} - (\ell-k+1)}{\ell-k+1} 
\ge
\frac{3}{2} \frac{\ell-k+2}{\ell-k+1} - 1 \ge \frac{1}{2}.
\]
In particular, we have $C_{k:\ell}(y^{1-t}, y^t) \ge 1/2$ for all $y$ such that $\min(y^{1-t},y^t) \ge 3/4$. Since $\min(y^{1-t},y^t) \ge y$, we get that
\begin{equation}
\label{eq:boundCkl}
\int_0^1 C_{k:\ell} (y^{1-t}, y^t) \, dy 
\ge
\int_{3/4}^1 \frac{1}{2} \, dy = \frac{1}{8}.
\end{equation}
Together with the fact that $\int_0^1 C(y^{1-t}, y^t) \, dy = \{ 1+A(t) \}^{-1} \ge 1/2$, the last display implies the bound in~\eqref{eq:bnsbound}.

In order to prove~\eqref{eq:sup}, notice first that, by~\eqref{eq:AbnCbn},~\eqref{eq:tildeAbn}, the triangle inequality and~\eqref{eq:bnsbound},
$$
\sup_{(s,s',t) \in\Delta \times [0,1]}|\Ab_n(s,s',t) - \tilde \Ab_n(s,s',t)| \leq 20 \sup_{(s,s',u,v) \in\Delta \times [0,1]^2}|\Cb_n(s,s',u,v)|, 
$$
both sides of the inequality being zero if the suprema are restricted to $(s,s') \in \Delta$ such that $\ip{ns} = \ip{ns'}$. Next, fix $\eps, \eta > 0$. Using the fact that that $\Cb_n$ vanishes when $s=s'$ and that $\Cb_n$ is asymptotically uniformly equicontinuous in probability, there exists $\delta \in (0,1)$ such that, for all sufficiently large $n$,
\begin{multline*}
\Prob \left\{ \sup_{(s,s',t) \in\Delta \times [0,1] \atop s' - s < \delta}|\Ab_n(s,s',t) - \tilde \Ab_n(s,s',t)| > \eps \right\} \\ \leq \Prob \left\{ 20 \sup_{(s,s',u,v) \in\Delta \times [0,1]^2 \atop s' - s < \delta}|\Cb_n(s,s',u,v)| > \eps \right\} < \eta.
\end{multline*}
The proof of~\eqref{eq:sup} will thus be complete if we show that 
$$
\sup_{(s,s',t) \in\Delta \times [0,1] \atop s' - s \geq \delta}|\Ab_n(s,s',t) - \tilde \Ab_n(s,s',t)| = o_\Prob(1),
$$
which, in view of~\eqref{eq:bnsbound}, would be an immediate consequence of the fact that
\[
\sup_{(s,s',t) \in\Delta \times [0,1] \atop s' - s \geq \delta}|B_n(s,s',t) - \{ 1+A(t) \}^{-2}| = o_\Prob(1).
\]
Using again the identity $\{1+A(t)\}^{-1}=  \int_0^1 C(y^{1-t}, y^t) \,dy$, the last display follows from the fact that
\begin{multline*}
\sup_{(s,s',t) \in\Delta \times [0,1] \atop s' - s \geq \delta} \left| \int_0^1 C_{\ip{ns}+1: \ip{ns'} } (y^{1-t}, y^t) \, dy - \{ 1+A(t) \}^{-1} \right| \\
= n^{-1/2} \sup_{(s,s',t) \in\Delta \times [0,1] \atop s' - s \geq \delta} \frac{1}{\lambda_n(s,s')} \left| \int_0^1 \Cb_n (s,s',y^{1-t}, y^t) \, dy \right|
=o_\Prob(1),
\end{multline*}
which completes the proof.
\end{proof}

\begin{proof}[\bf Proof of Proposition~\ref{prop:ferreira}.]
The proposition immediately follows from~\eqref{eq:Abn01} and Theorem~\ref{theo:twosided}.
\end{proof}

\begin{proof}[\bf Proof of Proposition~\ref{prop:weakdim2}.] 
The assertion is a mere consequence of the fact that, under $\Hc_0$,
\begin{align} \label{eq:Dbn}
\Db_n(s,t) 
&= 
\lambda_n(s,1) \Ab_n(0,s,t) - \lambda_n(0,s) \Ab_n(s,1,t), \quad (s,t) \in [0,1]^2,
\end{align}
Theorem~\ref{theo:twosided}, and the continuous mapping theorem.
\end{proof}

For the proof of Proposition~\ref{prop:weakdim2break}, another lemma, generalizing the identity in~\eqref{eq:aeCbn}, is needed.
\begin{lemma} \label{lem:aeCbnH1}
Under $\Hc_{1,m} \cap \Hc_{0,c}$ and if $A$ is continuously differentiable on $(0,1)$,
\begin{align*}
&\sup_{(s,s',u,v) \in ( \Delta \cap[0,\theta]^2 ) \times [0,1]^2} | \Cb_n(s,s',u,v) - \bar \Cb_n(s,s',u,v) | = o_\Prob(1), \\
&\sup_{(s,s',u,v) \in ( \Delta\cap[\theta,1]^2 ) \times [0,1]^2} | \Cb_n(s,s',u,v) - \bar \Cb_n(s,s',u,v) | = o_\Prob(1).
\end{align*}
\end{lemma}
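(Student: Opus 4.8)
The plan is to deduce both displays from the linearization \eqref{eq:aeCbn} --- i.e.\ from Theorem~3.4 of \cite{BucKoj14} --- applied not to the whole sample but separately to the two homogeneous stretches $\{1,\dots,m\}$ and $\{m+1,\dots,n\}$ on which the marginal distributions are constant. Under $\Hc_{1,m}\cap\Hc_{0,c}$, the vectors $(X_i,Y_i)$, $i=1,\dots,m$, are i.i.d.\ with common continuous margins $F^{(1)},G^{(1)}$ and copula $C$, and the vectors $(X_i,Y_i)$, $i=m+1,\dots,n$, are i.i.d.\ with common continuous margins $F^{(m+1)},G^{(m+1)}$ and the same copula $C$. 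The key elementary observation is that for $(s,s')\in\Delta\cap[0,\theta]^2$ one has $\ip{ns}\le\ip{ns'}\le\ip{n\theta}=m$, so the block $\{\ip{ns}+1,\dots,\ip{ns'}\}$ lies entirely in $\{1,\dots,m\}$, and conversely, as $(s,s')$ ranges over $\Delta\cap[0,\theta]^2$ the pair $(\ip{ns},\ip{ns'})$ ranges over all of $\{(j,k):0\le j\le k\le m\}$; symmetrically, $(s,s')\in\Delta\cap[\theta,1]^2$ produces exactly the sub-blocks of $\{m+1,\dots,n\}$. In particular no block straddling the break point $m$ is ever involved, which is precisely why the lemma is split into these two restricted statements.

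I would then carry out the reduction for the first supremum, the second being identical with the roles of the two stretches interchanged. Since the pseudo-observations in \eqref{eq:pseudoobs}, hence the empirical copulas in \eqref{eq:empcop} and the process $\Cb_n$, depend on the data within a block only through the within-block ranks of the $X$'s and of the $Y$'s, and since the margins are continuous (so ties occur with probability zero), I may replace $(X_i,Y_i)$ by its probability-integral transform without changing $\Cb_n$ on the relevant index set; with $(U_i,V_i)$ in \eqref{eq:barCbn} understood as these transforms, which are i.i.d.\ with law $C$ on each stretch, the same replacement leaves $\bar\Cb_n$ unchanged there as well. Restricted to $(\Delta\cap[0,\theta]^2)\times[0,1]^2$, the pair $(\Cb_n,\bar\Cb_n)$ then equals, up to the deterministic factor $\sqrt{m/n}$ and a reparametrization of the sequential parameter, the two-sided sequential empirical copula process of the size-$m$ sample $(X_1,Y_1),\dots,(X_m,Y_m)$ and its linearization in the sense of \eqref{eq:barCbn}; here $m/n=\ip{n\theta}/n\to\theta\in(0,1)$, so in particular $m\to\infty$. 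Because $A$ is continuously differentiable on $(0,1)$, the partial derivatives $\dot C_1,\dot C_2$ are continuous on $(0,1)\times[0,1]$ and $[0,1]\times(0,1)$ respectively (Example~5.3 in \cite{Seg12}), so Theorem~3.4 of \cite{BucKoj14} applies to that size-$m$ sample and shows that it and its linearization are uniformly $o_\Prob(1)$ apart; multiplying by the bounded factor $\sqrt{m/n}\le1$ preserves this, which gives $\sup|\Cb_n-\bar\Cb_n|=o_\Prob(1)$ over $(\Delta\cap[0,\theta]^2)\times[0,1]^2$. Applying the same argument to the size-$(n-m)$ sample $(X_{m+1},Y_{m+1}),\dots,(X_n,Y_n)$, with $\sqrt{(n-m)/n}\to\sqrt{1-\theta}$, yields the second display.

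The only real work is the bookkeeping just sketched: verifying that the restriction of $(s,s')$ to $[0,\theta]^2$ (resp.\ $[\theta,1]^2$) produces all and only the sub-blocks of $\{1,\dots,m\}$ (resp.\ $\{m+1,\dots,n\}$); checking that the rank-invariance reduction is valid almost surely; and checking that the normalization mismatch between the global factor $\sqrt n$ in $\Cb_n$ and the subsample factors $\sqrt m$, $\sqrt{n-m}$ is a bounded deterministic scalar with a positive limit, uniformly over the relevant index set, so it does not affect the $o_\Prob(1)$ rate. I do not anticipate any genuinely new probabilistic estimate: once the problem has been localized to the two homogeneous stretches, the statement is an immediate blockwise instance of \eqref{eq:aeCbn}.
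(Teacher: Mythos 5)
Your proof is correct and follows essentially the same route as the paper's, whose entire argument is: the restricted suprema are measurable and equal in distribution to the corresponding suprema under $\Hc_{0,m}\cap\Hc_{0,c}$ (because, by rank invariance and the homogeneity of each stretch, they depend on the data only through quantities whose joint law is unchanged), so \eqref{eq:aeCbn} applies. The only cosmetic difference is that you identify the restricted processes pathwise with the size-$m$ and size-$(n-m)$ subsample processes and invoke Theorem~3.4 of \cite{BucKoj14} at those sample sizes, whereas the paper invokes equality in distribution with the stationary size-$n$ sample and cites \eqref{eq:aeCbn} directly; both reductions rest on the same observation.
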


\begin{proof}
Both suprema are measurable, and they are equal in distribution to the same suprema calculated under $\Hc_{0,m}\cap \Hc_{0,c}$. The assertions then follow from \eqref{eq:aeCbn}.
\end{proof}

\begin{proof}[\bf Proof of Proposition~\ref{prop:weakdim2break}]
Let
\begin{equation} \label{eq:Abntheta}
\Ab_n^\theta(s,s',t) =  \sqrt n \lambda_n(s,s') \{ \hat A_{\ip{ns}+1:\ip{ns'}}^\theta(t) - A(t)\}, \quad (s,s',t) \in \Delta \times [0,1],
\end{equation}
where $\hat A_{\ip{ns}+1:\ip{ns'}}^\theta$ is defined in~\eqref{eq:Aklbreak}. We shall first show that
\begin{equation} \label{eq:ae1}
\sup_{(s,t) \in [0,\theta] \times [0,1]} | \Ab_n^\theta(s,1,t) - \tilde \Ab_n(s,\theta,t) - \tilde \Ab_n(\theta,1,t) | = o_\Prob(1)
\end{equation}
and that 
\begin{equation} \label{eq:ae2}
\sup_{(s,t) \in [\theta,1] \times [0,1]} | \Ab_n^\theta(0,s,t) - \tilde \Ab_n(0,\theta,t) - \tilde \Ab_n(\theta,s,t) | = o_\Prob(1),
\end{equation}
where $\tilde \Ab_n$ is defined in~\eqref{eq:tildeAbn}. Proceeding as in~\eqref{eq:AbnCbn}, for $(s,t) \in [0,\theta] \times [0,1]$, we have
\begin{align*}
\Ab_n^\theta(s,1,t) &= \sqrt{n} \lambda_n(s,1) \frac{\hat S_{\ip{ns}+1:n}^\theta(t) - S(t)}{\{1-\hat S_{\ip{ns}+1:n}^\theta(t) \} \{1 - S(t) \}} \\
&=  - \frac{\int_0^1 \Cb_n(s,\theta,y^{1-t},y^t) + \Cb_n(\theta,1,y^{1-t},y^t) \, dy}{\{1-\hat S_{\ip{ns}+1:n}^\theta(t) \} \{1 - S(t) \}}.
\end{align*}
Using the fact that $3/2 \leq \{1 - S(t) \}^{-1} = \{ 1 + A(t) \} \leq 2$ for all $t \in [0,1]$, the supremum on the left of~\eqref{eq:ae1} is smaller than $2 I_{1,n} \times I_{2,n}$, where
$$
I_{1,n} = \sup_{(s,t) \in [0,\theta] \times [0,1]} \left| \int_0^1 \Cb_n(s,\theta,y^{1-t},y^t) + \Cb_n(\theta,1,y^{1-t},y^t) \, dy \right| 
$$
and 
$$
I_{2,n} = \sup_{(s,t) \in [0,\theta] \times [0,1]} \left| \{ 1 - \hat S_{\ip{ns}+1:n}^\theta(t) \}^{-1} - \{ 1 - S(t) \}^{-1} \right|.
$$
From Lemma~\ref{lem:aeCbnH1}, the weak convergence of $\bar \Cb_n \weak \Cb_C$ in $(\ell^\infty(\Delta \times [0,1]^2), \| \cdot \|_\infty)$ under $\Hc_{0,c}$, and the continuous mapping theorem, $I_{1,n}= O_\Prob(1)$. Concerning $I_{2,n}$, by the definition of $S_{k:\ell}^\theta$ in~\eqref{eq:Sklbreak}, we have that
$$
 \sup_{(s,t) \in [0,\theta] \times [0,1]} \left| \hat S_{\ip{ns}+1:n}^\theta(t) - S(t) \right| 
 = I_{1,n} \times n^{-1/2} \sup_{s \in [0,\theta]} \{\lambda_n(s,1) \}^{-1} = o_\Prob(1).
$$
Hence, $\{ (s,t) \mapsto \hat S_{\ip{ns}+1:n}^\theta(t) \} \p \{ (s,t) \mapsto S(t) \}$ in $(\ell^\infty([0,\theta] \times [0,1]), \| \cdot \|_\infty)$, which, from the continuous mapping theorem, implies that $I_{2,n} = o_\Prob(1)$. This completes the proof of~\eqref{eq:ae1}. The proof of~\eqref{eq:ae2} is similar.
  
To finish the proof, notice first that, under $\Hc_{0,c} \cap \Hc_{1,m}$, 
$$
\Db_n^\theta(s,t) = \lambda_n(s,1) \Ab_n^\theta(0,s,t) - \lambda_n(0,s) \Ab_n^\theta(s,1,t), \quad (s,t) \in [0,1]^2,
$$
where $\Ab_n^\theta$ is defined in~\eqref{eq:Abntheta}. Next, let 
\begin{multline*}
\bar \Db_n^\theta(s,t) = \lambda_n(s,1) \{ \bar \Ab_n(0,s \wedge \theta,t) + \bar \Ab_n(s \wedge \theta,s,t)\} \\ - \lambda_n(0,s) \{ \bar \Ab_n(s,s \vee \theta,t) + \bar \Ab_n(s \vee \theta,1,t) \}, \quad (s,t) \in [0,1]^2,
\end{multline*}
where $\wedge$ and $\vee$ denote the minimum and maximum operators, respectively,  
\begin{equation} \label{eq:barAbn}
\bar \Ab_n(s,s',t) = - \{1+A(t)\}^2 \times \int_0^1 \bar \Cb_n(s,s',y^{1-t}, y^t)\, dy, 
\end{equation}
and $\bar \Cb_n$ is defined in~\eqref{eq:barCbn}, and let
\begin{multline*}
\tilde \Db_n^\theta(s,t) = \lambda_n(s,1) \{ \tilde \Ab_n(0,s \wedge \theta,t) + \tilde \Ab_n(s \wedge \theta,s,t)\} \\ - \lambda_n(0,s) \{ \tilde \Ab_n(s,s \vee \theta,t) + \tilde \Ab_n(s \vee \theta,1,t) \}, \quad (s,t) \in [0,1]^2.
\end{multline*} 
The desired result shall then follow from the fact that
\begin{align}
\label{eq:aetildeDntheta}
\sup_{(s,t) \in [0,1]^2} | \Db_n^\theta(s,t) - \tilde \Db_n^\theta(s,t) | = o_\Prob(1), \\
\label{eq:aebarDntheta}
\sup_{(s,t) \in [0,1]^2} | \tilde \Db_n^\theta(s,t) - \bar \Db_n^\theta(s,t) | = o_\Prob(1),
\end{align}
and
\begin{equation}
\label{eq:weakbarDntheta}
\bar \Db_n^\theta \weak \Db_C \quad \mbox{in} \quad (\ell^\infty([0,1]^2),\|\cdot\|_\infty),
\end{equation}
where $\Db_C$ is given in~\eqref{eq:DbC}. To show that~\eqref{eq:aetildeDntheta} holds, it suffices to restrict the supremum in~\eqref{eq:aetildeDntheta} successively to $(s,t) \in [0,\theta] \times [0,1]$ and $(s,t) \in [\theta,1] \times [0,1]$ and use the triangle inequality,~\eqref{eq:sup}, and~\eqref{eq:ae1} and~\eqref{eq:ae2}, respectively. The fact that~\eqref{eq:aebarDntheta} holds is obtained from the triangle inequality and Lemma~\ref{lem:aeCbnH1}. Finally,~\eqref{eq:weakbarDntheta} follows from the fact that, for any $0 \leq s \leq s' \leq s'' \leq 1$, $\bar \Ab_n(s,s'',\cdot) = \bar \Ab_n(s,s',\cdot) + \bar \Ab_n(s',s'',\cdot)$, the weak convergence of $\bar \Cb_n$ under $\Hc_{0,c}$ and the continuous mapping theorem.
\end{proof}

\section{Proofs of Propositions~\ref{prop:bootdim2} and~\ref{prop:bootdim2break}}
\label{sec:proofsboot}
\def\theequation{B.\arabic{equation}}
\setcounter{equation}{0}

Just as for the non-bootstrap results in Propositions~\ref{prop:ferreira} and~\ref{prop:weakdim2}, Propositions~\ref{prop:bootdim2} and~\ref{prop:bootdim2break} can be conveniently proved using appropriate two-sided sequential processes. For $(s,s',u,v) \in \Delta\times [0,1]^2$ and $b=1,\dots,B$, let
\begin{equation}
\label{eq:barBnb}
\bar \Bb_n^{(b)}(s,s',u,v) = \frac{1}{\sqrt{n}} \sum_{i=\ip{ns}+1}^{\ip{ns'}} \xi_i^{(b)}\Big\{ \ind \big( U_i \le u, V_i \le v \big) - C(u,v)\Big \},
\end{equation}
and
\begin{align}
\label{eq:tildeBbnb}
\tilde \Bb_n^{(b)}(s, s',u,v)
=
\frac{1}{\sqrt{n}}\sum_{i=\ip{ns}+1}^{\ip{ns'}}\xi_i^{(b)} \Big\{\ind \big( \hat{U}_{\ip{ns}+1:\ip{ns'},i} \le u, \hat{V}_{\ip{ns}+1:\ip{ns'},i} \le v \big) 
- C(u,v)\Big\}.
\end{align}
Next, for $(s,s',u,v) \in \Delta \times [0,1]^2$ and $b=1,\dots,B$, let
\begin{align*}
\bar \Cb_n^{(b)}(s,s',u,v) = \bar \Bb_n^{(b)}(s,s',u,v)  - \dot C_1(u,v) \bar \Bb_n^{(b)}(s,s',y^{1-t},1) - \dot C_2(u,v) \bar \Bb_n^{(b)}(s,s',1,y^t),
\end{align*}
and
$$
\tilde \Cb_n^{(b)}(s,s',u,v) = \tilde \Bb_n^{(b)}(s,s',u,v) - \dot C_1(u,v) \tilde \Bb_n^{(b)}(s,s',u,1) - \dot C_2(u,v) \tilde \Bb_n^{(b)}(s,s',1,v).
$$
Furthermore, from~\eqref{eq:evc}, for $(y,t) \in (0,1) \times [0,1]$, we obtain that
\begin{equation}
\label{eq:dotc1}
\dot{C}_1(y^{1-t},y^t)=\left\{ A(t)-tA'(t)\right \} y^{A(t)-(1-t)}
\end{equation}
and
\begin{equation}
\label{eq:dotc2}
\dot{C}_2(y^{1-t},y^t)=\left \{ A(t)+(1-t)A'(t)\right \}y^{A(t)-t},
\end{equation}
where $A'$ is extended by continuity at 0 and 1. Indeed, from the fact that $\max(t,1-t) \le A(t) \le 1$ for all $t \in [0,1]$, we have that $A'(t) \in [-1,1]$ for all $t \in (0,1)$, and, from the convexity of $A$ on $[0,1]$, we have that $A'$ is increasing on $(0,1)$. In addition, we adopt the usual convention that $0^0 = 1$. Finally, for $b=1,\dots,B$ and $(s,s',t) \in \Delta \times [0,1]$, let
\begin{equation}
\label{eq:barAbnb}
\bar \Ab_n^{(b)}(s,s',t) = - \{1 + A(t)\}^2 \times \int_0^1 \bar \Cb_n^{(b)}(s,s',y^{1-t},y^t) \, dy.
\end{equation}
and
\begin{equation}
\label{eq:tildeAbnb}
\tilde \Ab_n^{(b)}(s,s',t) = - \{1 + A(t)\}^2 \times \int_0^1 \tilde \Cb_n^{(b)}(s,s',y^{1-t},y^t) \, dy.
\end{equation}

\begin{lemma}\label{lem:multbarAn}
Under $\Hc_{0,c}$  and if $A$ is continuously differentiable on $(0,1)$,
$$
\Big( \bar \Ab_n, \bar \Ab_n^{(1)},\dots, \bar \Ab_n^{(B)} \Big) \weak \Big( \Ab_C,\Ab_C^{(1)},\dots,\Ab_C^{(B)} \Big)
$$
in $(\ell^\infty(\Delta \times [0,1]), \| \cdot \|_\infty)^{B+1}$, where $\bar \Ab_n$ is defined in~\eqref{eq:barAbn}, $\Ab_C$ is defined in~\eqref{eq:AbC} and $\Ab_C^{(1)},\dots,\Ab_C^{(B)}$ are independent copies of $\Ab_C$.
\end{lemma}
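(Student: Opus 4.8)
The plan is to realize $\bar\Ab_n$ and every bootstrap replicate $\bar\Ab_n^{(b)}$ as the image, under one and the same fixed bounded linear map, of the two-sided sequential empirical process of the unobservable i.i.d.\ pairs $(U_i,V_i)=(F(X_i),G(Y_i))$, $i=1,\dots,n$, and then to invoke the joint multiplier central limit theorem for that process together with the continuous mapping theorem. Concretely, inspecting~\eqref{eq:barCbn}, \eqref{eq:barAbn} and~\eqref{eq:barAbnb} shows that $\bar\Ab_n=\Psi_1(\Psi_2(\bar\Bb_n))$ and $\bar\Ab_n^{(b)}=\Psi_1(\Psi_2(\bar\Bb_n^{(b)}))$, where, for $g,h\in\ell^\infty(\Delta\times[0,1]^2)$,
\[
\Psi_2(g)(s,s',u,v)=g(s,s',u,v)-\dot C_1(u,v)\,g(s,s',u,1)-\dot C_2(u,v)\,g(s,s',1,v)
\]
and
\[
\Psi_1(h)(s,s',t)=-\{1+A(t)\}^2\int_0^1 h(s,s',y^{1-t},y^t)\,dy .
\]
Since $|\dot C_1|\le 1$ and $|\dot C_2|\le 1$ (as recorded below~\eqref{eq:dotc2}), $\Psi_2$ is a bounded linear operator on $\ell^\infty(\Delta\times[0,1]^2)$, hence continuous. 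Restricted to the subspace of functions with Lebesgue-measurable sections $y\mapsto h(s,s',y^{1-t},y^t)$ — a subspace that contains $\bar\Cb_n$, all $\bar\Cb_n^{(b)}$ and a version of $\Cb_C$, exactly as in the proof of Theorem~\ref{theo:twosided} — the operator $\Psi_1$ is bounded and linear as well, because $0\le\{1+A(t)\}^2\le 4$ and integration over $[0,1]$ is non-expansive for the supremum norm; thus $\Psi_1$ is continuous on that subspace. Applied coordinatewise to a $(B+1)$-tuple, the composition $\Psi_1\circ\Psi_2$ is therefore continuous.

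It thus suffices to prove the joint weak convergence
\[
\bigl(\bar\Bb_n,\bar\Bb_n^{(1)},\dots,\bar\Bb_n^{(B)}\bigr)\weak\bigl(\Bb_C,\Bb_C^{(1)},\dots,\Bb_C^{(B)}\bigr)
\]
in $(\ell^\infty(\Delta\times[0,1]^2),\|\cdot\|_\infty)^{B+1}$, where $\Bb_C^{(1)},\dots,\Bb_C^{(B)}$ are independent copies of $\Bb_C$ that are moreover independent of $\Bb_C$. Here $\bar\Bb_n$ in~\eqref{eq:barBnb} and the $\bar\Bb_n^{(b)}$ are built solely from the genuine i.i.d.\ pairs $(U_i,V_i)$ — no pseudo-observations enter — so this is nothing but the multiplier CLT for the sequential empirical process indexed by the Donsker class formed from the lower-left rectangles of $[0,1]^2$ together with the time-truncation index $(s,s')\in\Delta$. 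It follows from the conditional multiplier CLT \citep[see, e.g.,][Section~2.9]{VanWel96} combined with its sequential version established in \citet{BucKoj14}; the asserted independence of the limit copies, conditionally on the data and hence unconditionally, is part of the conclusion of that theorem.

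Granting the previous display, the lemma is immediate from the continuous mapping theorem: $\Psi_2$ sends $\bigl(\bar\Bb_n,\bar\Bb_n^{(1)},\dots,\bar\Bb_n^{(B)}\bigr)$ to $\bigl(\bar\Cb_n,\bar\Cb_n^{(1)},\dots,\bar\Cb_n^{(B)}\bigr)$ with limit $\bigl(\Cb_C,\Cb_C^{(1)},\dots,\Cb_C^{(B)}\bigr)$, and $\Psi_1$ then sends the latter to $\bigl(\bar\Ab_n,\bar\Ab_n^{(1)},\dots,\bar\Ab_n^{(B)}\bigr)$ with limit $\bigl(\Ab_C,\Ab_C^{(1)},\dots,\Ab_C^{(B)}\bigr)$, where $\Ab_C^{(b)}=\Psi_1(\Psi_2(\Bb_C^{(b)}))$ is a copy of $\Ab_C=\Psi_1(\Psi_2(\Bb_C))$ obtained through the same deterministic map and the $\Ab_C^{(b)}$ are mutually independent and independent of $\Ab_C$ because the $\Bb_C^{(b)}$ are. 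This is exactly the statement of the lemma.

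The only non-routine ingredient is the joint multiplier CLT for the two-sided sequential empirical process with independent limit copies; everything else is soft (two bounded linear maps plus continuous mapping). That ingredient, however, is already essentially available: the more delicate one-sided counterpart based on pseudo-observations is Theorem~3.4 and the surrounding results of \citet{BucKoj14}, and the arguments there cover, \emph{a fortiori}, the simpler process based on the true observations used here.
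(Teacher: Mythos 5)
Your proposal is correct and takes essentially the same route as the paper: the paper's proof also reduces the claim to the joint multiplier CLT for the two-sided sequential empirical process of the true observations (Theorem~2.1 of \cite{BucKoj14}), uses $|\dot C_1|\le 1$ and $|\dot C_2|\le 1$ to pass to the $\bar\Cb_n$-level processes, and then applies the continuous mapping theorem to the integration map. Your explicit factorization into the two bounded linear maps $\Psi_1$ and $\Psi_2$ just spells out what the paper compresses into two invocations of the continuous mapping theorem.
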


\begin{proof}
From Theorem~2.1 of \cite{BucKoj14}, the fact that $|\dot C_1| \leq 1$ and $|\dot C_2| \leq 1$, the continuous mapping theorem and~\eqref{eq:aeCbn}, we have that, under $\Hc_{0,c}$,
$$
\Big(\bar \Cb_n, \bar \Cb_n^{(1)},\dots, \bar \Cb_n^{(B)} \Big) \weak \Big( \Cb_C,\Cb_C^{(1)},\dots,\Cb_C^{(B)} \Big),
$$
in $(\ell^\infty(\Delta \times [0,1]^2), \| \cdot \|_\infty)^{B+1}$, where $\bar \Cb_n$ and $\Cb_C$ are defined in~\eqref{eq:barCbn} and Proposition~\ref{prop:weakdim2}, respectively, and  $\Cb_C^{(1)},\dots,\Cb_C^{(B)}$ are independent copies of~$\Cb_C$. The desired follows from the continuous mapping theorem.
\end{proof}

\begin{lemma}\label{lem:aetildeAnb}
Under $\Hc_0$  and if $A$ is continuously differentiable on $(0,1)$, for any $b=1,\dots,B$,
$$
\sup_{(s,s',t) \in \Delta \times [0,1]} | \tilde \Ab_n^{(b)}(s,s',t) - \bar \Ab_n^{(b)}(s,s',t) | = o_\Prob(1).
$$
\end{lemma}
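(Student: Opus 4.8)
\emph{Overview and reduction.} The plan is to strip off the known functional layers of $\tilde\Ab_n^{(b)}-\bar\Ab_n^{(b)}$ and reduce everything to a single estimate for multiplier empirical processes, which I will then prove by a conditioning argument parallel to the one behind~\eqref{eq:aeCbn}. By~\eqref{eq:barAbnb}--\eqref{eq:tildeAbnb}, the processes $\bar\Ab_n^{(b)}$ and $\tilde\Ab_n^{(b)}$ are the images of $\bar\Cb_n^{(b)}$ and $\tilde\Cb_n^{(b)}$ under one and the same linear map $f\mapsto\big\{(s,s',t)\mapsto-\{1+A(t)\}^2\int_0^1 f(s,s',y^{1-t},y^t)\,dy\big\}$, whose operator norm between the relevant $\ell^\infty$-spaces is at most $4$ because $1+A(t)\le 2$; in turn, $\bar\Cb_n^{(b)}$ and $\tilde\Cb_n^{(b)}$ are the images of $\bar\Bb_n^{(b)}$ and $\tilde\Bb_n^{(b)}$ under the common map with coefficients $1$, $-\dot C_1$, $-\dot C_2$, of operator norm at most $3$ since $|\dot C_1|\le 1$ and $|\dot C_2|\le 1$ (recalled below~\eqref{eq:dotc2}). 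Hence $\sup_{\Delta\times[0,1]}|\tilde\Ab_n^{(b)}-\bar\Ab_n^{(b)}|\le 12\,\varrho_n$, where $\varrho_n:=\sup_{\Delta\times[0,1]^2}|\tilde\Bb_n^{(b)}-\bar\Bb_n^{(b)}|$, and it remains to prove $\varrho_n=o_\Prob(1)$.

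\emph{The main estimate.} Writing $k=\ip{ns}+1$ and $\ell=\ip{ns'}$, one has by~\eqref{eq:tildeBbnb} and~\eqref{eq:barBnb} that $\tilde\Bb_n^{(b)}(s,s',u,v)-\bar\Bb_n^{(b)}(s,s',u,v)=n^{-1/2}\sum_{i=k}^\ell\xi_i^{(b)}g_i$, with $g_i=\ind(\hat U_{k:\ell,i}\le u,\hat V_{k:\ell,i}\le v)-\ind(U_i\le u,V_i\le v)\in\{-1,0,1\}$ and $(U_i,V_i)=(F(X_i),G(Y_i))$. Since the $\xi_i^{(b)}$ are standard normal and independent of the data, conditionally on the data this is a centered Gaussian process in $(s,s',u,v)$ that takes only $O(n^4)$ distinct values (it depends on $(s,s')$ only through $(k,\ell)$, of which there are $O(n^2)$, and, for fixed $(k,\ell)$, on $(u,v)$ only through their position relative to the $O(n)$ thresholds at which the indicators jump). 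Its conditional variance at $(s,s',u,v)$ is $n^{-1}\#\{i\in\{k,\dots,\ell\}:g_i\ne 0\}$; by~\eqref{eq:pseudoobs} and $\Hc_{0,m}$ one has $|\hat U_{k:\ell,i}-U_i|\le\delta_n^X+(\ell-k+2)^{-1}$, where $\delta_n^X$ is the sup-distance from the identity to the empirical c.d.f.\ of $U_k,\dots,U_\ell$ (and similarly for $V$), so a nonzero $g_i$ forces $U_i$ (or $V_i$) into a band of half-width $\delta_n^X+(\ell-k+2)^{-1}$ (or $\delta_n^Y+(\ell-k+2)^{-1}$) about $u$ (or $v$); counting indices in $\{k,\dots,\ell\}$ and bounding the empirical mass of such a band gives
\[
\sup_{(s,s',u,v)}\frac1n\#\{i:g_i\ne 0\}\ \le\ \frac{\ell-k+1}{n}\Big(4\delta_n^X+4\delta_n^Y+4(\ell-k+2)^{-1}\Big).
\]
Now $\tfrac{\ell-k+1}{n}\delta_n^X=n^{-1/2}\sup_u|\alpha_n(s,s',u)|$ with $\alpha_n(s,s',u)=n^{-1/2}\sum_{i=\ip{ns}+1}^{\ip{ns'}}\{\ind(U_i\le u)-u\}$ the sequential empirical process of the (uniform) first margins, which satisfies $\sup_{\Delta\times[0,1]}|\alpha_n|=O_\Prob(1)$, and likewise for the second margins; also $\tfrac{\ell-k+1}{n}(\ell-k+2)^{-1}\le n^{-1}$. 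Hence the conditional variance of the process above is $O_\Prob(n^{-1/2})$ uniformly over $\Delta\times[0,1]^2$, and a union bound over its $O(n^4)$ values together with the Gaussian tail bound yields $\varrho_n=O_\Prob(n^{-1/4}\sqrt{\log n})=o_\Prob(1)$, which completes the proof.

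\emph{An alternative route.} The estimate $\varrho_n=o_\Prob(1)$ is the multiplier counterpart of~\eqref{eq:aeCbn} (Theorem~3.4 of \cite{BucKoj14}) and can instead be extracted from the asymptotic representations for the sequential empirical copula process and its multiplier versions established there, modulo the harmless change from the centering $C_{\ip{ns}+1:\ip{ns'}}$ (used in $\check\Bb_n^{(b)}$ of~\eqref{eq:checkBnb}) to $C$ (used in $\tilde\Bb_n^{(b)}$); the latter contributes $\{C_{\ip{ns}+1:\ip{ns'}}(u,v)-C(u,v)\}\,n^{-1/2}\sum_i\xi_i^{(b)}$, which is $O_\Prob(n^{-1/4}\sqrt{\log n})$ uniformly on splitting according to whether $\ip{ns'}-\ip{ns}$ is below or above $\sqrt n$.

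\emph{Main obstacle.} In either approach the delicate point is the \emph{uniform-in-$(s,s')$} control of the conditional variance: a short subwindow contributes the product of its small length $(\ell-k+1)/n$ and the large fluctuation $\sqrt n/(\ell-k+1)$ carried by the corresponding subsample marginal empirical process, and it is only the exact cancellation of these two factors — which is precisely what the maximal inequality $\sup_{\Delta\times[0,1]}|\alpha_n|=O_\Prob(1)$ encodes — that delivers the rate $O_\Prob(n^{-1/2})$ for \emph{all} window lengths. One must also keep track of the fact that the pseudo-observations $\hat U_{k:\ell,i}$ are re-ranked within each subwindow, which is what prevents indexing by a fixed function class and is responsible for the mild $\sqrt{\log n}$ loss.
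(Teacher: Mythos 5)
Your proof is correct, and for the central estimate it takes a genuinely different route from the paper. The first reduction is the same: both arguments pass from $\tilde\Ab_n^{(b)}-\bar\Ab_n^{(b)}$ to $\sup_{\Delta\times[0,1]^2}|\tilde\Bb_n^{(b)}-\bar\Bb_n^{(b)}|$ via the bounded linear maps (integration against $dy$, multiplication by $\{1+A\}^2\le 4$ and by $\dot C_1,\dot C_2$ with $|\dot C_j|\le 1$). Where you diverge is in how that supremum is killed. The paper splits $\tilde\Bb_n^{(b)}-\bar\Bb_n^{(b)}$ as $(\tilde\Bb_n^{(b)}-\check\Bb_n^{(b)})+(\check\Bb_n^{(b)}-\bar\Bb_n^{(b)})$, imports the second piece wholesale from the proof of Proposition~4.3 of B\"ucher, Kojadinovic, Rohmer and Segers (2014) (their term (B.3)), and only argues the recentering piece $\{C_{\ip{ns}+1:\ip{ns'}}-C\}\cdot n^{-1/2}\sum_i\xi_i^{(b)}$ directly, via equicontinuity of the multiplier partial-sum process for short windows and $\Cb_n\weak\Cb_C$ for long ones. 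You instead observe that the two centerings cancel, so the difference is exactly $n^{-1/2}\sum_{i=k}^{\ell}\xi_i^{(b)}g_i$ with $g_i\in\{-1,0,1\}$, and you prove the estimate from scratch by conditioning on the data: the process takes $O(n^4)$ values, its conditional variance is $n^{-1}\#\{g_i\neq 0\}$, and the count of nonzero $g_i$ is controlled by the oscillation of the subsample marginal empirical c.d.f.s, whose uniform-in-window control is exactly the boundedness of the two-sided sequential uniform empirical process. A Gaussian union bound then gives the explicit rate $O_\Prob(n^{-1/4}\sqrt{\log n})$. This buys self-containedness, an explicit rate, and the elimination of the recentering step; the paper's route buys brevity by leaning on the earlier multiplier results. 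Two small points: your $\delta_n^X$ depends on the window $(k,\ell)$ (your subsequent identification with $n^{-1/2}\sup_u|\alpha_n(s,s',u)|$ makes this harmless, but the notation hides it); and in your ``alternative route'' remark, the split at window length $\sqrt n$ does not immediately give $o_\Prob(1)$ for long windows from $|C_{k:\ell}-C|=O_\Prob(1/(\sqrt n\,\lambda_n))$ alone --- you need either a higher threshold such as $n^{3/4}$ or the refinement $|n^{-1/2}\sum_{i=k}^{\ell}\xi_i^{(b)}|=O_\Prob(\lambda_n^{1/2}\sqrt{\log n})$ uniformly. Neither issue affects your main argument.
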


\begin{proof}
From the proof of Proposition~4.3 of \cite{BucKojRohSeg14} (see the term (B.3)), we have that, for $b=1,\dots,B$,
$$
\sup_{(s,s',u,v) \in \Delta \times [0,1]^2} | \check \Bb_n^{(b)}(s,s',u,v) - \bar \Bb_n^{(b)}(s,s',u,v) | = o_\Prob(1),
$$
where $\check \Bb_n^{(b)}$ and $ \bar \Bb_n^{(b)}$ are defined in~\eqref{eq:checkBnb} and~\eqref{eq:barBnb}. The desired result will follow if we show that 
\begin{equation}
\label{eq:aebarBnb}
\sup_{(s,s',u,v) \in \Delta \times [0,1]^2} | \tilde \Bb_n^{(b)}(s,s',u,v) - \check \Bb_n^{(b)}(s,s',u,v) | = o_\Prob(1),
\end{equation}
where $\tilde \Bb_n^{(b)}$ is defined in~\eqref{eq:tildeBbnb}. The supremum on the left of the previous display is smaller than
$$
\sup_{(s,s',u,v) \in \Delta \times [0,1]^2} | C_{\ip{ns}+1:\ip{ns'}}(u,v) - C(u,v) | \times \sup_{(s,s') \in \Delta} \left| \frac{1}{\sqrt{n}} \sum_{i=\ip{ns}+1}^{\ip{ns'}} \xi_i^{(b)} \right|.
$$
Using the fact that the first supremum is bounded by 2, the asymptotic uniform equicontinuity in probability of the process $(s,s') \mapsto n^{-1/2} \sum_{i=\ip{ns}+1}^{\ip{ns'}} \xi_i^{(b)}$ (by Donsker's theorem) which vanishes when $s=s'$, and the weak convergence $\Cb_n \weak \Cb_C$ in $(\ell^\infty(\Delta \times [0,1]^2), \| \cdot \|_\infty)$, we obtain that the latter display is $o_\Prob(1)$. Hence,~\eqref{eq:aebarBnb} holds. 
\end{proof}

For $(s,s',t) \in \Delta \times [0,1]$ and for $b=1, \dots, B$, let
\begin{equation}
\label{eq:checkAbn}
\check \Ab_n^{(b)}(s,s',t) 
=  
- \{ 1+ \hat A_{1:n}(t)\}^2  \times \frac{1}{\sqrt n} \sum_{i=\ip{ns}+1}^{\ip{ns'}} \xi_i^{(b)} \hat w_{\ip{ns}+1: \ip{ns'},i} (t),
\end{equation}
where $\hat w_{\ip{ns}+1: \ip{ns'},i}$ is defined in~\eqref{eq:wkl}. Furthermore, for $1\le k \le \ell \le n$ and $(y,t) \in (0,1) \times [0,1]$,  let
\begin{align}
\dot{C}_{1,k:\ell,n} (y^{1-t},y^t)&=
\left\{ \hat A_{k:\ell}(t )-t\hat{A}'_{k:\ell,n}(t)\right \} y^{\hat A_{k:\ell}(t)-(1-t)} 
,  \label{eq:estdotc1} \\ 
\dot{C}_{2,k:\ell,n} (y^{1-t},y^t)&=\left\{ \hat A_{k:\ell} (t)+(1-t)\hat{A}'_{k:\ell,n}(t)\right\}y^{\hat A_{k:\ell}(t)-t}, \label{eq:estdotc2} 
\end{align}
where $\hat{A}'_{k:\ell,n}$ is defined in~\eqref{eq:hatAkl'} and with the convention that $\dot{C}_{1,k:\ell,n} (x,\cdot) = \dot{C}_{2,k:\ell,n} (\cdot,x) = 0$ if $x \in \{0,1\}$. Finally, for $(s,s',y,t) \in \Delta \times [0,1]^2$ and $b=1,\dots,B$, let
\begin{align*}
\check \Cb_n^{(b)}(s,s',y^{1-t},y^t) 
= 
\check\Bb_n^{(b)}(s,s',y^{1-t},y^t)
&- \dot{C}_{1,\ip{ns}+1:\ip{ns'},n}(y^{1-t},y^t) \check \Bb_n^{(b)}(s,s',y^{1-t},1) \\
& - \dot{C}_{2,\ip{ns}+1:\ip{ns'},n}(y^{1-t},y^t) \check \Bb_n^{(b)}(s,s',1,y^t).
\end{align*}

\begin{lemma} \label{lem:exprcheckAbnb}
Let the pseudo-observations be either calculated as in~\eqref{eq:pseudoobs} or in~\eqref{eq:pseudotheta}. For $(s,s',t) \in \Delta \times [0,1]$ and $b=1,\dots,B$,
\begin{align} \label{eq:checkAbnCbn}
\check \Ab_n^{(b)}(s,s',t)
= -\{ 1+A_{1:n}(t)\}^2 \int_0^1 \check \Cb_n^{(b)}(s,s',y^{1-t},y^t) \, dy.
\end{align}
\end{lemma}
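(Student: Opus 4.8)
The statement is a purely deterministic algebraic identity: it holds for every realisation of the data and every fixed value of the multipliers $\xi_i^{(b)}$, so the plan is simply to expand the right-hand side of~\eqref{eq:checkAbnCbn} and recognise the weights $\hat w_{k:\ell,i}(t)$ of~\eqref{eq:wkl}. Write $k=\ip{ns}+1$ and $\ell=\ip{ns'}$, and fix $t\in(0,1)$ (the endpoints $t\in\{0,1\}$ requiring only the conventions $u^{1/0}=0$ and $\dot C_{1,k:\ell,n}(x,\cdot)=\dot C_{2,k:\ell,n}(\cdot,x)=0$ for $x\in\{0,1\}$, which make the degenerate summands on both sides vanish). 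Inserting the definition of $\check\Cb_n^{(b)}$ into $\int_0^1\check\Cb_n^{(b)}(s,s',y^{1-t},y^t)\,dy$ produces three integrals, one from the $\check\Bb_n^{(b)}(s,s',y^{1-t},y^t)$ term and one from each of the two partial-derivative corrections; I would evaluate them in turn.

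For the first integral, the elementary identity $\int_0^1\ind(x\le y^{1-t},\,x'\le y^t)\,dy=1-\max(x^{1/(1-t)},x'^{1/t})$ (already exploited in the proof of Theorem~\ref{theo:twosided}) applied to the indicator inside $\check\Bb_n^{(b)}$ yields $1-\hat m_{k:\ell,i}(t)$, while applied to $C_{k:\ell}(y^{1-t},y^t)$ it yields $1-\hat S_{k:\ell}(t)=1-\widebar m_{k:\ell}(t)$, since $\hat S_{k:\ell}$ is the arithmetic mean over $i=k,\dots,\ell$ of the $\hat m_{k:\ell,i}$. The constants $1$ cancel in the difference, so this term equals $n^{-1/2}\sum_{i=k}^\ell\xi_i^{(b)}\{\widebar m_{k:\ell}(t)-\hat m_{k:\ell,i}(t)\}$, precisely the $\widebar m_{k:\ell}(t)-\hat m_{k:\ell,i}(t)$ part of $\hat w_{k:\ell,i}(t)$. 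For the $\dot C_1$ correction I would substitute $\dot C_{1,k:\ell,n}(y^{1-t},y^t)=\hat a_{k:\ell}(t)\,y^{\hat A_{k:\ell}(t)-(1-t)}$ from~\eqref{eq:estdotc1}, use $\ind(\hat U_{k:\ell,i}\le y^{1-t})=\ind(y\ge\hat U_{k:\ell,i}^{1/(1-t)})$, and invoke the key exponent bookkeeping $\hat A_{k:\ell}(t)-(1-t)+1=\hat b_{k:\ell}(t)$; the primitive of $y^{\hat b_{k:\ell}(t)-1}$ then gives $\tfrac{\hat a_{k:\ell}(t)}{\hat b_{k:\ell}(t)}\{1-\hat U_{k:\ell,i}^{\hat b_{k:\ell}(t)/(1-t)}\}=\tfrac{\hat a_{k:\ell}(t)}{\hat b_{k:\ell}(t)}\{1-\hat u_{k:\ell,i}(t)\}$, and averaging over $i$ disposes of the term $C_{k:\ell}(y^{1-t},1)$ and yields $\tfrac{\hat a_{k:\ell}(t)}{\hat b_{k:\ell}(t)}\{1-\widebar u_{k:\ell}(t)\}$. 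After cancellation and the sign in $-\dot C_1$, this contributes $n^{-1/2}\sum_{i=k}^\ell\xi_i^{(b)}\{\hat u_{k:\ell,i}(t)-\widebar u_{k:\ell}(t)\}\hat a_{k:\ell}(t)/\hat b_{k:\ell}(t)$. The $\dot C_2$ correction is handled identically, now with $\dot C_{2,k:\ell,n}(y^{1-t},y^t)=\hat c_{k:\ell}(t)\,y^{\hat A_{k:\ell}(t)-t}$ from~\eqref{eq:estdotc2}, the identity $\hat A_{k:\ell}(t)-t+1=\hat d_{k:\ell}(t)$, and $\hat V_{k:\ell,i}^{\hat d_{k:\ell}(t)/t}=\hat v_{k:\ell,i}(t)$, producing the $\{\hat v_{k:\ell,i}(t)-\widebar v_{k:\ell}(t)\}\hat c_{k:\ell}(t)/\hat d_{k:\ell}(t)$ part of~\eqref{eq:wkl}. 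Adding the three contributions gives $\int_0^1\check\Cb_n^{(b)}(s,s',y^{1-t},y^t)\,dy=n^{-1/2}\sum_{i=k}^\ell\xi_i^{(b)}\hat w_{k:\ell,i}(t)$, and multiplying by $-\{1+\hat A_{1:n}(t)\}^2$ reproduces $\check\Ab_n^{(b)}(s,s',t)$ by~\eqref{eq:checkAbn}, which is~\eqref{eq:checkAbnCbn}.

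Finally, to cover the alternative ``the pseudo-observations are computed as in~\eqref{eq:pseudotheta}'', one only has to observe that the whole computation uses nothing about the pseudo-observations beyond the facts that they lie in $[0,1]$ and that the corresponding subsample estimator $\hat S_{k:\ell}$ (or $\hat S_{k:\ell}^\theta$ in~\eqref{eq:Sklbreak}) is the arithmetic mean over $i=k,\dots,\ell$ of $\max(\hat U_{k:\ell,i}^{1/(1-t)},\hat V_{k:\ell,i}^{1/t})$ built from those very pseudo-observations; both choices enjoy these properties, and in the $\theta$-case the quantities $\hat A_{k:\ell}$, $\hat A'_{k:\ell,n}$ entering $\dot C_{j,k:\ell,n}$ automatically become their $\theta$-analogues, so~\eqref{eq:checkAbnCbn} holds verbatim in either case. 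I expect no genuine obstacle: the only points needing care are the exponent arithmetic linking $\hat A_{k:\ell}(t)$ shifted by a linear term in $t$, after a primitive has been taken, to the denominators $\hat b_{k:\ell}(t)$ and $\hat d_{k:\ell}(t)$ appearing in~\eqref{eq:wkl} (both strictly positive since $\hat A_{k:\ell}(t)\ge 0$), and a brief check that the conventions at $t\in\{0,1\}$ and $y\in\{0,1\}$ make the two sides match degenerately.
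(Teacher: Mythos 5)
Your proposal is correct and follows essentially the same route as the paper's proof: split $\int_0^1\check\Cb_n^{(b)}(s,s',y^{1-t},y^t)\,dy$ into the three contributions from $\check\Bb_n^{(b)}$ and the two partial-derivative corrections, evaluate each via $\int_0^1 y^{b-1}\ind(m\le y)\,dy=(1-m^{b})/b$ with the exponent identities $\hat A_{k:\ell}(t)-(1-t)+1=\hat b_{k:\ell}(t)$ and $\hat A_{k:\ell}(t)-t+1=\hat d_{k:\ell}(t)$, and reassemble the weights $\hat w_{k:\ell,i}(t)$ of~\eqref{eq:wkl}. The sign bookkeeping and the observation that the identity is purely algebraic (hence unaffected by which pseudo-observations are used) both match the paper.
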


\begin{proof}
From the definitions of $\hat a_n, \hat b_n, \hat c_n$ and $\hat d_n$ given in Section~\ref{sec:testdim2}, we have
\begin{align*}
&\hspace{-.8cm} \int_0^1\check \Cb_n^{(b)}(s,s',y^{1-t},y^{t}) \, dy = \int_0^1 \check \Bb_n^{(b)}(s,s',y^{1-t},y^t) \, dy \nonumber \\
& \hspace{1cm}- \hat a_{\ip{ns}+1:\ip{ns'}}(t)\int_0^1y^{\hat b_{\ip{ns}+1:\ip{ns'}}(t)-1} \check\Bb_n^{(b)}(s,s',y^{1-t},1) \, dy \nonumber \\
&\hspace{1cm} - \hat c_{\ip{ns}+1:\ip{ns'}}(t)\int_0^1y^{\hat d_{\ip{ns}+1:\ip{ns'}}(t)-1} \check \Bb_n^{(b)}(s,s',1,y^t) \, dy \nonumber \\
&= (I_1-I_2-I_3)(s,s',t) , \nonumber
\end{align*}
where $I_1, I_2$ and $I_3$ are defined in an obvious manner. Observe that we can write
\begin{multline*}
\check \Bb_n^{(b)}(s,s',y^{1-t},y^t)
=
\frac{1}{\sqrt{n}}\sum_{i=\ip{ns}+1}^{\ip{ns'}}\xi_i^{(b)}\Big[\ind\left\{ \hat m_{\ip{ns}+1:\ip{ns'}, i}(t)\leq y\right \} \\
- \frac{1}{\ip{ns'} - \ip{ns}}\sum_{j=\ip{ns}+1}^{\ip{ns'}}\ind\left \{ \hat m_{\ip{ns}+1:\ip{ns'},j}(t)\leq y\right \} \Big].
\end{multline*}
Then, since $\int_0^1\ind(m\leq y)dy=(1-m)$ for $0\leq m\leq1$, we obtain that
\begin{align*}
I_1(s,s',y)
=
\frac{1}{\sqrt{n}}\sum_{i=\ip{ns}+1}^{\ip{ns'}}\xi_i^{(b)}\left \{ \widebar{m}_{\ip{ns}+1:\ip{ns'}}(t)-\hat m_{\ip{ns}+1:\ip{ns'},i}(t)\right \}.
\end{align*}
Similarly, since $\int_0^1 y^{b-1} \ind(m \le y) \, dy = (1- m^{b})/b$, 
\begin{align*}
I_2(s,s',y)
=
\frac{\hat a_{\ip{ns}+1:\ip{ns'}}(t)}{\hat b_{\ip{ns}+1:\ip{ns'}}(t) }\frac{1}{\sqrt{n}}\sum_{i=\ip{ns}+1}^{\ip{ns'}}\xi_i^{(b)}\left \{ \widebar{u}_{\ip{ns}+1:\ip{ns'}}(t)-\hat u_{\ip{ns}+1:\ip{ns'},i}(t)\right\} 
\end{align*}
and 
\begin{align*}
I_3(s,s',y)
=
\frac{\hat c_{\ip{ns}+1:\ip{ns'}}(t)}{\hat d_{\ip{ns}+1:\ip{ns'}}(t) } \frac{1}{\sqrt{n}}\sum_{i=\ip{ns}+1}^{\ip{ns'}}\xi_i^{(b)}\left\{ \widebar{v}_{\ip{ns}+1:\ip{ns'}}(t)-\hat v_{\ip{ns}+1:\ip{ns'},i}(t)\right \}.
\end{align*}
Hence,~\eqref{eq:checkAbnCbn} is proved.
\end{proof}

\begin{proof}[\bf Proof of Proposition~\ref{prop:bootdim2}.]
For $(s,t) \in [0,1]^2$ and $b=1,\dots,B$, let
$$
\bar \Db_n^{(b)}(s,t) = \lambda_n(s,1) \bar \Ab_n^{(b)} (0,s,t) - \lambda_n(0,s) \bar \Ab_n^{(b)} (s,1,t), 
$$
where $\bar \Ab_n^{(b)}$ is defined in~\eqref{eq:barAbnb}, and let
$$
\tilde \Db_n^{(b)}(s,t) = \lambda_n(s,1) \tilde \Ab_n^{(b)} (0,s,t) - \lambda_n(0,s) \tilde \Ab_n^{(b)} (s,1,t), 
$$
where $\tilde \Ab_n^{(b)}$ is defined in~\eqref{eq:tildeAbnb}. From Lemma~\ref{lem:aetildeAnb}, we then immediately obtain that
$$
\sup_{(s,t) \in [0,1]^2} | \tilde \Db_n^{(b)}(s,t) - \bar \Db_n^{(b)}(s,t) | = o_\Prob(1).
$$
The latter combined with Lemma~\ref{lem:multbarAn}, the continuous mapping theorem,~\eqref{eq:aeCbn},~\eqref{eq:sup} and~\eqref{eq:Dbn} gives
$$
\Big( \Db_n, \tilde \Db_n^{(1)},\dots, \tilde \Db_n^{(B)} \Big) \weak \Big( \Db_C,\Db_C^{(1)},\dots,\Db_C^{(B)} \Big)
$$
in $(\ell^\infty([0,1]^2), \| \cdot \|_\infty)^{B+1}$, where $\Db_C$ is defined in~\eqref{eq:DbC} and $\Db_C^{(1)},\dots,\Db_C^{(B)}$ are independent copies of $\Db_C$. From the definitions in~\eqref{eq:checkDnb} and~\eqref{eq:checkAbn}, we further have that, for $(s,t) \in [0,1]^2$, 
$$
\check \Db_n^{(b)}(s,t) 
= \lambda_n(s,1) \check \Ab_n^{(b)} (0,s,t) - \lambda_n(0,s) \check \Ab_n^{(b)} (s,1,t). 
$$
Hence, to complete the proof, it remains to show $\sup_{(s,t) \in [0,1]^2} |\check \Db_n^{(b)}(s,t) - \tilde \Db_n^{(b)}(s,t)| = o_\Prob(1)$, which is implied by
\begin{equation}
\label{eq:aeAbncheckbar}
\sup_{(s,s',t) \in \Delta \times [0,1]} |\check \Ab_n^{(b)}(s,s't) - \tilde \Ab_n^{(b)}(s,s',t)| = o_\Prob(1).
\end{equation}
Having in mind the fact that $A_{1:n}$ converges uniformly in probability to $A$ as a consequence of Proposition~\ref{prop:ferreira},~\eqref{eq:aebarBnb} and the fact $|\dot C_1| \leq 1$ and $|\dot C_2| \leq 1$, to prove~\eqref{eq:aeAbncheckbar}, it suffices to show that
\begin{equation} \label{eq:aepd1}
\sup_{(s,s') \in \Delta \atop (y,t) \in [0,1]^2} \left| \{ \dot C_{1,\ip{ns}+1:\ip{ns'},n}(y^{1-t},y^t) - \dot C_1(y^{1-t},y^t) \} \tilde \Bb^{(b)}_n(s,s',y^{1-t},1) \right| = o_\Prob(1)
\end{equation}
and
\begin{equation} \label{eq:aepd2}
\sup_{(s,s') \in \Delta \atop (y,t) \in [0,1]^2} \left| \{ \dot C_{2,\ip{ns}+1:\ip{ns'},n}(y^{1-t},y^t) - \dot C_2(y^{1-t},y^t) \} \tilde \Bb^{(b)}_n(s,s',1,y^t) \right| = o_\Prob(1),
\end{equation}
where $\dot C_{1,\ip{ns}+1:\ip{ns'},n}(y^{1-t},y^t)$, $\dot C_{2,\ip{ns}+1:\ip{ns'},n}(y^{1-t},y^t)$, $\dot C_1(y^{1-t},y^t)$ and $\dot C_2(y^{1-t},y^t)$ are given in~\eqref{eq:estdotc1},~\eqref{eq:estdotc2}, \eqref{eq:dotc1} and~\eqref{eq:dotc2}, respectively. 

From~\eqref{eq:boundCkl}, it can be verified that, for any $n \geq 1$ and any $t \in [0,1]$, $| \hat A_{1:n}(t) | \leq 7$. Since, by definition, $\hat A'_{k:\ell,n}$ in~\eqref{eq:hatAkl'} is also uniformly bounded, $(y,t) \mapsto \dot C_{1,k:\ell,n}(y^{1-t},y^t)$ and $(y,t) \mapsto \dot C_{2,k:\ell,n}(y^{1-t},y^t)$ are uniformly bounded. 

To prove~\eqref{eq:aepd1}, we can then proceed as in the proof of Proposition~4.3 of \cite{BucKojRohSeg14} (see the terms (B.4) and (B.5)). Using the asymptotic uniform equicontinuity in probability of $\tilde \Bb_n^{(b)}$ which vanishes when $s=s'$, and the fact that $\dot C_1$ and its estimator are uniformly bounded, it remains to show that, for any $\delta \in (0,1)$,  
\begin{equation} \label{eq:aepd1delta}
\sup_{(s,s',y,t) \in \Delta \times [0,1]^2 \atop s' - s \geq \delta} \left|\dot C_{1,\ip{ns}+1:\ip{ns'},n}(y^{1-t},y^t) - \dot C_1(y^{1-t},y^t)  \right| = o_\Prob(1).
\end{equation}
The previous result is implied by the fact that
$$
\sup_{(s,s',t) \in \Delta \times [0,1] \atop s' - s \geq \delta} \left| \hat A_{\ip{ns}+1:\ip{ns'}}(t) - A(t) \right| = \sup_{(s,s',t) \in \Delta \times [0,1] \atop s' - s \geq \delta} \left| \frac{\Ab_n(s,s',t)}{\sqrt{n} \lambda_n(s,s')} \right| = o_\Prob(1),
$$
where $\Ab_n$ is defined in~\eqref{eq:Abn}, and an analogue result for $\hat A_{\ip{ns}+1:\ip{ns'},n}'$ defined in~\eqref{eq:hatAkl'}. The latter can be seen as follows: for $A_{\ip{ns}+1:\ip{ns'},n}'$ in~\eqref{eq:Akl'}, we have
\begin{multline}
\label{eq:consistA'}
\sup_{(s,s',t) \in \Delta \times [h_n,1-h_n] \atop s' - s \geq \delta}  |A'_{\ip{ns}+1:\ip{ns'},n}(t) - A'(t)| \leq \sup_{t \in [h_n,1-h_n]} \left| \frac{A(t + h_n) - A(t - h_n)}{2 h_n} - A'(t)  \right| \\ + \sup_{(s,s',t) \in \Delta \times [h_n,1-h_n] \atop s' - s \geq \delta} \left| \frac{ \Ab_n(s,s',t + h_n) - \Ab_n(s,s',t - h_n) }{ 2 h_n \sqrt{n} \lambda_n(s,s') } \right|.
\end{multline}
Since $A'$, extended by continuity, is (uniformly) continuous on $[0,1]$ (see the discussion below~\eqref{eq:dotc2}), by the mean value theorem, the first term on the right converges to zero. The second term on the right is smaller than
$$
\sup_{(s,s',t,t') \in \Delta \times [0,1]^2 \atop | t - t' | \leq 2 h_n} \left| \Ab_n(s,s',t) - \Ab_n(s,s',t')  \right|  \times \sup_{(s,s') \in \Delta \atop s' - s \geq \delta} \frac{1}{ 2 h_n  \lambda_n(s,s') } = o_\Prob(1),
$$
by asymptotic uniform equicontinuity in probability of $\Ab_n$.

Hence,~\eqref{eq:aepd1delta} holds and, thus, so does~\eqref{eq:aepd1}. The proof of~\eqref{eq:aepd2} is similar. This completes the proof of~\eqref{eq:aeAbncheckbar} and, therefore, of the proposition.
\end{proof}

\begin{proof}[\bf Proof of Proposition~\ref{prop:bootdim2break}] For $b=1,\dots,B$, let $\check \Ab_n^{\theta,(b)}$ be the analogue of $\check \Ab_n^{(b)}$ in~\eqref{eq:checkAbn} based on the adapted pseudo-observations defined in~\eqref{eq:pseudotheta}. Then, by definition of $\check \Db_n^{\theta,(b)}$, we have 
$$
\check \Db_n^{\theta,(b)}(s,t) = \lambda_n(s,1) \check \Ab_n^{\theta,(b)}(0,s,t) - \lambda_n(0,s) \check \Ab_n^{\theta,(b)}(s,1,t), \quad (s,t) \in [0,1]^2.
$$
Next, let
\begin{multline*}
\tilde \Db_n^{\theta,(b)}(s,t) = \lambda_n(s,1) \{ \tilde \Ab_n^{(b)}(0,s \wedge \theta,t) + \tilde \Ab_n^{(b)}(s \wedge \theta, s,t) \} \\ - \lambda_n(0,s) \{ \tilde \Ab_n^{(b)}(s, s \vee \theta, t) + \tilde \Ab_n^{(b)}(s \vee \theta, 1, t) \}, \quad (s,t) \in [0,1]^2,
\end{multline*}
where $\tilde \Ab_n^{(b)}$ is defined in~\eqref{eq:tildeAbnb}, and let us first show that 
\begin{equation}
\label{eq:aetildeDbnbtheta}
\sup_{(s,t) \in [0,1]^2} | \check \Db_n^{\theta,(b)}(s,t) - \tilde \Db_n^{\theta,(b)}(s,t) | = o_\Prob(1).
\end{equation}
To prove~\eqref{eq:aetildeDbnbtheta}, we shall show that
\begin{multline}
\label{eq:aeDbn1}
\sup_{(s,t) \in [0,\theta] \times [0,1]} \left| \check \Db_n^{\theta,(b)}(s,t) - \lambda_n(s,1) \tilde \Ab_n^{(b)}(0,s,t) \right. \\ \left. + \lambda_n(0,s) \{ \tilde \Ab_n^{(b)}(s,\theta,t) + \tilde \Ab_n^{(b)}(\theta,1,t) \} \right| = o_\Prob(1),
\end{multline}
and 
\begin{multline}
\label{eq:aeDbn2}
\sup_{(s,t) \in [\theta,1] \times [0,1]} \left| \check \Db_n^{\theta,(b)}(s,t) - \lambda_n(s,1) \{ \tilde \Ab_n^{(b)}(0,\theta,t) + \tilde \Ab_n^{(b)}(\theta,s,t) \} \right. \\ \left. + \lambda_n(0,s) \tilde \Ab_n^{(b)}(s,1,t)\right| = o_\Prob(1).
\end{multline}
We start with the proof of~\eqref{eq:aeDbn1}. Under $\Hc_{0,c} \cap \Hc_{1,m}$, \eqref{eq:aeAbncheckbar} continues to hold if the supremum is restricted to  $(s,s',t)  \in (\Delta \cap[0,\theta]^2)  \times [0,1]$. This can be seen by essentially the same arguments as in the proof of Lemma~\ref{lem:aeCbnH1}. Therefore,~\eqref{eq:aeDbn1} will hold if
$$
\sup_{(s,t) \in [0,\theta] \times [0,1]} \left| \check \Ab_n^{\theta,(b)}(s,1,t) - \tilde \Ab_n^{(b)}(s,\theta,t) - \tilde \Ab_n^{(b)}(\theta,1,t)\right| = o_\Prob(1),
$$
that is, if 
\begin{equation}
\label{eq:convBntheta}
\sup_{s \in [0,\theta] \atop (u,v) \in [0,1]^2} \left| \check \Bb_n^{\theta,(b)}(s,1,u,v) - \tilde \Bb_n^{(b)}(s,\theta,u,v) - \tilde \Bb_n^{(b)}(\theta,1,u,v) \right| = o_\Prob(1),
\end{equation}
and, having in addition~\eqref{eq:dotc1},~\eqref{eq:dotc2},~\eqref{eq:estdotc1} and~\eqref{eq:estdotc2} in mind, if 
\begin{gather}
\label{eq:convAtheta}
\sup_{(s,t) \in [0,\theta] \times [0,1]} | \hat A_{\ip{ns}+1:n}^\theta(t) - A(t) | = o_\Prob(1), \\
\label{eq:convAprimetheta}
\sup_{(s,t) \in [0,\theta] \times [0,1]} | \hat A_{\ip{ns}+1:n,n}'^\theta(t) - A'(t) | = o_\Prob(1),
\end{gather}
where $\hat A_{k:\ell,n}'^\theta$ is the analogue of $\hat A_{k:\ell,n}'$ in~\eqref{eq:hatAkl'} defined from the adapted pseudo-observations in~\eqref{eq:pseudotheta}. The supremum on the left of~\eqref{eq:convBntheta} is smaller than
\begin{multline*}
\sup_{(s,u,v) \in [0,\theta] \times [0,1]^2} \left| \frac{\ip{n\theta} - \ip{ns}}{n - \ip{ns}} C_{\ip{ns}+1:\ip{n\theta}}(u,v) \right. \\ \left. + \frac{n - \ip{n\theta}}{n - \ip{ns}} C_{\ip{n\theta}+1:n}(u,v) - C(u,v) \right| \times \sup_{s \in [0,\theta]} \left| \frac{1}{\sqrt{n}} \sum_{i=\ip{ns}+1}^n \xi_i^{(b)} \right| \\
\leq \sup_{(s,u,v) \in [0,\theta] \times [0,1]^2} \frac{| \Cb_n(s,\theta,u,v) + \Cb_n(\theta,1,u,v) |}{\sqrt{n} \lambda_n(s,1)} \times O_\Prob(1) = o_\Prob(1)
\end{multline*}
by Lemma~\ref{lem:aeCbnH1} and the weak convergence of $\bar \Cb_n$ under $\Hc_{0,c}$. The supremum on the left of~\eqref{eq:convAtheta} is smaller than
$$
n^{-1/2} \sup_{(s,t) \in [0,\theta] \times [0,1]}  \{\lambda_n(s,1) \}^{-1}| \Ab_n^\theta(s,1,t)|,
$$
where $\Ab_n^\theta$ is defined in~\eqref{eq:Abntheta}, and is $o_\Prob(1)$ because of~\eqref{eq:ae1},~\eqref{eq:tildeAbn}, Lemma~\ref{lem:aeCbnH1} and the weak convergence of $\bar \Cb_n$ under $\Hc_{0,c}$. The proof of~\eqref{eq:convAprimetheta} is based on a decomposition similar to that used in~\eqref{eq:consistA'} and relies again on~\eqref{eq:ae1}. Hence,~\eqref{eq:aeDbn1} holds. The proof of~\eqref{eq:aeDbn2} is similar. 

Using arguments of the same nature as those employed in the proof of Lemma~\ref{lem:aeCbnH1}, we obtain the following extension of  Lemma~\ref{lem:aetildeAnb} under $\Hc_{1,m} \cap \Hc_{0,c}$:
$$
\sup_{(s,s',t) \in \{ (\Delta\cap[0,\theta]^2) \cup( \Delta\cap[\theta,1]^2)\} \times [0,1]} | \tilde \Ab_n^{(b)}(s,s',t) - \bar \Ab_n^{(b)}(s,s',t) | = o_\Prob(1),
$$
By the triangular inequality, this implies that
\begin{equation}
\label{eq:aebarDbnbtheta}
\sup_{(s,t) \in [0,1]^2} | \tilde \Db_n^{\theta,(b)}(s,t) - \bar \Db_n^{\theta,(b)}(s,t) | = o_\Prob(1),
\end{equation}
where
\begin{multline*}
\bar \Db_n^{\theta,(b)}(s,t) = \lambda_n(s,1) \{ \bar \Ab_n^{(b)}(0,s \wedge \theta,t) + \bar \Ab_n^{(b)}(s \wedge \theta, s,t) \} \\ - \lambda_n(0,s) \{ \bar \Ab_n^{(b)}(s, s \vee \theta, t) + \bar \Ab_n^{(b)}(s \vee \theta, 1, t) \}, \quad (s,t) \in [0,1]^2.
\end{multline*}
and $\bar \Ab_n^{(b)}$ is defined in~\eqref{eq:barAbnb}. The desired result is then a consequence of Lemma~\ref{lem:multbarAn}, the continuous mapping theorem,~\eqref{eq:aetildeDntheta},~\eqref{eq:aebarDntheta},~\eqref{eq:aetildeDbnbtheta} and~\eqref{eq:aebarDbnbtheta}.
\end{proof}


\section{Test statistic and multiplier bootstrap for $d\geq2$}
\def\theequation{B.\arabic{equation}}
\setcounter{equation}{0}
\label{sec:extensiond>2}

In Sections~\ref{sec:testdim2} and~\ref{sec:testdim2break}, we restricted ourselves to the case $d=2$. Results for arbitrary dimension $d\geq2$ can be established at the cost of a more complex notation but without significant additional mathematical difficulties. We give the main steps of the generalization hereafter. Let $\vect{X}=\left(X_{1},\ldots,X_{d}\right)$ be a random vector with c.d.f.\ and extreme-value copula of the form~\eqref{eq:sklar} and~\eqref{eq:evc}, respectively, and suppose that $A$ is continuously differentiable on the interior of $\Sc_{d-1}$ with partial derivatives $\dot{A}_j(\vect t)=\partial A(\vect{t})/\partial t_j$, $j=2,\ldots,d$. With the notation $U_j=F_j(X_j)$, $j=1,\dots,d$, and $t_1=t_1(\vect{t})=1-\sum\nolimits_{j=2}^dt_j$, $\vect t \in \Sc_{d-1}$, we have, just as for $d=2$, 
$$
A(\vect{t})=S(\vect{t})/ \{ 1-S(\vect{t}) \} \quad \mbox{and} \quad S(\vect{t})=\Exp \left( \max_{1\leq j\leq d}U_j^{1/t_j} \right),
$$ 
with the convention that $u^{1/0} = 0$ for all $u \in (0,1)$.

Let $\vect{X}_i$, $i=1,\ldots,n$, be independent copies of $\vect{X}$ and let $\hat{\vect{U}}_{k:\ell,i}=(\hat{U}_{k:\ell,i1},\ldots,\hat{U}_{k:\ell,id})$ be $d$-variate generalizations of the ``subsample'' pseudo-observations in~\eqref{eq:pseudoobs}. We define a CUSUM-type process $\Db_n$ on $[0,1]\times \Sc_{d-1}$ by
\begin{align*}
\Db_n(s,\vect{t})=\frac{\ip{ns}(n-\ip{ns})}{n^{3/2}}\left\{\hat{A}_{1:\ip{ns}}(\vect{t})-\hat{A}_{\ip{ns}+1:n}(\vect{t})\right\}, 
\end{align*}
where, for $1\le k \le \ell \le n$, $\hat A_{k:\ell}(\vect t) =\hat S_{k:\ell}(\vect t) / \{ 1- \hat S_{k:\ell}(\vect t) \}$, and
\[
\hat{S}_{k:\ell}(\vect{t})=\frac{1}{\ell-k+1}\sum_{i=k}^\ell\max_{1\leq j\leq d}\left(\hat{U}_{k:\ell,ij}^{1/t_j}\right)
\]
with the convention that $\hat S_{k:\ell} = 0$ if $k > \ell$.

Let us introduce some additional notation. For any $y\in[0,1]$ and $\vect{t}\in\Sc_{d-1}$, we define $\vect y^{\vect{t}}$ to be the vector $(y^{t_1},\ldots,y^{t_d})\in[0,1]^d$ with the convention that $0^0=1$. Furthermore, for any $\vect u \in [0,1]^d$ and any $j=1,\dots,d$, $\vect u^{(j)}$ denotes the vector of $[0,1]^d$ whose components are all equal to one, except the $j$th which is equal to $u_j$.

\begin{prop} \label{prop:weak}
Suppose that all of the above conditions are met. Then, in the normed space $\ell^\infty([0,1]\times\Sc_{d-1})$, $\Db_n \weak \Db_C$, where
\[
\Db_C(s,\vect{t})= \{ 1+ A(\vect t) \}^2 \times \int_0^1s \Cb_C(s,1,\vect y^{\vect{t}}) - (1-s) \Cb_C(0,s,\vect y^{\vect{t}}) \, dy.
\]
Here, $\Cb_C$ denotes a centered Gaussian process on $\Delta \times [0,1]^d$ defined through
\[
\Cb_C(s,s',\vect{u})= \{ \Bb_C(s',\vect{u}) - \Bb_C(s,\vect{u})  \} -\sum_{j=1}^d \dot C_j(\vect{u}) \{ \Bb_C(s',\vect{u}^{(j)}) - \Bb_C(s,\vect{u}^{(j)})  \},   
\]
where $\Bb_C$ is a tight, centered Gaussian process on $(\ell^\infty([0,1]^{d+1}), \| \cdot\|_\infty)$ with covariance kernel given by
\[
\Exp\{ \Bb_C(s,\vect{u})\Bb_C(s',\vect{u}') \} = (s\wedge s') \{ C(u_1\wedge u'_1,\ldots,u_d\wedge u_d') - C(\vect{u})C(\vect{u}') \},
\]
and $\dot C_j$, $j=1,\ldots,d$, denotes the $j$th first-order partial derivative of $C$.  
\end{prop}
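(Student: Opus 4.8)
The plan is to reproduce, in arbitrary dimension, the chain of reductions behind Theorem~\ref{theo:twosided} and Proposition~\ref{prop:weakdim2}; the only genuinely $d$-dependent point is a uniform lower bound on an integrated empirical copula. First I would introduce the two-sided process
\[
\Ab_n(s,s',\vect t) = \sqrt n\,\lambda_n(s,s')\{\hat A_{\ip{ns}+1:\ip{ns'}}(\vect t) - A(\vect t)\},\qquad (s,s',\vect t)\in\Delta\times\Sc_{d-1},
\]
and prove that $\Ab_n\weak\Ab_C$ in $\ell^\infty(\Delta\times\Sc_{d-1})$, where $\Ab_C(s,s',\vect t) = -\{1+A(\vect t)\}^2\int_0^1\Cb_C(s,s',\vect y^{\vect t})\,dy$. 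Since, exactly as in~\eqref{eq:Dbn}, $\Db_n(s,\vect t)=\lambda_n(s,1)\,\Ab_n(0,s,\vect t)-\lambda_n(0,s)\,\Ab_n(s,1,\vect t)$, the proposition then follows from the continuous mapping theorem, just as Proposition~\ref{prop:weakdim2} follows from Theorem~\ref{theo:twosided}.

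To obtain the convergence of $\Ab_n$, I would first note that $\max_{1\le j\le d}u_j^{1/t_j}\le y$ is equivalent to $\vect u\le\vect y^{\vect t}$ componentwise, so that integrating the corresponding indicator over $y\in[0,1]$ gives
\[
\hat S_{k:\ell}(\vect t)=1-\int_0^1 C_{k:\ell}(\vect y^{\vect t})\,dy,\qquad S(\vect t)=1-\int_0^1 C(\vect y^{\vect t})\,dy=1-\{1+A(\vect t)\}^{-1},
\]
with $C_{k:\ell}$ the $d$-variate empirical copula. Using $x/(1-x)-y/(1-y)=(x-y)/\{(1-x)(1-y)\}$ and arguing as in~\eqref{eq:AbnCbn}, one gets, whenever $\ip{ns}<\ip{ns'}$,
\[
\Ab_n(s,s',\vect t)=-\frac{1}{B_n(s,s',\vect t)}\int_0^1\Cb_n(s,s',\vect y^{\vect t})\,dy,\qquad B_n(s,s',\vect t)=\int_0^1 C_{\ip{ns}+1:\ip{ns'}}(\vect y^{\vect t})\,dy\cdot\int_0^1 C(\vect y^{\vect t})\,dy,
\]
where $\Cb_n(s,s',\vect u)=\sqrt n\,\lambda_n(s,s')\{C_{\ip{ns}+1:\ip{ns'}}(\vect u)-C(\vect u)\}$ is the $d$-variate two-sided sequential empirical copula process. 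Because $A$ is continuously differentiable on the interior of $\Sc_{d-1}$, differentiating~\eqref{eq:evc} shows that the first-order partial derivatives $\dot C_j$ exist and are continuous on $\{\vect u\in[0,1]^d:u_j\in(0,1)\}$ (and satisfy $|\dot C_j|\le1$); hence, by the general-$d$ versions of the results of \cite{Seg12} and \cite{BucKoj14}, one has $\sup|\Cb_n-\bar\Cb_n|=o_\Prob(1)$ with $\bar\Cb_n(s,s',\vect u)=\bar\Bb_n(s,s',\vect u)-\sum_{j=1}^d\dot C_j(\vect u)\,\bar\Bb_n(s,s',\vect u^{(j)})$, together with $\bar\Bb_n(0,\cdot,\cdot)\weak\Bb_C$, so that $\bar\Cb_n\weak\Cb_C$. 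Composing with the continuous map $g\mapsto\big[(s,s',\vect t)\mapsto-\{1+A(\vect t)\}^2\int_0^1 g(s,s',\vect y^{\vect t})\,dy\big]$ gives $\tilde\Ab_n\weak\Ab_C$, where $\tilde\Ab_n(s,s',\vect t)=-\{1+A(\vect t)\}^2\int_0^1\Cb_n(s,s',\vect y^{\vect t})\,dy$.

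It then remains to show $\sup|\Ab_n-\tilde\Ab_n|=o_\Prob(1)$. Following the proof of Theorem~\ref{theo:twosided}, this reduces to (i) a uniform lower bound $B_n(s,s',\vect t)\ge c_d>0$, (ii) control of the small increments $s'-s<\delta$ via the asymptotic uniform equicontinuity of $\Cb_n$ and the fact that it vanishes on $\{s=s'\}$, and (iii) $\sup_{s'-s\ge\delta}|B_n(s,s',\vect t)-\{1+A(\vect t)\}^{-2}|=o_\Prob(1)$, the last being immediate from $\Cb_n=o_\Prob(\sqrt n)$ uniformly. For~(i) I would take $c=1-\tfrac{1}{3d}$: since, for each coordinate $j$, exactly $\ip{c(\ell-k+2)}$ of the scaled ranks do not exceed $c$, a union bound over the $d$ coordinates shows that at least $(1-d)(\ell-k+1)+d\ip{c(\ell-k+2)}$ of the indices $i\in\{k,\dots,\ell\}$ have all components of $\hat{\vect U}_{k:\ell,i}$ bounded by $c$; using $\ip{x}>x-1$, this gives $C_{k:\ell}(c,\dots,c)\ge\tfrac12$ for every $1\le k<\ell\le n$, and since $y^{t_j}\ge y$ on $[0,1]$ we obtain $\int_0^1 C_{k:\ell}(\vect y^{\vect t})\,dy\ge\tfrac12(1-c)=\tfrac1{6d}$, whence $B_n\ge\tfrac1{12d}$ in view of $\int_0^1 C(\vect y^{\vect t})\,dy\ge\tfrac12$.

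The step requiring the most care is not probabilistic but a matter of bookkeeping: verifying that the $d$-variate sequential empirical copula process $\Cb_n$ enjoys the asymptotic linearization $\Cb_n=\bar\Cb_n+o_\Prob(1)$ and the asymptotic uniform equicontinuity used above. This is precisely the content of the general-$d$ statements in \cite{BucKoj14}, so no new difficulty arises, but one must handle with care the $d$ partial derivatives, the vectors $\vect u^{(j)}$, the reparametrization $\vect t\mapsto\vect y^{\vect t}$, and the behaviour of $\dot C_j$ on the boundary of $[0,1]^d$, exactly as in the two-dimensional argument.
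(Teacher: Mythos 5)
Your proof follows exactly the route the paper itself takes: the paper's entire ``proof'' of this proposition is the remark that it is almost identical to that of Proposition~3.2, and your argument is precisely the two-dimensional proof (reduction to the two-sided process $\Ab_n$, the identity $\hat S_{k:\ell}(\vect t)=1-\int_0^1 C_{k:\ell}(\vect y^{\vect t})\,dy$, the linearization of the sequential empirical copula process, and the continuous mapping theorem) transcribed to general $d$. The one genuinely $d$-dependent step, the uniform lower bound on $B_n$, is worked out correctly — your union bound with $c=1-\tfrac{1}{3d}$ gives $C_{k:\ell}(c,\dots,c)\ge\tfrac12$ for all $1\le k<\ell\le n$ and hence $B_n\ge\tfrac{1}{12d}$ — so nothing is missing.
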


The proof is almost identical to that of Proposition~\ref{prop:weakdim2}. For a corresponding bootstrap approximation of the limit $\Db_C$, let $\xi_i^{\scriptscriptstyle (b)}$, $i=1,\ldots,n,\ b=1,\ldots,B$, be i.i.d.\ standard normal multipliers. Furthermore, from~\eqref{eq:evc}, we have that, for any $y\in(0,1)$ and $\vect t \in \Sc_{d-1}$,
\begin{align*}
\dot{C}_j(\vect y^{\vect{t}})=\left\{\begin{array}{ll}
\vect y^{A(\vect{t})-t_1} \left\{ A(\vect{t})-\sum\nolimits_{j'=2}^dt_{j'} \dot{A}_{j'}(\vect{t})\right\}, & j=1, \\
\vect y^{A(\vect{t})-t_j}  \left\{ A(\vect{t})+\dot A_j(\vect{t})-\sum\nolimits_{j'=2}^dt_{j'} \dot A_{j'}(\vect{t})\right\}, & j=2,\ldots,d.
\end{array}\right.
\end{align*}
The above quantities can be estimated consistently by plugging in subsample estimators of $A$ and $\dot A_j$, $j=1,\dots,d$, respectively, namely $\hat A_{k:\ell}$ and 
\begin{align*}
\dot{A}_{j,k:\ell,n}(\vect{t})=\frac{1}{2h_n}\left\{\hat A_{k:\ell}(\vect{t}+h_n\vect{e}_j)-\hat A_{k:\ell}(\vect{t}-h_n\vect{e}_j)\right\},\ \text{ }\ j=2,\ldots,d,
\end{align*}
with $\vect{t} \pm h_n\vect{e_j}=(t_2,\ldots,t_{j-1},t_j \pm h_n,t_{j+1},\ldots,t_d)$ and a sequence $h_n\downarrow0$ such that $\inf_{n \geq 1} h_n \sqrt{n}>0$ (boundary effects can be dealt with by generalizing the approach adopted below~\eqref{eq:Akl'}). Then, analogously to the bivariate case, we define
$$
\check{\Db}_n^{(b)}(s,\vect{t})
= 
\{ 1+ \hat A_{1:n}(\vect t) \}^2 \times 
\bigg\{ 
\frac{\ip{ns}}{n^{3/2}} \sum_{i=\ip{ns}+1}^n\xi_i^{(b)} \hat{w}_{\ip{ns}+1:n,i}(\vect{t}) -\frac{n-\ip{ns}}{n^{3/2}}\sum_{i=1}^{\ip{ns}}\xi_i^{(b)} \hat{w}_{1:\ip{ns},}i(\vect{t})
\bigg\},
$$
where, for $1\le k\le \ell \le n$,
\begin{align*} 
\hat{w}_{k:\ell,i}(\vect{t})
=
\widebar{m}_{k:\ell}(\vect{t})-\hat{m}_{k:\ell,i}(\vect{t})
+
\sum_{j=1}^d\frac{(\hat{u}_{k:\ell, ij}(\vect{t})-\widebar{u}_{k:\ell,j}(\vect{t}))\hat{a}_{k:\ell,j}(\vect{t})}{\hat{b}_{k:\ell,j}(\vect{t})},
\end{align*}
with $\bar{m}_{k:\ell}$ and $\bar{u}_{k:\ell,j}$ denoting the arithmetic mean over $i=k, \dots, \ell$ of
\begin{align*}
\hat{m}_{k:\ell,i}(\vect{t})=\max\left(\hat{\vect{U}}_{k:\ell,i}^{1/\vect{t}}\right)\ 
\text{ and }\ 
\hat{u}_{k:\ell,ij}(\vect{t})=\hat{U}_{k:\ell,ij}^{\hat{b}_{k:\ell,j}/t_j},
\end{align*}
and where
\begin{align*}
\hat{a}_{k:\ell,j}(\vect{t}) &= 
\begin{cases}
\hat A_{k:\ell}(\vect{t})-\sum\nolimits_{j'=2}^d t_{j'}\dot{A}_{j',k:\ell,n}(\vect{t}), & j=1, \\
\hat A_{k:\ell}(\vect{t})+\dot{A}_{j,k:\ell,n}(\vect{t})-\sum\nolimits_{j'=2}^dt_{j'}\dot{A}_{j',k:\ell,n}(\vect{t}), & j=2,\ldots,d,
\end{cases} \\
\hat{b}_{k:\ell,j}(\vect{t}) &= \hat A_{k:\ell}(\vect{t})+1-t_j.
\end{align*}

Test statistics and corresponding multiplier bootstrap replicates can be defined analogously to Section~\ref{sec:testdim2}, as functionals of $\Db_n$ and $\check \Db_n^{\scriptscriptstyle (b)}$, $b=1,\dots,B$, respectively. In addition, generalizations adapted to known breaks in the margins can be obtained by computing pseudo-observations from the subsamples determined by the marginal change-points, as explained in Section~\ref{sec:testdim2break}. We omit the details for the sake of brevity.

\bigskip
\noindent
\textbf{Acknowledgements.} 
The authors would like to thank Markus Schulte and Andreas Schumann for providing us with the hydrological data sets and Betina Berghaus and Roland Fried for helpful discussions.
This work has been supported by the Collaborative Research Center ``Statistical modeling of nonlinear dynamic processes'' (SFB 823) of the German Research Foundation (DFG) which is gratefully acknowledged. 

\bibliographystyle{chicago}
\bibliography{biblio}
\end{document}